\newif\iflncs\lncsfalse
\newif\ifmai\maifalse
\newif\iffull\fulltrue
\documentclass[figuresright]{lmcs}

\usepackage{iftex}

  \usepackage{underscore}         
  \usepackage[T1]{fontenc}        

\usepackage{wrapfig}
\usepackage{graphicx}
\usepackage{subcaption}
\usepackage[T1]{fontenc}
\usepackage[utf8]{inputenc}
\usepackage{listings}
\usepackage{graphicx}
\usepackage{algorithm}
\usepackage{tikz}
\usetikzlibrary{decorations.pathreplacing,calc}
\newcommand{\tikzmark}[1]{\tikz[overlay,remember picture] \node (#1) {};}
\usepackage{algpseudocode}
\usepackage{listings}
\usepackage{ifthen}
\usepackage{verbatimbox}
\usepackage{svg}
\usepackage{chngcntr}
\usepackage{apptools}
\usepackage{booktabs}
\usepackage{multirow}
\usepackage{graphicx}
\usepackage{pgf}
\usepackage{tikz}
\usepackage{wrapfig}
\usepackage{subcaption}
\usepackage{url}
\usepackage{listings}
\usepackage{moresize}
\usepackage{lipsum}
\usepackage{algorithm}
\usepackage{algpseudocode}
\usepackage{tikz}

\newcommand*{\AddNote}[4]{%
	\begin{tikzpicture}[overlay, remember picture]
		\draw [decoration={brace,amplitude=0.5em},decorate,very thick]
		($(#3)!(#1.north)!($(#3)-(0,1)$)$) --
		($(#3)!(#2.south)!($(#3)-(0,1)$)$)
		node [align=center,  pos=0.5, anchor=west] {#4}; 
	\end{tikzpicture}
}%

\newcommand{\lui}[1]{}
\newcommand{\reals}{{\mathbb{R}}}

\newcommand{\T}{\ensuremath{\Theta}}

\newcommand{\supp}{\mathrm{supp}}

\newcommand{\es}{\emptyset}
\usepackage{color}
\newenvironment{proofof}[1]{ {\it\noindent Proof of }
{#1}.}{ \hfill $\Box$  \vskip .3mm}

\newcommand{\X}{\mathcal{X}}

\newcommand{\St}{\mathcal{P}}

\newcommand{\F}{\mathcal{F}}

\newcommand{\sem}[1]{[\![#1]\!]}
\newcommand{\ereals}{\overline\reals}
\newcommand{\erealspl}{\overline\reals^+}
\newcommand{\expc}{\mathrm{E}}
\newcommand{\extF}{{\F}}

\newcommand{\PP}{\mathcal{P}}
\newcommand{\K}{\kappa}

\newcommand{\nil}{\text{\rm\textsf{nil}}}
\newcommand{\ppg}{\mathbf{G}}
\newcommand{\psco}{\mbox{$\mathbf{sc}$}}
\newcommand{\iomega}{\tilde\omega}
\newcommand{\Cyl}{\mathcal{C}}
\newcommand{\cy}[1]{\mathfrak{c}(#1)}

\newcommand{\term}{\text{\rm\textsf{T}}}

\newcommand{\scoz}{\mathrm{sc}}
\newcommand{\wt}{\mathrm{w}}
\newcommand{\lift}[1]{\check{#1}}
\newcommand{\FC}{\mathrm{FK}}
\newcommand{\mufk}{\phi}
\newcommand{\TSIpf}{VPF}
\newcommand{\vsp}{\vspace*{.2cm}}
\newcommand{\mik}[1]{\marginpar{ \textbf{Mic:} {\footnotesize #1}}}

\def\th@plain{%
\thm@notefont{}
\itshape 
}
\def\th@definition{%
\thm@notefont{}
\normalfont 
}
\makeatother

\iflncs{}\else \newtheorem{definition}{Definition}\fi

\iflncs{}\else \fi

\iflncs{}\else \newtheorem{theorem}{Theorem}\fi

\iflncs{}\else  \fi

\iflncs{}\else \newtheorem{lemma}{Lemma}\fi

\iflncs{}\else \newtheorem{remark}{Remark}\fi
\iflncs{}\else \newtheorem{example}{Example}\fi

\usepackage{array}
\newcommand{\fini}{\mathrm{f}}
\newcommand{\pr}{\mathrm{pr}}
\usepackage{chngcntr}
\usepackage{apptools}
\AtAppendix{\counterwithin{lemma}{section}}
\AtAppendix{\counterwithin{theorem}{section}}
\AtAppendix{\counterwithin{corollary}{section}}
\AtAppendix{\counterwithin{proposition}{section}}
\AtAppendix{\counterwithin{theorem}{section}}
\AtAppendix{\counterwithin{definition}{section}}

\usepackage{amssymb,amsmath}

\title{Parallelizable Feynman--Kac Models for Universal Probabilistic Programming}

\author{
	Michele Boreale
	\and
	Luisa Collodi
}

\address{
	Università degli Studi di Firenze, Italy.\newline
	Dipartimento di Statistica, Informatica, Applicazioni ``G. Parenti''.\footnote{Work partially supported by the project SERICS (PE00000014), EU funded NRRP MUR program -- NextGenerationEU and by the
		Department of Excellence project 2023-2027 ReDS - Department of Statistics, Computer Science, Applications - University of Florence.}
}

\email{
	michele.boreale@unifi.it,
	luisa.collodi@unifi.it
}

\begin{document}
\maketitle

\begin{abstract}
We study   provably correct and efficient instantiations of  Sequential Monte Carlo (SMC) inference  in the context of formal operational semantics of Probabilistic Programs (PPs). We focus on \emph{universal} PPs   featuring  sampling from arbitrary measures and conditioning/reweighting in unbounded loops. We first equip Probabilistic Program Graphs (PPGs), an automata-theoretic description format of PPs, with an expectation-based semantics over infinite execution traces, which also incorporates trace weights. We then prove a finite approximation theorem that provides bounds to this semantics based on expectations taken over   finite, fixed-length traces. This enables us to frame our semantics within a Feynman-Kac (FK) model, and ensures the consistency of the Particle Filtering (PF) algorithm, an instance of SMC, with respect to our semantics. Building on these results, we introduce VPF, a vectorized version of the PF algorithm tailored to PPGs and our semantics. Experiments conducted with a proof-of-concept implementation of VPF   show  very promising results compared to state-of-the-art PP inference tools.\\
\textit{Keywords:}	Probabilistic programming, operational semantics, SIMD parallelism, SMC.
\end{abstract}
\section{Introduction}\label{sec:intro}
%
Probabilistic Programming Languages (PPLs) \cite{introGA,introKSS} offer a   systematic approach to define arbitrarily complicated probabilistic models. One is typically interested in   performing \emph{inference} on these models,    given observed data; for example,    finding the posterior distribution of the program's output conditioned on the observed data. Here,  in the context of formal operational semantics of Probabilistic Programs, we study   provably correct and parallelizable  instantiations of  Sequential Monte Carlo (SMC) inference.

In terms of  formal semantics of PPLs, the denotational approach introduced by Kozen \cite{Kozen} offers a solid mathematical foundation. However, when it comes to practical algorithms for PPL-based inference, the landscape appears somewhat fragmented. On one hand, \emph{symbolic} and \emph{static analysis} techniques, see e.g. \cite{Sankaranarayanan,psi-solver,Hakaru,BartocciEtAl,OngGuarantees,TribastonePOPL},   yield results with correctness guarantees firmly grounded in the semantics of PPLs but often struggle with scalability.
On the other hand, practical languages and inference algorithms predominantly leverage Monte Carlo (MC) \emph{sampling} techniques (MCMC, SMC), which are more scalable but often lack a clear connection to   formal semantics   \cite{webppl,Stan,Pyro}. Notable exceptions to this situation include works such as \cite{R2,WuEtAl,LundenBorgstrom,VMCAI24,lew}, which are discussed in the related work section.

Establishing the consistency of an inference algorithm with respect to a PPL's formal semantics is not merely a theoretical pursuit. In the context of \emph{universal}  PPLs \cite{GoodmanEtal}, integration of unbounded loops and conditioning with MC sampling, which requires truncating computations at a finite time, presents significant challenges \cite{OngGuarantees}. Additionally, the interplay between continuous and discrete distributions in these PPLs can lead to complications, potentially causing existing sampling-based algorithms to yield incorrect results \cite{WuEtAl}.
In the present work, we establish a precise connection between \emph{Probabilistic Program Graphs (PPGs)}, a   general  automata-theoretic    description format of PPs, and \emph{Feynman-Kac (FK)} models, a  formalism for   state-based probabilistic processes and observations defined over a finite time horizon \cite[Ch.5]{SMC}.
This connection enables us to prove the consistency  for PPGs of the \emph{Particle Filtering (PF)} inference algorithm, one of the incarnations of Sequential Monte Carlo approach \cite[Ch.10]{SMC}. In establishing this connection, we adopt a decisively
operational perspective, as explained below.

In a PPG (Section \ref{sec:PP}), computation (essentially, {sampling}) progresses in successive stages specified by the direct edges of a graph (transitions), with nodes serving as {\emph{checkpoints}} between stages for \emph{conditioning} on observed data or more generally updating computation weights.
%
%
%
The operational semantics of PPGs is formalized in terms of   Markov kernels and score functions.
Building on this, we introduce a measure-theoretic, infinite-trace semantics (Section \ref{sec:obs}, with the necessary measure theory reviewed in Section \ref{sec:measure}). A finite approximation theorem (Section \ref{sec:FA}) then allows us to relate this trace semantics precisely to a finite-time horizon {FK model} (Section \ref{sec:MC}). PF is known to be \emph{consistent} for FK models asymptotically: as the number $N$ of simulated instances (\emph{particles}) tends to infinity, the distribution of these particles converges to the measure defined by the FK model \cite[Ch.11]{SMC}. Therefore,   consistency of PF for PPGs will automatically follow.


Our approach    yields    additional   insights. First,    the finite approximation theorem holds for a class of \emph{prefix-closed functions} defined on infinite traces: these are the functions where the output only depends on a finite initial segment of the input argument.  The finite approximation theorem implies that  the expectation of a  prefix-closed function, defined on the probability space of infinite traces,  can be approximated  by  the expectation  of   functions defined over truncated traces, with respect to a measure  defined  on a suitable    FK model. As expectation in a FK model  can be effectively estimated, via   PF or other algorithms, our finite approximation result lays a sound basis for the statistical model checking of PPs. Second,   the automata-theoretic operational semantics  of PPGs    translates into   a \emph{vectorized}  implementation of PF,  leveraging the fine-grained, SIMD parallelism  existing at the level of particles.
Specifically, the PPG's transition function and  the score functions are applied simultaneously to the entire vector of $N$ simulated particles at each step.
This is practically significant, as modern CPUs and programming languages offer extensive support for vectorization, that may lead to dramatic speedups.
We demonstrate this aspect with a prototype vectorized implementation of a PPG-based PF algorithm using TensorFlow \cite{TF}, called VPF. Experiments comparing VPF with state-of-the-art PPLs  on challenging examples from the literature show very promising results (Section \ref{sec:experiments}).
Concluding remarks are provided in the final section (Section \ref{sec:concl}). \iffull Proofs and additional technical material have been confined to separate appendices (\ref{app:proofs}, \ref{app:PF}, \ref{app:air}, \ref{app:models}).\else Omitted proofs and additional technical material have been reported in an extended version available online \cite{BC25}.\fi

\paragraph{Main contributions}
In summary, our main contributions are as follows.
\begin{enumerate}
\item A clean operational semantics for PPGs, based on expectation taken over infinite-trace,  incorporating conditioning/reweighting.
\item A finite approximation theorem linking this semantics to finite traces and FK models, thereby establishing the consistency of PF for PPGs.
\item A vectorized version of the PF algorithm based on PPGs, and experimental evidence of its practical scalability and competitiveness.
\end{enumerate}


\paragraph{Related work}

With few notable exceptions, most work on the semantics of PPL follows  the denotational approach initiated by Kozen \cite{Kozen}; see \cite{introKSS,GB,borgstrom,introGA,gorinova1,gorinova2,scibior,Staton2,Staton3,joseph}
and references therein for representative works in this area.  In this context, a general goal orthogonal to ours, is devising methods to combine and reason on densities. Note that we
do not require that a PP induces a density on the probability space of infinite traces.


Relevant to our approach is a series of works by Lunden  et al.  on SMC inference applied to PPLs. In \cite{LundenBorgstrom}, for a lambda-calculus enriched with an explicit \textsf{resample} primitive,  consistency of PF is shown to hold, under certain restrictions,   independently of the placements of the \textsf{resample}s in the code. Operationally, their functional approach is very different from our  automata-theoretic one. In particular, they handle suspension and resumption of particles  in correspondence of resampling   via an implicit use of \emph{continuations}, in the style of webPPL \cite{webppl} and other   PPLs. The combination of functional style and continuations does not naturally lend itself to vectorization. For instance, ensuring that all particles are \emph{aligned}, that is are at a  {\textsf{resample}} point of their execution, is an issue that can impact negatively on performance or accuracy.
On the contrary, in  our automata-theoretic model, placement of \textsf{resample}s    and alignment are not issues: resampling always  happens after each  (vectorized) transition step, so all particles are automatically aligned. Note that in PPGs a transition can group together complicated, conditioning free      computations; in any case, consistency of PF is guaranteed. In a subsequent   work \cite{CorePPL,LundenAlign}, Lunden et al. study   concrete implementation issues of SMC. In  \cite{CorePPL}, they  consider \emph{PPL Control-Flow Graphs} (PCFGs), a structure intended as a target for the compilation of high-level PPLs, such as their CorePPL. The PCFG model is very similar in spirit to   PPGs, however, it lacks a formal semantics. Lunden et al. also offer an implementation of this framework,   designed to take advantage of the  potential parallelism existing at the level of particles. We compare our implementation with theirs in Section \ref{sec:experiments}.


\ifmai
In \cite{hirata}, Hirata et al. propose an Isabelle/HOL library for
probabilistic programs supporting higher-order functions, sampling, and conditioning. However, also in this case the focus is on verification and execution times are not taken into
account,

\mik{Additional}
In \cite{gorinova1}, the authors
provide a syntax and a semantics for Core Stan, a core subset of the Stan language \cite{Stan}, whose semantics has been mainly given in terms of intuition and has not been formalized.
However, this subset omits constraint data types, while loops, random number generators,
recursive user defined functions, and local variables. Morever, they introduce the probabilistic programming language SlicStan.  SlicStan is a more compositional, self-optimizing version
of Stan, but also in this case it does not support while loops and recursive functions, making a small restriction compared to Stan.
SlicStan has been subsequently extended in \cite{gorinova2} to support discrete parameters in the case when the discrete parameter space is bounded.
\fi

Aditya et al. prove consistency of Markov Chain Monte Carlo (MCMC) for their PPL R2
\cite{R2}, which is based on a big-step    sampling
semantics that considers finite execution paths. No
approximation results bridging  finite and infinite traces, and hence unbounded loops, is
provided. It is also  unclear if  a big-step semantics would effectively
translate  into a SIMD-parallel algorithm. Wu et al. \cite{WuEtAl} provide the PPL Blog with a rigorous measure-theoretic semantics, formulated   in terms of Bayesian Networks, and a very efficient implementation of the PF algorithm   tailored to such networks. Again, they do not offer results for unbounded loops. In our previous work \cite{VMCAI24},
we have considered  a measure theoretic semantics for a PPL with unbounded loops, and provided a finite approximation result   and a SIMD-parallel implementation, with guarantees,  of what is in effect a \emph{rejection sampling} algorithm. Rejection may be effective for limited forms of conditioning; but it rapidly becomes wasteful and ineffective as   conditioning becomes more demanding, so to speak: e.g. when it is repeated in a loop, or the observed data have a low likelihood in the model. Finally, SMCP3 \cite{lew} provides a  rich   measure-theoretic framework for extending  the practical Gen language \cite{cusumano} with expressive proposal distributions.

A rich area in the field of PPL focuses on symbolic, exact techniques \cite{Sankaranarayanan,psi-solver,Hakaru,BartocciEtAl,OngGuarantees,TribastonePOPL,hirata} aiming to obtain  termination certificates, or certified bounds on termination probability of PPs, or even exact representations of the posterior distribution; see also  \cite{1,2,3,4,5,Wang24} for some recent works in this direction. Our goal and methodology, as already stressed, are rather different, as we focus on scalable inference
via sampling
and the ensuing consistency issues. This is part of a broader research agenda, aimed at developing flexible and scalable formal methods applicable  across   diverse, probability-related domains,  including: dynamical  systems with safety-related aspects \cite{HSCC18,SOFSEM18,LMCS19,MFCS19,IC22,VMCAI23}, information leakage and security \cite{ISC14,QIFForte14}, distributed  systems with notions of failure and recovery \cite{CaspisMSCS},   randomized model counting and  testing \cite{BG19,OutgenIEEE}.

The present work is an extended and revised version of  the conference paper \cite{BCGandalf25}. The new material here includes: additional examples (Examples \ref{ex:RW2} and \ref{ex:lift}),  a revised and expanded experimental validation (Section \ref{sec:experiments}, Appendices \ref{app:air}, \ref{app:models}),     detailed proofs (Appendix \ref{app:proofs}), and a more complete description of the standard Particle Filtering algorithm (Appendix \ref{app:PF}).



\section{Preliminaries on measure theory}\label{sec:measure}
We review a few basic concepts from measure theory following closely the presentation in the first two chapters of  \cite{Ash}, which is a reference for whatever is not explicitly described below.
Given a nonempty set $\Omega$, a \emph{sigma-field} $\F$ on $\Omega$ is a collection of subsets of $\Omega$ that contains $\Omega$, and is closed under complement and under countable disjoint union. The pair $(\Omega,\F)$ is called a \emph{measurable space}. A (total) function $f:\Omega_1\rightarrow \Omega_2$ is \emph{measurable} w.r.t. the sigma-fields $(\Omega_1,\F_1)$ and $(\Omega_2,\F_2)$ if whenever $A\in \F_2$ then $f^{-1}(A)\in \F_1$.
We let $\overline\reals=\reals\cup\{-\infty,+\infty\}$ be the set of extended reals, assuming the standard arithmetic for $\pm\infty$ (cf. \cite[Sect.1.5.2]{Ash}), and $\ereals^+$ the set of nonnegative reals including $+\infty$. The \emph{Borel sigma-field} $\F$ on $\Omega=\ereals^m$ is the minimal sigma-field that contains all {rectangles} of the form $[a_1,b_1]\times\cdots \times [a_n,b_n]$, with $a_i,b_i\in \ereals$.  An important case of measurable spaces $(\Omega,\F)$ is when $\Omega=\ereals^m$  for some $m\geq 1$  and  $\F$ is the Borel sigma-field over $\Omega$. Throughout the paper, {\emph{“measurable” means “Borel measurable”}}, both for sets and for functions. On functions, Borel measurability is preserved by composition and other elementary operations on functions; continuous real functions are Borel measurable. We will let $\F_k$ denote   the Borel sigma-field over $\overline\reals^{k}$ ($k\geq 1$) when we want to be specific about the dimension of the space.

A \emph{measure} over a measurable space $(\Omega,\F)$ is a function $\mu:\F\rightarrow \ereals^+$ that is countably additive, that is $\mu(\cup_{j\geq 1}A_j)=\sum_{j\geq 1}\mu(A_j)$ whenever $A_j$'s are pairwise disjoint sets in $\F$.
%
The \emph{Lebesgue integral} of a Borel measurable function $f$  w.r.t. a measure $\mu$ \cite[Ch.1.5]{Ash}, both  defined over a measure space  $(\Omega,\F)$, is denoted by $\int_\Omega \mu(d\omega)f(\omega)$,
with the subscript $\Omega$ omitted when clear from the context. When $\mu$ is the standard Lebesgue measure, we  may omit $\mu$ and write the integral   as   $\int_\Omega d\omega f(\omega)$.  For $A\in \F$, $\int_A \mu(d\omega)f(\omega)$ denotes $\int_\Omega \mu(d\omega)f(\omega)1_A(\omega)$, where $1_A(\cdot)$ is the indicator function of the set $A$. We let $\delta_v$ denote  Dirac's measure concentrated on $v$: for each  set $A$ in an appropriate sigma-field,  $\delta_v(A)=1$ if $v\in A$,   $\delta_v(A)=0$ otherwise. Otherwise said, $\delta_v(A)=1_A(v)$. Another measure that arises (in connection with discrete distributions) is the counting measure,  $\mu_C(A):=|A|$. In particular, for a nonnegative $f$, we have the equality $\int_A \mu_C(d\omega)f(\omega)=\sum_{\omega\in A}f(\omega)$.
%
%
%
A \emph{probability measure} is a   measure $\mu$ defined on $\F$ such that $\int \mu(du)=1$.
For a given nonnegative measurable function $f$ defined over $\Omega$, its \emph{expectation} w.r.t. a probability measure $\nu$  is just its integral: $\expc_\nu[f]=\int\nu(d \omega)f(\omega)$.
The following definition is central.
\begin{definition}[Markov kernel]\label{def:MK} Let $(\Omega_1,\F_1)$ and $(\Omega_2,\F_2)$ be   measurable spaces. A function $K: \Omega_1\times \F_2 \longrightarrow \erealspl$ is a \emph{Markov kernel}  from $\Omega_1$ to $\Omega_2$  if  it satisfies the following properties:
	\begin{enumerate}
		\item for each $\omega\in \Omega_1$, the function $K(\omega,\cdot):\F_2\rightarrow \erealspl$ is a probability measure on $(\Omega_2,\F_2)$;
		\item for each $A\in \F_2$, the function $K(\cdot,A): \Omega_1 \rightarrow \erealspl$ is   measurable.
	\end{enumerate}
\end{definition}

Notationally, we will most often write $K(\omega,A)$ as $K(\omega)(A)$.
The following is a standard result about  the  construction  of finite product of measures  over a  product space\footnote{We shall freely identify language-theoretic \emph{words} with \emph{tuples}.}
  $\Omega^t=\Omega\times \cdots \times \Omega$ ($t$ times) for $t\geq 1$ an integer\iffull (see Theorem \ref{th:prode} in the Appendix for a more detailed statement). \else. \fi It is customary to denote the measure $\mu^t$ defined by  the theorem also as $\mu^1\otimes K_2\otimes\cdots \otimes K_t$.

\begin{theorem}[product of measures, \cite{Ash},Th.2.6.7]\label{th:prod}
	Let  $t\geq 1$ be an integer. Let $\mu^1$ be a probability measure on $\Omega$ and $K_2,...,K_t$ be $t-1$   (not necessarily distinct) Markov kernels from $\Omega$ to $\Omega$. Then there is a unique probability measure $\mu^t$ defined on $(\Omega^t,\F^t)$ such that for every $A_1\times \cdots\times A_t\in \F^t$ we have:
	$\mu^t(A_1\times \cdots\times A_t) = \int_{A_1} \mu^1(d\omega_1)\int_{A_2}K_2(\omega_1)(d\omega_2)\cdots \int_{A_t}K_t(\omega_{t-1})(d\omega_t)
	$.
\end{theorem}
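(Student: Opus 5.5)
We argue by induction on $t$: define $\mu^t$ by an iterated integral, check that it is a probability measure with the required value on rectangles, and obtain uniqueness from the $\pi$-$\lambda$ theorem.

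\emph{Existence.} For $t=1$ we take $\mu^1$ itself. Suppose $\mu^{t-1}$ has been built on $(\Omega^{t-1},\F^{t-1})$ with the stated rectangle formula. For $C\in\F^t$ we set
\[
\mu^t(C)=\int_{\Omega^{t-1}}\mu^{t-1}(d(\omega_1,\dots,\omega_{t-1}))\int_{\Omega} K_t(\omega_{t-1})(d\omega_t)\,1_C(\omega_1,\dots,\omega_t).
\]
To make this well defined we need measurability of the map $x\mapsto \int K_t(\omega_{t-1})(d\omega_t)1_C(x,\omega_t)$ on $\Omega^{t-1}$, for every $C\in\F^t$. For measurable rectangles $C=B\times A$ this map equals $1_B(x)K_t(\omega_{t-1})(A)$. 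It is measurable because $K_t(\cdot)(A)$ is measurable (Definition~\ref{def:MK}, item 2) and the projection onto $\omega_{t-1}$ is measurable.

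The class $\mathcal{D}$ of sets $C$ for which the map is measurable contains $\Omega^t$. It is closed under proper differences, since $1_{C\setminus D}=1_C-1_D$ when $D\subseteq C$ and each $K_t(\omega)$ is a finite measure. It is closed under increasing unions by monotone convergence. So $\mathcal{D}$ is a Dynkin ($\lambda$-) system. Rectangles form a $\pi$-system generating $\F^t$, so Dynkin's $\pi$-$\lambda$ theorem gives $\mathcal{D}=\F^t$. We expect this measurability step to be the main obstacle; the rest is routine.

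With measurability settled, $\mu^t$ is a probability measure. Countable additivity follows from applying monotone convergence twice to $1_{\cup C_j}=\sum_j 1_{C_j}$. Total mass $1$ follows because each $K_t(\omega)$ and $\mu^{t-1}$ are probability measures.

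On a rectangle $A_1\times\cdots\times A_t$ the inner integral equals $1_{A_1\times\cdots\times A_{t-1}}(x)\,K_t(\omega_{t-1})(A_t)$. Expanding the outer integral with the induction hypothesis, extended from indicators of rectangles to nonnegative functions of the form $1_{A_1\times\cdots\times A_{t-1}}\cdot g(\omega_{t-1})$, yields exactly the nested integral in the statement. That extension is itself a standard simple-function / monotone-class argument: the identity holds for indicators, then for simple functions by linearity, then for nonnegative measurable $g$ by monotone convergence.

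\emph{Uniqueness.} Two probability measures that agree on the $\pi$-system of measurable rectangles, which generates $\F^t$, agree on the generated sigma-field by the $\pi$-$\lambda$ theorem. This is the uniqueness theorem for measures.
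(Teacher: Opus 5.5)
Your proof is correct and is essentially the standard argument behind this result, which the paper does not prove but cites from \cite[Th.~2.6.7]{Ash}: the two-factor kernel construction is iterated by induction on $t$, with a monotone-class/Dynkin step for measurability, monotone convergence for countable additivity, and uniqueness from the $\pi$-system of measurable rectangles. One detail to make explicit is that the inner integral $\int K_t(\omega_{t-1})(d\omega_t)\,1_C(x,\omega_t)$ is only defined once each section $C_x=\{\omega_t : (x,\omega_t)\in C\}$ is known to lie in $\F$; build this requirement into your class $\mathcal{D}$ too, since it holds for rectangles and is preserved by proper differences and increasing unions, so the same $\pi$-$\lambda$ argument covers it.
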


\section{Probabilistic programs}\label{sec:PP}
We first  introduce a general formalism for specifying programs, in the form of certain graphs  that can be regarded as symbolic finite automata. For this formalism, we introduce then an operational semantics  in terms of  Markov kernels. 
\paragraph{Probabilistic Program Graphs}\label{sec:PPG}
In defining probabilistic programs, we will   rely on  a repertoire of basic distributions: 
%
continuous, discrete and mixed distributions will be allowed. A crucial point for   expressiveness  is that a measure may depend on  \emph{parameters}, whose   value at runtime is determined by the state of the program. To ensure that the resulting programs  define measurable functions (on a suitable space), it is important that the dependence between the parameters and the measure be in turn of measurable type.  We will formalize this in terms of Markov kernels. Additionally, we will   consider score functions, a generalization of 0/1-valued predicates.
%
Formally, we will consider the     two families of functions defined below. In the definitions, we will let  $m\geq 1$ denote a fixed integer, representing the number of \emph{variables} in the program,   conventionally referred  to as $x_1,...,x_m$. We will let $v$ range over $\ereals^m$,  the content of the program variables in a given state, or \emph{store}.
\begin{itemize}
	\item \emph{Parametric measures}:   Markov kernels       $\zeta:\ereals^m \times \F_m \rightarrow [0,1] $.
	\item \emph{Score functions}:    measurable functions  $\gamma:\ereals^m\rightarrow [0,1]$.
	A \emph{predicate} is a special case of a score function $\varphi:\ereals^m\rightarrow \{0,1\}$. An Iverson bracket style notation will be often employed, e.g.: $[x_1\geq 1]$ is the   predicate  that on input $v$ yields 1 if $v_1\geq 1$, 0 otherwise.  
\end{itemize}
For a parametric measure $\zeta$ and a store $v\in \ereals^m$,  $\zeta(v)$  is a distribution, that can be used to  sample a new store  $v'\in \ereals^m$   depending on the current program store $v$. Analytically, $\zeta$   may be expressed by,  for instance,   chaining together sampling of individual components of the store. This can be done by relying on \emph{parametric densities}:  measurable functions $\rho:\ereals^m\times \ereals \rightarrow \ereals^+$ such that, for a designated measure $\mu_\rho$, the function
$(v,A)\mapsto \int_A\mu_\rho(dr)\,\rho(v,r)$ ($A\in \F_m$) is a Markov kernel from $\ereals^m$ to $\ereals$. This is explained via the following example.

\begin{example}{\label{ex:parametricD}
		Fix $m=2$. Consider the Markov kernel defined as follows, for each $x_1,x_2\in \ereals$ and $A\in \F_2$
		\vspace{-0.2cm}
		\begin{equation}\label{eq:zeta}
			\begin{array}{c}
				\zeta(x_1,x_2)(A):=\int \mu_1(d r_1)\, \left(\rho_1(x_1,x_2,r_1)\cdot \int\mu_2(d r_2)\rho_2(r_1,x_2,r_2)1_A(r_1,r_2)\right)
			\end{array}
		\end{equation}
		\noindent
		where:  
\begin{itemize}
\item $\mu_1=\mu_C$ is the counting measure;  
\item $\rho_1(x_1,x_2,r)=\frac 1 2 1_{\{x_1\}}(r)+\frac 1 2 1_{\{x_2\}}(r)$ is the density of a discrete distribution on $\{x_1,x_2\}$;  
\item $\mu_2=\mu_L$ is the ordinary Lebesgue measure; 
\item $\rho_2(x_1,x_2,r)=N(x_1, x_2 ,r):=\frac 1 { |x_2|\sqrt{2\pi} }\exp(-\frac 1 2 (\frac{ r-x_1}{|x_2|} )^2 )$ is the  density of the Normal distribution of mean $x_1$ and standard deviation\footnote{With the proviso that,  when $x_2=0$ or $|x_1|,|x_2|=+\infty$, $N(x_1,x_2,r)$ denotes an arbitrarily fixed, default probability density.} $|x_2|$.
\end{itemize}
The function $\zeta$ is a parametric measure: concretely, it corresponds to  first sampling uniformly $r_1$ from the set $\{x_1,x_2\}$,   then sampling $r_2$ from the Normal distribution of mean $r_1$ and s.d. $|x_2|$ (if $|x_2|$ is positive and finite, otherwise from a default distribution). Rather than via \eqref{eq:zeta}, we will describe $\zeta$   via the following more handy notation: 
		$$
		\begin{array}{l}
			r_1\sim \rho_1(x_1,x_2); \\
           \,r_2\sim \rho_2(r_1,x_2);\\
			\zeta(x_1,x_2)(A):=\int \mu_1(d r_1)\, \left(\rho_1(x_1,x_2,r_1)\cdot \int\mu_2(d r_2)\rho_2(r_1,x_2,r_2)1_A(r_1,r_2)\right)
		\end{array}
		$$
(or listed left-right). 
Note that the sampling order from top to bottom is relevant here.
}\end{example}

In fact, as far as the formal framework of PPGs introduced below is concerned, how the parametric measures $\zeta$'s are analytically described is irrelevant. From the practical point of view, it is important   we know how to (efficiently) sample from the measure $\zeta(v)$,  for any $v$, in order for  the inference algorithms to be actually implemented (see Section \ref{sec:MC}). In concrete terms,    $\zeta(v)$ might represent  the (possibly unknown) distribution of the outputs in $\ereals^m$ returned by a piece of code, when invoked with input $v$.
Another important special case  of parametric measure is the following. For any   $v=(v_1,...,v_m)\in \overline\reals^m$,  $r\in \ereals$ and $1\leq i \leq m$,   let $v[r@ i]:=(v_1,...,r,...,v_m)$ denote  the tuple where $v_i$ has been replaced by $r$.  Consider the parametric measure  $\zeta(v) = \delta_{v[g(v)@i]}$, where $g:\ereals^m\rightarrow \ereals$ is a measurable function. In programming terms, this corresponds to the deterministic  \emph{assignment} of the value $g(v)$ to the variable $x_i$. We will describe this $\zeta$ as: $x_i:=g(x_1,...,x_m)$.

In the definition of PPG below, one may think of   the computation (sampling) taking place in successive stages    on the  edges (transitions) of the graph, with   nodes serving as \emph{checkpoints} (a term we have borrowed from \cite{LundenBorgstrom})  between  stages for conditioning on observed data --- or, more generally, re-weighting the score assigned to a computation. The edges also account for the control flow among the different stages  via predicates computed on the store of the source nodes.

\lstset{
basicstyle=\ttfamily,
mathescape
}

\captionsetup{width=\textwidth}

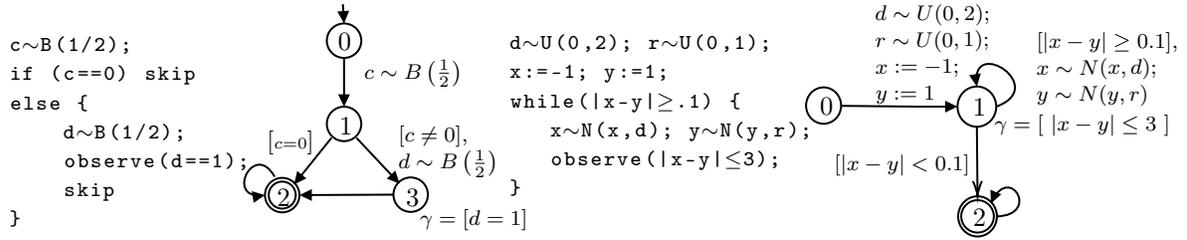
\begin{figure}[t!]
\hspace{-1.9cm}
\begin{minipage}{.27\linewidth}
\vspace{-0.25cm}
{\scriptsize
	\begin{lstlisting}
		
c$\sim$B(1/2);
if (c==0) skip
else {
    d$\sim$B(1/2);
    observe(d==1);
    skip
}
	\end{lstlisting}
}
\vspace{-0.3cm}
\end{minipage}
\hspace{-1.5cm}
\begin{minipage}{.22\linewidth}
{\small
	\begin{center}
		\tikzset{every picture/.style={line width=0.75pt}} 
		\begin{tikzpicture}[x=0.75pt,y=0.75pt,yscale=-.7,xscale=.7]
			%
			\draw   (185.13,65.54) .. controls (185.13,72.4) and (190.69,77.96) .. (197.54,77.96) .. controls (204.4,77.96) and (209.96,72.4) .. (209.96,65.54) .. controls (209.96,58.68) and (204.4,53.12) .. (197.54,53.12) .. controls (190.69,53.12) and (185.13,58.68) .. (185.13,65.54) -- cycle ;
			\draw   (185.13,125.54) .. controls (185.13,132.4) and (190.69,137.96) .. (197.54,137.96) .. controls (204.4,137.96) and (209.96,132.4) .. (209.96,125.54) .. controls (209.96,118.68) and (204.4,113.12) .. (197.54,113.12) .. controls (190.69,113.12) and (185.13,118.68) .. (185.13,125.54) -- cycle ;
			\draw   (233.13,176.54) .. controls (233.13,183.4) and (238.69,188.96) .. (245.54,188.96) .. controls (252.4,188.96) and (257.96,183.4) .. (257.96,176.54) .. controls (257.96,169.68) and (252.4,164.12) .. (245.54,164.12) .. controls (238.69,164.12) and (233.13,169.68) .. (233.13,176.54) -- cycle ;
			\draw   (140.56,176.85) .. controls (140.56,183.71) and (146.12,189.27) .. (152.98,189.27) .. controls (159.84,189.27) and (165.4,183.71) .. (165.4,176.85) .. controls (165.4,170) and (159.84,164.44) .. (152.98,164.44) .. controls (146.12,164.44) and (140.56,170) .. (140.56,176.85) -- cycle ;
			\draw   (143.25,176.85) .. controls (143.25,182.23) and (147.61,186.58) .. (152.98,186.58) .. controls (158.35,186.58) and (162.71,182.23) .. (162.71,176.85) .. controls (162.71,171.48) and (158.35,167.12) .. (152.98,167.12) .. controls (147.61,167.12) and (143.25,171.48) .. (143.25,176.85) -- cycle ;
			\draw    (196.69,39.23) -- (197.36,50.13) ;
			\draw [shift={(197.54,53.12)}, rotate = 266.49] [fill={rgb, 255:red, 0; green, 0; blue, 0 }  ][line width=0.08]  [draw opacity=0] (8.93,-4.29) -- (0,0) -- (8.93,4.29) -- cycle    ;
			\draw    (197.54,77.96) -- (197.54,110.12) ;
			\draw [shift={(197.54,113.12)}, rotate = 270] [fill={rgb, 255:red, 0; green, 0; blue, 0 }  ][line width=0.08]  [draw opacity=0] (8.93,-4.29) -- (0,0) -- (8.93,4.29) -- cycle    ;
			\draw    (206.5,135.6) -- (235.45,166.41) ;
			\draw [shift={(237.5,168.6)}, rotate = 226.79] [fill={rgb, 255:red, 0; green, 0; blue, 0 }  ][line width=0.08]  [draw opacity=0] (8.93,-4.29) -- (0,0) -- (8.93,4.29) -- cycle    ;
			\draw    (188.5,135.6) -- (163.42,166.79) ;
			\draw [shift={(161.54,169.12)}, rotate = 308.8] [fill={rgb, 255:red, 0; green, 0; blue, 0 }  ][line width=0.08]  [draw opacity=0] (8.93,-4.29) -- (0,0) -- (8.93,4.29) -- cycle    ;
			\draw    (233.13,176.54) -- (168.4,176.84) ;
			\draw [shift={(165.4,176.85)}, rotate = 359.74] [fill={rgb, 255:red, 0; green, 0; blue, 0 }  ][line width=0.08]  [draw opacity=0] (8.93,-4.29) -- (0,0) -- (8.93,4.29) -- cycle    ;
			\draw    (147.5,163.6) .. controls (133.02,137.54) and (112.03,178.54) .. (137.59,177.17) ;
			\draw [shift={(140.56,176.85)}, rotate = 171.03] [fill={rgb, 255:red, 0; green, 0; blue, 0 }  ][line width=0.08]  [draw opacity=0] (8.93,-4.29) -- (0,0) -- (8.93,4.29) -- cycle    ;
			
			\draw (191,58) node [anchor=north west][inner sep=0.75pt]    {$0$};
			\draw (192,119) node [anchor=north west][inner sep=0.75pt]    {$1$};
			\draw (240,170) node [anchor=north west][inner sep=0.75pt]    {$3$};
			\draw (147,170) node [anchor=north west][inner sep=0.75pt]{$2$};
			\draw (210,77) node [anchor=north west][inner sep=0.75pt]  [font=\scriptsize]  {${\textstyle c\sim B\left(\frac{1}{2}\right)}$};
			\draw (225,123) node [anchor=north west][inner sep=0.75pt]  [font=\scriptsize]  {$ \begin{array}{l}
					[{\textstyle c\neq 0}],\\
					{\textstyle d\sim B\left(\frac{1}{2}\right)}
				\end{array}$};
			\draw (140,128) node [anchor=north west][inner sep=0.75pt]  [font=\footnotesize]  {$[{\scriptstyle c=0}]$};
			\draw (250,185) node [anchor=north west][inner sep=0.75pt]  [font=\scriptsize]  {$\gamma =[d=1]$};

		\end{tikzpicture}	
\end{center}}
\end{minipage}\hspace*{.3cm}
\begin{minipage}{.25\linewidth}

{\scriptsize
	\begin{lstlisting}
d$\sim$U(0,2); r$\sim$U(0,1);
x:=-1; y:=1;
while(|x-y|$\geq$.1) {
   x$\sim$N(x,d); y$\sim$N(y,r);
   observe(|x-y|$\leq$3);
}
	\end{lstlisting}
}
\end{minipage}
\begin{minipage}{.2\linewidth}
{\small
	\begin{center}
		\tikzset{every picture/.style={line width=0.75pt}} 
		\begin{tikzpicture}[x=0.75pt,y=0.75pt,yscale=-.7,xscale=.7]
			\draw   (221.39,117.28) .. controls (221.49,110.04) and (227.23,104.26) .. (234.2,104.37) .. controls (241.17,104.48) and (246.73,110.44) .. (246.62,117.69) .. controls (246.52,124.93) and (240.78,130.71) .. (233.81,130.6) .. controls (226.84,130.49) and (221.28,124.53) .. (221.39,117.28) -- cycle ;
			\draw    (246.62,117.69) -- (301.63,118.26) -- (327.5,118.26) ;
			\draw [shift={(330.5,118.26)}, rotate = 180] [fill={rgb, 255:red, 0; green, 0; blue, 0 }  ][line width=0.08]  [draw opacity=0] (8.93,-4.29) -- (0,0) -- (8.93,4.29) -- cycle    ;
			\draw   (330.5,118.26) .. controls (330.5,110.25) and (336.74,103.76) .. (344.45,103.76) .. controls (352.15,103.76) and (358.4,110.25) .. (358.4,118.26) .. controls (358.4,126.26) and (352.15,132.76) .. (344.45,132.76) .. controls (336.74,132.76) and (330.5,126.26) .. (330.5,118.26) -- cycle ;
			\draw    (358.4,118.26) .. controls (396.51,104.16) and (341.53,67.77) .. (344.13,101.06) ;
			\draw [shift={(344.45,103.76)}, rotate = 261.3] [fill={rgb, 255:red, 0; green, 0; blue, 0 }  ][line width=0.08]  [draw opacity=0] (8.93,-4.29) -- (0,0) -- (8.93,4.29) -- cycle    ;
			\draw    (344.45,132.76) -- (344.91,180.87) ;
			\draw [shift={(344.93,182.87)}, rotate = 269.45] [color={rgb, 255:red, 0; green, 0; blue, 0 }  ][line width=0.75]    (10.93,-3.29) .. controls (6.95,-1.4) and (3.31,-0.3) .. (0,0) .. controls (3.31,0.3) and (6.95,1.4) .. (10.93,3.29)   ;
			\draw   (330.98,197.37) .. controls (330.98,189.36) and (337.22,182.87) .. (344.93,182.87) .. controls (352.63,182.87) and (358.88,189.36) .. (358.88,197.37) .. controls (358.88,205.37) and (352.63,211.87) .. (344.93,211.87) .. controls (337.22,211.87) and (330.98,205.37) .. (330.98,197.37) -- cycle ;
			\draw   (333.18,197.37) .. controls (333.18,190.92) and (338.44,185.7) .. (344.93,185.7) .. controls (351.42,185.7) and (356.68,190.92) .. (356.68,197.37) .. controls (356.68,203.81) and (351.42,209.04) .. (344.93,209.04) .. controls (338.44,209.04) and (333.18,203.81) .. (333.18,197.37) -- cycle ;
			\draw    (358.88,197.37) .. controls (390.13,196.49) and (368.8,160.14) .. (355.82,185.51) ;
			\draw [shift={(354.64,188.04)}, rotate = 292.96] [fill={rgb, 255:red, 0; green, 0; blue, 0 }  ][line width=0.08]  [draw opacity=0] (8.93,-4.29) -- (0,0) -- (8.93,4.29) -- cycle    ;
			
			\draw (259.34,40.82) node [anchor=north west][inner sep=0.75pt]  [font=\scriptsize,rotate=-359.58,xslant=0]  {$ \begin{array}{l}
					d\sim U( 0,2) ;\\
					r\sim U( 0,1) ;\\
					x:=-1;\\
					y:=1
				\end{array}$};
			\draw (375.69,60.58) node [anchor=north west][inner sep=0.75pt]  [font=\scriptsize,rotate=-359.84,xslant=0]  {$ \begin{array}{l}
					[|x-y|\geq 0.1] ,\\
					x\sim N( x,d) ;\\
					y\sim N( y,r)
				\end{array}$};
			\draw (239.62,150.52) node [anchor=north west][inner sep=0.75pt]  [font=\scriptsize,rotate=-359.84,xslant=0]  {$[|x-y|< 0.1]$};
			\draw (229.23,109.42) node [anchor=north west][inner sep=0.75pt]    {$0$};
			\draw (339.55,110.29) node [anchor=north west][inner sep=0.75pt]    {$1$};
			\draw (339.55,189.62) node [anchor=north west][inner sep=0.75pt]    {$2$};
			\draw (374.76,174.6) node [anchor=north west][inner sep=0.75pt]  [font=\footnotesize]  {};
			\draw (354.99,122.04) node [anchor=north west][inner sep=0.75pt]  [font=\scriptsize]  {$\gamma =[ \ |x-y|\leq 3\ ]$};
		\end{tikzpicture}		
\end{center}}
\end{minipage}
\caption{\scriptsize \textbf{Left}. The PPG of Example \ref{ex:simple0} and a corresponding pseudo-code. The    $\nil$ node (2) is distinguished with a double border. Constant 1 predicates and score functions, and the identity function are not displayed in transitions. The score function $\gamma$ decorates node 3, that is $\psco(3)=\gamma$. \textbf{Right}. The PPG for the drunk man and   mouse random walk of Example \ref{ex:dm1} and a corresponding pseudo-code.   The score function $\gamma$ decorates node 1, that is $\psco(1)=\gamma$.}\label{fig:drunk}
\end{figure}


\begin{definition}[PPG]\label{def:PPG}
Fix $m\geq 1$. A \emph{Probabilistic Program Graph (PPG)} on $\ereals^m$ is a 4-tuple $\ppg=(\St,E,\nil,\psco)$ satisfying the following.
\begin{itemize}
\item $\St=\{S_1,...,S_k\}$ is a finite, nonempty set of \emph{program checkpoints} (\emph{programs}, for short).
\item $E$ is a finite, nonempty set of \emph{transitions} of the form $(S,\varphi,\zeta,S')$, where: $S,S'\in \St$ are called the \emph{source} and \emph{target} program checkpoint, respectively; $\varphi:\ereals^m\rightarrow\{0,1\}$ is a predicate;  and  $\zeta:\ereals^m \times \F_m \rightarrow [0,1] $ is a parametric measure.
\item $\nil\in \St$ is a distinguished \emph{terminated} program checkpoint, such that $(\nil,1,\mathrm{id},\nil)$ ($\mathrm{id}$ = identity) is the only transition in $E$ with $\nil$ as source.
\item $\psco$ is a mapping from $\St$ to the set of score functions, s.t. $\psco(\nil)$ is the constant  1.
\end{itemize}
Additionally,   denoting by $E_S$ the set of transitions in $E$ with $S$ as a source checkpoint, the following \emph{consistency} condition is   assumed: for each $S\in \St$, the function $\sum_{(S,\varphi,\zeta,S')\in E_S}\varphi $  is  the constant 1.
\end{definition}

We first illustrate Definition \ref{def:PPG} with a simple example. This will also serve to illustrate the finite approximation theorem in the next section.

\begin{example}\label{ex:simple0}
Consider the PPG in Fig. \ref{fig:drunk}, left. Here we have $m=2$    and $B(p)$ is the Bernoulli distribution with success probability $p$.  On the left, a more conventional pseudo-code      notation for the resulting program.
We will not pursue a systematic formal translation from this program notation to PPGs, though.
\end{example}

The following example illustrates the use of scoring functions inside loops. While a bit contrived, it is close to the structure of more significant scenarios, such as the aircraft tracking example of \cite{WuEtAl}, cf. Section \ref{sec:experiments}.

\begin{example}[Of mice and drunk men]{\label{ex:dm1}
On  a street, a  drunk man and a mouse perform independent  random walks  starting at conventional positions $-1$ and $1$ respectively. Initially, each of them samples   a standard deviation (s.d.) from a uniform distribution.
 Then, at each discrete time step, they independently sample  their own next position from a Normal distribution centered at the current position with  the  s.d. chosen at the beginning. The process is stopped as soon as the man and the mouse meet, which we take to mean the distance between them is $<1/10$.

It has been suggested that  in certain urban areas   a man is never more than 3m away from a   mouse \cite{TheGuardian}. Taking this    information at face value, we incorporate it into   our model  with the score function $\gamma:=[|x-y|\leq 3]$. The resulting PPG is described in Figure \ref{fig:drunk}, right (cf. also pseudo-code).
}\end{example}

\iffull
\begin{remark}{ 
The definition of score function  requires that $\gamma$ takes values on $[0,1]$. This restriction is necessary  for the weight function to be defined on infinite traces (cf. \eqref{eq:wt}) be well-defined. A similar restriction is found in e.g. \cite{borgstrom}. In practice, provided that $\gamma$ is nonnegative and bounded, we can always divide by an appropriate constant\footnote{Provided all the score functions appearing in the program are divided by the \emph{same} constant, so as to avoid distorsive effects.}. E.g., rather than $\gamma(x_1,x_2)=N(x_1,0.1,x_2)$, we can consider $\gamma(x_1,x_2)=N(x_1,0.1,x_2)/N(0,0.1,0)$. As far as the trace semantics is concerned (Section \ref{sec:obs}), this normalization will in fact be immaterial, as it will take place on both the numerator and the denominator. On the other hand, we cannot use unbounded score functions, like say, $\gamma(x_1,x_2)=N(x_1,x_2,x_2)$.
}
\end{remark}
\fi

\paragraph{Operational semantics of PPGs}\label{sec:PPGsem}
For any given    PPG $\ppg$, we will define a Markov kernel $\K_{\ppg}(\cdot,\cdot)$ that describes its operational semantics. From now on, we will consider one arbitrarily  fixed  PPG, $\ppg=(\St,E,\nil,\psco)$ and just drop the subscript ${}_{\ppg}$ from the notation. Let us also remark that the scoring function $\psco(\cdot)$ will play no role in the definition of the Markov kernel --- it will come into play in the trace based semantics of Section \ref{sec:obs}.

Some additional notational shorthand  is in order. First, we   identify $\PP$ with   the finite set of naturals $\{0,...,|\St|-1\}$. With this convention,  we have that  $\ereals^m\times \PP\subseteq  \ereals^{m+1}$. Henceforth, we define   our  state space and sigma-field as follows:
\begin{align*}
\Omega :=\overline\reals^{m+1} \quad \quad\quad
\extF := \text{Borel sigma-field over $ \ereals^{m+1}$.}
\end{align*}
We keep the symbol $\F_k$ for the Borel sigma-field over $\overline\reals^{k}$, for any $k\geq 1$.
%
For any $S\in \PP$ and $A\in \extF$, we let $A_S:=\{v\in \ereals^m\,:\,(v,S)\in A\}$ be the \emph{section} of $A$ at $S$. Note that $A_S\in\F_m$, as sections of measurable sets are measurable, see \cite[Th.2.6.2, proof(1)]{Ash}. 

\begin{definition}[PPG Markov kernel]\label{def:ppgMK}
The function $\K:\Omega\times \extF\rightarrow \reals^+$ is defined as follows, for each $\omega\in \Omega$ and $A\in \extF$:
\begin{equation}\label{eq:PPGMK}
\K(\omega)(A):= \left\{\begin{array}{ll}
	\delta_\omega(A) & \text{if $\omega\notin \ereals^m\times \PP$}\\
	\sum_{(S,\varphi,\zeta,S')\in E_S} \,\varphi(v) \cdot \zeta(v)\left(A_{S'}\right)& \text{if $\omega=(v,S)\in \ereals^m\times \PP$.}
\end{array}\right.
\end{equation}
\end{definition}

\begin{lemma}\label{lemma:MKO} 
The function $\K$ is  a Markov kernel from $\Omega$ to $\Omega$.
\end{lemma}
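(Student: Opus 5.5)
We verify the two conditions of Definition \ref{def:MK} separately, splitting $\Omega$ into the measurable pieces $D:=\ereals^m\times\PP$ and its complement $\Omega\setminus D$. Note that $D$ is a finite union of sets $\ereals^m\times\{S\}$, each a Borel rectangle, so $D\in\extF$.

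\emph{Condition 1 (each $\K(\omega)(\cdot)$ is a probability measure).} For $\omega\notin D$, $\K(\omega)=\delta_\omega$, which is a probability measure. For $\omega=(v,S)\in D$, fix a transition $(S,\varphi,\zeta,S')$ and consider $A\mapsto \zeta(v)(A_{S'})$. This is the pushforward of the probability measure $\zeta(v)$ under the measurable embedding $\iota_{S'}:u\mapsto (u,S')$, because $A_{S'}=\iota_{S'}^{-1}(A)$. It is therefore a probability measure on $(\Omega,\extF)$. Sections of disjoint sets are disjoint, which gives countable additivity directly. Thus $\K(\omega)$ is a finite nonnegative combination, with coefficients $\varphi(v)\in\{0,1\}$, of probability measures, so it is countably additive. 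Evaluating at $A=\Omega$, where every section is all of $\ereals^m$, gives $\K(\omega)(\Omega)=\sum_{E_S}\varphi(v)=1$ by the consistency condition of Definition \ref{def:PPG}.

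\emph{Condition 2 (measurability of $\omega\mapsto\K(\omega)(A)$).} Fix $A\in\extF$ and write
\[
\K(\cdot)(A)=1_{\Omega\setminus D}\cdot 1_A+\sum_{S\in\PP}1_{\ereals^m\times\{S\}}\cdot g_S,
\]
where $g_S(v,s):=\sum_{(S,\varphi,\zeta,S')\in E_S}\varphi(v)\,\zeta(v)(A_{S'})$. The first term is measurable, being a product of indicators of measurable sets.

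Each $g_S$ factors through the projection $\pi:(v,s)\mapsto v$, which is measurable (continuous). Measurability of $g_S$ therefore reduces to measurability of $v\mapsto\varphi(v)\,\zeta(v)(A_{S'})$ on $\ereals^m$. The factor $\varphi$ is measurable by assumption. The factor $v\mapsto\zeta(v)(A_{S'})$ is measurable by item 2 of Definition \ref{def:MK} applied to the kernel $\zeta$, since $A_{S'}\in\F_m$ (sections of measurable sets are measurable, as noted before the definition). Finite sums and products of measurable functions are measurable, so each $g_S$ is measurable, and hence so is $\K(\cdot)(A)$.

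The main point needing care is the pushforward/section argument in Condition 1, together with the observation that restricting to the slice $\ereals^m\times\{S\}$ is harmless. Since $\PP$ is finite, the case split is a finite measurable partition of $\Omega$, so piecewise definitions preserve measurability. The remaining steps are routine.
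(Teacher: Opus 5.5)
Your proof is correct and follows the same route as the paper's. You use the same decomposition $\K(v,z)(A)=[z\notin\St]\,1_A(v,z)+\sum_{S}[z=S]\sum_{E_S}\varphi(v)\,\zeta(v)(A_{S'})$, and you get measurability from the kernel property of each $\zeta$, measurability of sections, and composition with the projection, exactly as the paper does. The only cosmetic difference is in the probability-measure check: you treat each summand as the pushforward of $\zeta(v)$ along $u\mapsto(u,S')$ and note that a finite combination of probability measures with weights summing to $1$ is a probability measure, whereas the paper isolates the unique active transition and verifies countable additivity via sections of disjoint unions.
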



\section{Trace semantics for PPGs}\label{sec:obs}
In what follows, we fix an arbitrary PPG, $\ppg=(\St,E,\nil,\psco)$  and let $\K$ denote the induced Markov kernel, as per Definition \ref{def:ppgMK}.
For any $t\geq 1$,  we call $\Omega^t$   the set of \emph{paths of length $t$}. Consider now the set of paths of infinite length,  $\Omega^\infty$, that is the set of infinite sequences $\iomega=(\omega_1,\omega_2,...)$ with $\omega_i\in\Omega$. For any   $\omega^t\in \Omega^t$ and $\iomega\in \Omega^\infty$, we identify     the pair $(\omega^t,\iomega)$ with the element of $\Omega^\infty$ in which the prefix $\omega^t$ is followed by $\iomega$.  For $t\geq 1$ and a measurable $B_t\subseteq \Omega^t$, we let $\cy{B_t}:=B_t\cdot \Omega^\infty\subseteq \Omega^\infty$ be the \emph{measurable cylinder} generated by $B_t$. We let $\Cyl$ be the minimal sigma-field over $\Omega^\infty$ generated by all measurable cylinders. Under the same assumptions of  Theorem \ref{th:prod} on the measure $\mu^1$ and on the kernels $K_2,K_3,...$ there exists a unique measure $\mu^\infty$ on $\Cyl$ such that for each $t\geq 1$ and each measurable cylinder $\cy{B_t}$, it holds that $\mu^\infty(\cy{B_t})=\mu^t(B_t)$: see \cite[Th.2.7.2]{Ash}, also   known as the \emph{Ionescu-Tulcea  theorem}. In the definition below,  we let $0=(0,...,0)$ ($m$ times) and consider $\delta_{(0,S)}$, the Dirac's measure on $\Omega$ that concentrates all the probability mass in $(0,S)$.

\begin{definition}[probability measure induced by $S$]\label{def:measPP}
Let  $S\in \PP$. For each integer $t\geq 1$, we let $\mu^t_S$ be the probability measure over $\Omega^t$ uniquely defined by Theorem \ref{th:prod}(a) by letting $\mu^1=\delta_{(0,S)}$ and $K_2=\cdots =K_t=\K$.
We let $\mu^\infty_S$ be the unique probability measure on $\Cyl$ induced by $\mu_1$ and $K_2=\cdots =K_t=\cdots=\K$, as determined by the Ionescu-Tulcea theorem.
%
\end{definition}

In other words, $\mu^t_S=\delta_{(0,S)}\otimes \K\otimes \cdots \otimes \K$ ($t-1$ times $\K$). By convention,  if $t=1$, $\mu^t_S=\delta_{(0,S)}$. The measure $\mu^\infty_S$ can be informally interpreted as the limit of the measures $\mu^t_S$ and represents the semantics of $S$.

Recall that the \emph{support} of an (extended) real valued function $f$ is the set $\supp(f):=\{z\,:\,f(z)\neq 0\}$. In what follows, \emph{we shall concentrate on nonnegative measurable functions} $f$ to avoid unnecessary complications with the existence of integrals. General functions can be dealt with by the usual trick of decomposing $f$ as $f=f^+ - f^-$, where $f^+=\max(0,f)$ and $f^-=-\min(0,f)$, and then dealing separately with $f^+$ and $f^-$. Let us introduce a  \emph{combined score function}   $\scoz:\Omega\rightarrow [0,1]$ as follows, for each $\omega=(v,S)$:
\begin{equation}\label{eq:scoz}
\scoz(\omega):= \left\{\begin{array}{ll}\psco(S)(v) &\text{ if $\omega=(v, S)\in \ereals^m\times \PP$}\\
1 & \text{ otherwise.}
\end{array}\right.
\end{equation}
The function $\scoz(\cdot)$ is extended to a \emph{weight function} on infinite traces, $\wt:\Omega^\infty \rightarrow [0,1]$ by letting\footnote{Note that $\wt(\tilde\omega)$ is well-defined
because $0\leq \scoz(\omega_j)\leq 1$ for each $j\geq0$, so the sequence of partial products is  nonincreasing.}, for any $\tilde\omega=(\omega_1,\omega_2,...)\in \Omega^\infty$:
{\small
\begin{equation}\label{eq:wt}
\wt(\tilde\omega):=\Pi_{j\geq 1}\scoz(\omega_j)\,.
\end{equation}
}\noindent
For each $t\geq 1$, we define the weight function truncated at time $t$,    $\wt_t:\Omega^t\rightarrow [0,1]$, by $\wt_t(\omega^t):=\Pi_{j=1}^t\scoz(\omega_j)$.
Both $\wt$ and $\wt_t$ ($t\geq 1$) are measurable functions  on the respective domains\iffull (see Lemma \ref{lemma:aux})\fi. We arrive at the definition of the semantics of programs. We consider the ratio of the unnormalized semantics ($[S]f$) to the weight of all traces, terminated or not ($[S]\wt$). In the special case when the score functions represent conditioning, this choice corresponds to quotienting over the probability of \emph{non failed} traces.     In PPL, quotienting over non failed states is   somewhat standard: see e.g. the discussion  in \cite[Section 8.3.2]{KaminskiTh}.

\begin{definition}[trace semantics]\label{def:obsPP}
Let $f$ be a nonnegative measurable function defined on $\Omega^\infty$.  We let the \emph{unnormalized semantics of $S$ and $f$} be $[S] f := \expc_{\mu^\infty_S}[f]\,(=\int \mu^\infty_S(d \iomega)f(\iomega))$. 
We let
\begin{align}\label{eq:semPP}
\sem{S}f&:=\dfrac{[S] (f\cdot \wt)  }{[S] \wt} 
\end{align}
provided   the denominator above is $>0$; otherwise $\sem{S}f$ is undefined. 
\end{definition}
\vspace{-0.5cm}

\section{Finite approximation}\label{sec:FA}
The operational semantics of a probabilistic program is defined over infinite traces, due to the possibility of unbounded loops. Yet in practice,
we can reason about or sample only \emph{finite} traces. The main result of this section, Theorem \ref{th:approx}, provides a rigorous way to approximate expectations over infinite traces by computing expectations over finite prefixes.
Intuitively, the theorem says that if we truncate all traces at a fixed length $t$, and restrict our attention
to those that have already terminated by that point, then we can compute lower and upper bounds for
the expectation of $f$. Moreover, the bounds converge to the true value as $t\rightarrow+\infty$, if the program is guaranteed to
terminate within finite time (Theorem \ref{th:tight})

In more detail, we are interested in   $\sem{S} f$ in cases where the value of $f$  is, so to speak, determined by a finite prefix of its argument: we call these functions \emph{prefix-closed}, and will define them further below. We first have to introduce prefix-closed languages\footnote{In the context of model checking, these languages arise as complements of Safety properties; see e.g. \cite[Def.3.22]{BaierKatoen}.}, for which   some notation on languages of finite and infinite words is useful. Given two words $w,w'\in \Omega^*$, 
we write $w\prec w'$ if $w$ is a prefix of $w'$, i.e. there exists a word $w''\in \Omega^*$ such that $ww''=w'$; otherwise we write $w\not\prec w'$. For $L,L'\subseteq \Omega^*$, we write $L\not\prec L'$ if for all $w\in L$ and $w'\in L'$ we have $w\not\prec w'$. A sequence of languages $L_0,L_1,...$ such that for each $j$, $L_j\subseteq \Omega^j$ (with $\Omega^0:=\{\epsilon\}$, the empty sequence) is said to be \emph{prefix-free} if for each $i\neq j$, $L_i\not\prec L_j$. Note that if $L_0\neq \es$ then $L_j=\es$ for $j\geq 1$. For the sake of uniform notation, in what follows we convene that    $\omega^0:=\epsilon$ and $\cy{\{\epsilon\}}:=\Omega^\infty$. We say   $A\subseteq \Omega^\infty$ is a \emph{prefix closed set} if there is a prefix-free sequence of languages $L_0,L_1,...$ such that $A=\cup_{j=0}^\infty \cy{L_j}$; we call   $L_j$   a \emph{$j$-branch} of $A$, and refer to $L_0,L_1,...$ collectively as   \emph{branches of} $A$.  For   any $t\geq 1$,  we define the following subsets of $\Omega^t$:
\begin{align*}
L^{\leq t}&:=\cup_{j=0}^t L_j\cdot \Omega^{t-j}, \ \ \ \ \ \ \ \ \ L^{> t}:=\{\omega^t\,:\, \text{ there is $t'>t$ and $\omega_{t'}\in L_{t'}$ s.t. } \omega^t \prec \omega^{t'}  \}\,.
\end{align*}
Informally speaking,   $L^{\leq t}$ is the set of paths of length $t$ that will become members of $A$ however we extend them to infinite words. $L^{> t}$ is  the set of  paths of length $t$ for which some infinite extensions, but not all, are in  $A$ --- they are so to speak   “undecided''.  Of special interest is the     prefix-free sequence of languages defined below.

\begin{definition}[termination]\label{def:term}
Let $\term:=\ereals^m\times \{\nil\}$ be the set of \emph{terminated} states and let $\term^{\mathrm{c}}$ denote its complement. We let  $T_j\subseteq\Omega^j$ ($j\geq 0$) be the set of finite sequences that \emph{terminate at time} $j$, that is: $T_0:=\es$ and   $T_j:=(\term^{\mathrm{c}})^{j-1}\cdot \term$, for $j\geq 1$. We let $T_{\fini}:=\cup_{t\geq 0}\cy{T_t}\subseteq \Omega^\infty$ denote the set of infinite sequences that \emph{terminate in finite time}.
\end{definition}

Note that $\{T_j:j\geq 0\}$  forms a prefix-free sequence,   that $T^{\leq t}\subseteq \Omega^t$ is the set of all paths of length $t$  that terminate within time $t$, while     $\cy{T_t}\subseteq\Omega^\infty$ is the set of  infinite execution paths with termination at time $t$.
The next definition introduces   prefix-closed    functions. These are functions $f$ with a prefix-free support, condition (a), additionally satisfying two extra conditions. Condition (b) just states that the value of $f$ on its support is determined by a finite prefix of the input sequence.
Condition (c), T-respectfulness,   means that  a trace that terminates at time   $j$ ($\omega^j\in T_j$) cannot lead to  $\supp(f)$ at a later time ($\omega^j\notin L^{> j}$).
This is a consistency condition,  formalizing that  the value of $f$ does not depend on, so to speak, what happens \emph{after} termination.

\begin{definition}[prefix-closed function]\label{def:termf}
Let $f:\Omega^\infty \rightarrow\erealspl$ be a nonnegative measurable function and $(L_0,L_1,...)$ be a prefix-closed sequence. We say $f$ is a \emph{prefix-closed function} with branches $L_0,L_1,...$ if the following conditions are satisfied.
\begin{itemize}
\item[(a)] $\supp(f)$ is   prefix-free with   branches $L_j$ ($j\geq 0$).
\item[(b)] for each $j\geq 0$ and $\omega^j\in L_j$, $f$ is constant on $\cy{\{\omega^j\}}$.
\item[(c)] $\supp(f)$ is \emph{T-respectful}:   for each $j\geq 0$, $L^{> j}\cap T_j=\es$.
\end{itemize}
%
%
\end{definition}

Note that there may be different prefix-free sequences w.r.t. which $f$ is prefix-closed.



\begin{example}{
\label{ex:f}
The indicator function  $1_{T_{\fini}}$ is clearly a prefix closed, measurable function with $\supp(1_{T_{\fini}})=T_{\fini}$ and branches $L_j=T_j$. For more interesting examples, consider the PPG in Example \ref{ex:dm1} and the functions
%
$f_1$, that returns  the time the process terminates, and  $f_2$ that returns the value of $d$ at termination. Here $\supp(f_1)=T_{\fini}$ has   branches $L_{j}=T_j$ ($j\geq 0$), instead $\supp(f_2)=\{\iomega\in T_{\fini}\,:\,$ the first terminated state $\omega$ in $\iomega$, if it exists, has   $\omega(1)=d\in(0,2]\}$, and $L_0=\es$, $L_j=(\term^{\mathrm{c}})^{j-1}\cdot (\term\cap ((0,2]\times \ereals^4))$ ($j\geq 1$).

}\end{example}


We will now study how to consistently approximate infinite computations ($\mu^\infty_S$ semantics) with finite ones ($\mu^t_S$ semantics). This will lead to the main result of this section (Theorem \ref{th:approx}). As a first  step, let us   introduce an appropriate notion of finite approximation for functions $f$ defined on the infinite product space $\Omega^\infty$.   Fix an arbitrary element $\star\in \Omega$. For each $f:\Omega^\infty \rightarrow \ereals^+$ and $t\geq 1$, let us define the function $f_t:\Omega^t \rightarrow \ereals^+$ by $f_t(\omega^t):=f(\omega^t,\star^\infty)$.
The intuition here is that, for a prefix-closed function $f$, the function  $f_t$ approximates correctly $f$ for all  finite paths in the $L_j$-branches of $f$, for $j\leq t$. Consider for instance the function $f=f_1$ in Example \ref{ex:f}. On $L^{\leq t}$, the approximation $f_t$ gives the correct value w.r.t. $f$ in a precise sense: $f_t(\omega^t)=f(\omega^t,\star^\infty)=f(\omega^t,\tilde\omega')$  whatever   $\star$ and $\tilde\omega'$. On the other hand, for finite paths $\omega^t\in L^{>t}$, $f_t$ may not approximate $f$ correctly: we may have  $f_t(\omega^t)=f(\omega^t,\star^\infty)\neq f(\omega^t,\tilde\omega')$ depending  on the specific $\star$ and $\tilde\omega'$. The catch is, as $t$ grows large, the set $L^{>t}$ will become thinner and thinner --- at least under   reasonable assumptions on the measure $\mu^\infty_S$.

It is not difficult to check that, for any $t$,  $f_t$ is measurable over $\Omega^t$ \iffull (Lemma \ref{lemma:aux} in the Appendix).\else.\fi The next result shows how to approximate   $\sem S f$ with quantities defined \emph{only in terms of $f_t, \wt_t$ and $\mu^t_S$}, which is the basis for the   sampling-based inference algorithm  in the next section.
Formally, for $t\geq 1$ and a   measurable function $h:\Omega^t\rightarrow \ereals^+$, we let
$$
\begin{array}{ll}
[S]^t h&:=\expc_{\mu^t_S}[h] \,\,(=\int \mu^t_S(d\omega^t)h(\omega^t)\,)\,.
\end{array}
$$
%
%
%
%
%
%
The proof of Theorem \ref{th:approx} (Appendix \ref{app:proofs}) will follow  by applying the following lemma to the numerators and denominators of the expressions involved in \eqref{eq:approx}.
The intuition  is as follows. Consider a prefix closed function $f$ with branches $L_0,L_1,...$. Consider the definition of $\sem S f$, Def. \ref{def:obsPP}.
For  any time   $t$,  it is not difficult to see that   $\cy{{L}^{\leq t}\cap T^{\leq t}} \subseteq \supp(f)\subseteq \cy{{L}^{\leq t}\cap T^{\leq t}}\cup (\cy{{T}^{\leq t}})^{\mathrm{c}}$ (the last inclusion involves T-respectfulness).
Since $f_t$ approximates correctly $f$ on $L^{\leq t}$, one sees that the first inclusion leads to the lower bound  $[S]^t f_t\cdot 1_{L^{\leq t}\cap T^{\leq t}}\cdot \wt_t\leq [S](f\cdot \wt)$. As for the upper bound, the intuition is that, over $(\cy{{T}^{\leq t}})^{\mathrm{c}}$, $f$ is upper-bounded by $M$.

\begin{lemma}\label{lemma:basic}
	Let $t\geq 1$ and let $f\leq M$ be a prefix-closed function  with branches $L_j$ ($j\geq 0$):
	{ %
		\begin{equation}\label{eq:basic}
			\begin{array}{rcl}
				[S]^t f_t\cdot 1_{L^{\leq t}\cap T^{\leq t}}\cdot \wt_t & \leq  [S]f\cdot \wt  \leq &  [S]^t f_t\cdot 1_{L^{\leq t}\cap T^{\leq t} }\cdot \wt_t + M\cdot[S]^t(  1- 1_{ T^{\leq t}})\cdot \wt_t\,.
			\end{array}
		\end{equation}
	}\noindent
\end{lemma}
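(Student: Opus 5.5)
We split $\Omega^\infty$ into the two disjoint measurable pieces $C_1:=\cy{T^{\leq t}}$ (traces terminated by time $t$) and $C_2:=(\cy{T^{\leq t}})^{\mathrm{c}}=\cy{\Omega^t\setminus T^{\leq t}}$, and write $[S](f\cdot\wt)=\int_{C_1}\mu^\infty_S(d\iomega)f\wt+\int_{C_2}\mu^\infty_S(d\iomega)f\wt$. Both pieces are cylinders of length $t$. The lower bound will come from the $C_1$ term alone; the upper bound adds a crude estimate of the $C_2$ term.

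\emph{The $C_1$ term.} We first show that for $\iomega=(\omega^t,\iomega')\in C_1$ we have $f(\iomega)=f_t(\omega^t)1_{L^{\leq t}}(\omega^t)$, and $\wt(\iomega)=\wt_t(\omega^t)$ for $\mu^\infty_S$-almost every such $\iomega$.
\begin{itemize}
\item If $\omega^t\in L^{\leq t}$, then some prefix $\omega^j\in L_j$ with $j\le t$, and condition (b) gives $f(\iomega)=f(\omega^t,\star^\infty)=f_t(\omega^t)$.
\item If $\omega^t\notin L^{\leq t}$ and $\iomega\in\supp(f)$, then by (a) the trace has a prefix in some $L_{t'}$ with $t'>t$. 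Then $\omega^t\in L^{>t}$, and its prefix terminating at $j\le t$ lies in $L^{>j}\cap T_j$, contradicting T-respectfulness (c). So $f(\iomega)=0$.
\item For the weight: $\K$ keeps $\nil$ at $\nil$ with the identity (Definition \ref{def:ppgMK}), and $\scoz=1$ on $\term$. Hence $\mu^\infty_S$-a.s. every state after termination is terminated and has score $1$, so $\wt(\iomega)=\wt_t(\omega^t)$. This requires that the set of traces that leave $\term$ after entering it is null. We show this by induction over cylinders, using Theorem \ref{th:prod} and the fact that $\K((v,\nil))=\delta_{(v,\nil)}$.
\end{itemize}
The $C_1$ integrand is therefore a.s. a function of the first $t$ coordinates only. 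By the cylinder property $\mu^\infty_S(\cy{B})=\mu^t_S(B)$, extended to integrals of such functions (standard: indicators, then simple functions, then monotone convergence), the $C_1$ term equals $[S]^t f_t\cdot 1_{L^{\leq t}\cap T^{\leq t}}\cdot\wt_t$. Since the $C_2$ term is nonnegative, this already gives the lower bound.

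\emph{The $C_2$ term.} Here we use $f\le M$ and $\wt(\iomega)\le\wt_t(\omega^t)$, which holds because all scores lie in $[0,1]$ and so the partial products are nonincreasing. The bound $M\,\wt_t(\omega^t)1_{C_2}$ again depends only on $\omega^t$, so integrating it gives $M\cdot[S]^t(1-1_{T^{\leq t}})\cdot\wt_t$. Adding the two terms gives the upper bound.

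\emph{Main obstacle.} The delicate step is the almost-sure identity $\wt=\wt_t$ on $C_1$. It needs the a.s. invariance of $\term$ under $\K$, which is a measure-theoretic argument on cylinders rather than a pointwise fact, since $\wt$ is defined on all sequences, including those that leave $\term$. Measurability of the sets $L^{\leq t}$ and $T^{\leq t}$ and of $f_t$ is assumed from Lemma \ref{lemma:aux}. If $L^{\leq t}$ turns out to be non-obviously measurable, we instead argue via the measurable functions $f\cdot1_{C_1}$ directly.
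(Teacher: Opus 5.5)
Your proposal is correct and is essentially the paper's proof. It uses the same decomposition ($\cy{T^{\leq t}}$ versus its complement), T-respectfulness (the paper's Lemma~\ref{lemma:pfsupp}) to make $f$ vanish on $\cy{T^{\leq t}}$ outside $\cy{L^{\leq t}}$, almost-sure absorption in $\term$ (the paper's Lemma~\ref{lemma:zero}, $\mu^\infty_S(\Theta)=1$) to get $\wt=\tilde{\wt}_t$ there, and $f\leq M$, $\wt\leq\tilde{\wt}_t$ on the complement. The only cosmetic difference is that you compute the $\cy{T^{\leq t}}$ term exactly and read off both bounds from it, whereas the paper bounds that term from above with the pointwise inequality $\wt\leq\tilde{\wt}_t$ and uses the almost-sure identity only for the lower bound.
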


In the formulation of the theorem's upper bound, we find it convenient to introduce a  `correction factor' $\alpha_t\geq 1$,  the ratio of the weight of \emph{all} traces  to \emph{terminated} traces at time $t$. 

\begin{theorem}[finite approximation]\label{th:approx} Consider $S\in\St$ and $t\geq 1$ such that $[S]^t 1_{T^{\leq t}}\cdot \wt_t>0$. 
Then for any  prefix-closed   function $f$ with branches $L_0,L_1,...$ we have that $\sem{S}f$ is well defined. 
Moreover,  given  an upper bound  $f\leq M$ ($M\in \erealspl$),    for each $t$ large enough and $\alpha_t:=\frac{[S]^t \wt_t}{[S]^t 1_{T^{\leq t}}\cdot \wt_t}$ we have:
\begin{equation}\label{eq:approx}
\begin{array}{rcccl}
	\dfrac{[S]^t f_t\cdot 1_{L^{\leq t}\cap T^{\leq t}}\cdot \wt_t}{[S]^t \wt_t} &\leq & \sem{S}f&
	\leq &
	\dfrac{[S]^t f_t\cdot 1_{L^{\leq t}\cap T^{\leq t}}\cdot \wt_t}{[S]^t \wt_t}\alpha_t+M\cdot   \left(\alpha_t   -1\right)\,.
\end{array}
\end{equation}
\end{theorem}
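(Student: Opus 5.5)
The plan is to obtain the theorem from Lemma \ref{lemma:basic}, applied twice: once to the numerator $[S](f\cdot\wt)$ and once to the denominator $[S]\wt$. Write
$N_t:=[S]^t f_t\cdot 1_{L^{\leq t}\cap T^{\leq t}}\cdot \wt_t$, $D_t:=[S]^t\wt_t$ and $D^T_t:=[S]^t 1_{T^{\leq t}}\cdot\wt_t$, so that $\alpha_t=D_t/D^T_t$.

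\textbf{Step 1: the denominator.} First I would show that the denominator satisfies
\[
D^T_t\leq [S]\wt\leq D_t .
\]
For the upper bound, $\wt\leq \wt_t$ holds pointwise on the first $t$ components, because every factor $\scoz\leq 1$. Since $\mu^\infty_S$ restricted to cylinders over $\Omega^t$ coincides with $\mu^t_S$, this gives $[S]\wt\leq[S]^t\wt_t=D_t$. For the lower bound, take a trace in $\cy{T^{\leq t}}$. It has reached $\term$ by time $t$ and stays in $\nil$ afterwards $\mu^\infty_S$-almost surely, because the only transition out of $\nil$ is $(\nil,1,\mathrm{id},\nil)$. On such a trace all later scores equal $\psco(\nil)=1$, so $\wt=\wt_t$ almost surely there. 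Hence $[S]\wt\geq [S](1_{\cy{T^{\leq t}}}\wt)=D^T_t$.

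The lower bound is also what the lemma yields for $f=1_{T_\fini}$, which is prefix-closed with branches $T_j$ and $M=1$. The hypothesis $D^T_t>0$ then gives $[S]\wt>0$, so $\sem S f$ is well defined.

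\textbf{Step 2: the lower bound in \eqref{eq:approx}.} Lemma \ref{lemma:basic} gives $N_t\leq[S](f\cdot\wt)$. Dividing by $[S]\wt\leq D_t$ yields $N_t/D_t\leq\sem S f$.

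\textbf{Step 3: the upper bound in \eqref{eq:approx}.} The lemma gives $[S](f\cdot\wt)\leq N_t+M(D_t-D^T_t)$, using $[S]^t(1-1_{T^{\leq t}})\wt_t=D_t-D^T_t$. Dividing by $[S]\wt\geq D^T_t$ gives
\[
\sem S f\leq \frac{N_t}{D^T_t}+M\Big(\frac{D_t}{D^T_t}-1\Big)=\frac{N_t}{D_t}\alpha_t+M(\alpha_t-1).
\]

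\textbf{Main obstacle.} The delicate point is the almost-sure claim in Step 1 that $\wt=\wt_t$ on terminated traces. It needs care because the kernel $\K$ acts as a Dirac measure off $\ereals^m\times\PP$ and as the identity at $\nil$. I would handle it by showing, via Theorem \ref{th:prod}, that $\mu^t_S$-almost every path entering $\term$ remains in $\term$. One also has to check that $[S]^t(1-1_{T^{\leq t}})\wt_t$ is finite so that the subtraction $D_t-D^T_t$ is legitimate; this holds because $\wt_t\leq1$. Finally, the case $M=+\infty$ needs the convention $\infty\cdot 0=0$ when $\alpha_t=1$. The phrase ``$t$ large enough'' only serves to guarantee $D^T_t>0$. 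This condition is monotone in $t$, since $D^T_t$ is nondecreasing in $t$ by the same absorption argument.
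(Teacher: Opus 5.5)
Your proposal is correct and follows essentially the paper's route: bound the numerator with Lemma~\ref{lemma:basic}, sandwich the denominator as $[S]^t 1_{T^{\leq t}}\cdot\wt_t\leq[S]\wt\leq[S]^t\wt_t$, and divide. The differences are minor: the paper gets both denominator bounds by applying Lemma~\ref{lemma:basic} to the constant function $1$ (branches $L_0=\{\epsilon\}$, $L_j=\es$), and gets $[S]\wt>0$ from the monotone limit in Lemma~\ref{lemma:conv}, whereas you read positivity directly off the lower denominator bound, which is slightly more economical.
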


When $f$ is an indicator function, $f=1_A$,   we can of course take $M=1$ in the theorem above. We first illustrate the above result with a simple example.

\begin{example}\label{ex:simple1}
Consider the PPG of Example \ref{ex:simple0} (Fig. \ref{fig:drunk}, left).
We ask what is the expected value of $c$ upon termination of this program. Formally, we consider the program checkpoint $S=0$, and the function $f$  on traces that returns the value of $c$ on the first terminated state, if any, and 0 elsewhere. This $f$ is clearly prefix-closed with branches $L_j\subseteq T_j$. We apply Theorem \ref{th:approx} to $\sem S f$. Fixing the time $t = 4$, we can calculate   easily the quantities involved in the approximation of $\sem S f$ in \eqref{eq:approx}. In doing so, we must consider  the finitely many paths of length $t$ of nonzero probability and weight (there are only two of them),  their weights and the value of $c$ on their final state when terminated \footnote{Here,  we also use the fact that $f_t\cdot 1_{T^{\leq t}}=f_t\cdot 1_{L^{\leq t}\cap T^{\leq t}}$, a consequence of $L_j\subseteq T_j$ for all $j$s.}.
{\small
	$$
	\begin{array}{ll}
		[S]^t f_t\cdot 1_{T^{\leq t}}\cdot \wt_t =0\cdot \frac 1 2 + 1\cdot \frac 1 2\cdot \frac 1 2=  \frac{1}{4}
		\hspace*{.5cm}&
		[S]^t  \wt_t = \frac 1 2   +  \frac 1 2\cdot \frac 1 2 = \frac{3}{4} \\  %
		{[}S{]}^t  1_{T^{\leq t}}\cdot\wt_t =\frac 1 2   +  \frac 1 2\cdot \frac 1 2 = \frac{3}{4} & \alpha_t   = 1\,.
	\end{array}
	$$
}\noindent
Then, with $M=1$,  the lower and upper bounds in \eqref{eq:approx} coincide and yield $ \sem S f = \frac 1 3$.
If we remove conditioning on node 4, then all the paths of length $t$ have weight 1, and a similar calculation yields   $ \sem S f= \frac 1 2$.
\end{example}

In more complicated cases, we may not be able to calculate exactly the quantities involved in \eqref{eq:approx}, but only to estimate them via sampling. To this purpose, we will introduce Feynman-Kac models and the Particle Filtering algorithm in the next section. For now, we content ourselves with the following example.

\begin{example}{\label{ex:RW2}
	Consider the PPG of Example \ref{ex:dm1} and   $f=f_2$ from Example \ref{ex:f}.  Take $S=0$. Then $\sem{S}f$ is the posterior expectation of the value of $d$, the drunk man's standard deviation. We can compute upper and lower bounds on  $\sem{S}f$ using \eqref{eq:approx}. Let us fix $t=60$. By sampling from $\mu^t_S$, we can compute separately the following estimates for each of the expected values  involved in  \eqref{eq:approx}:
	$[S]^t f_t\cdot 1_{L^{\leq t}\cap T^{\leq t}}\cdot \wt_t=  0.396 $, $[S]^t \wt_t=0.987$ and $\alpha_t=1.497$.
	Combining these estimates as in \eqref{eq:approx}, with $M=2$, we get the bounds: $0.402\leq  \sem{S}f\leq 1.596$. This   relatively large interval can be narrowed down by considering higher values of $t$, hence $\alpha_t$ closer to 1.
	A more efficient and accurate  method to compute the bounds in \eqref{eq:approx} will be introduced in Section \ref{sec:MC}, the Particle Filtering algorithm.
}\end{example}

The theorem below confirms that the bounds established above are asymptotically tight, at least under the assumption that the program $S\in \PP$ terminates with probability 1. In this case, in fact, the probability mass outside $\term^{\leq t}$ tends to 0, which leads the lower and the upper bound in \eqref{eq:approx} to coincide. Moreover, we get a simpler formula in the special case when termination is guaranteed to happen within a fixed time limit; for instance, in the case of    acyclic\footnote{Or, more accurately, PPGs where the only loop is the self-loop on the \textsf{nil} state.} PPGs.


\begin{theorem}[tightness]\label{th:tight} Assume the same hypotheses as in Theorem \ref{th:approx}. Further
assume that $\mu^\infty_S(T_\fini)=1$. 
Then both the lower and the upper bounds in \eqref{eq:approx} tend to $\sem S f$ as $t\rightarrow +\infty$. In particular, if for some $t\geq 1$ we have    $[S]^t 1_{T^{\leq t}}=1$, 
then
\begin{equation}\label{eq:exact}
	\begin{array}{rcl}
		\sem{S} f  &= & \dfrac{[S]^t f_t\cdot \wt_t}{[S]^t    \wt_t}\,.
	\end{array}
\end{equation}

\end{theorem}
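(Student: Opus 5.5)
The argument reduces to showing $\alpha_t\to 1$, together with convergence of the lower bound in \eqref{eq:approx} to $\sem S f$; the upper bound then follows from the lower one.

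\emph{Step 1: identify finite-time quantities with cylinder quantities.} For any measurable $B_t\subseteq\Omega^t$ and any function $h$ on $\Omega^t$, $[S]^t h\cdot 1_{B_t}=\int_{\cy{B_t}}\mu^\infty_S(d\iomega)\,h(\iomega^t)$, where $\iomega^t$ is the length-$t$ prefix. This follows from the Ionescu--Tulcea property $\mu^\infty_S(\cy{B_t})=\mu^t_S(B_t)$, extended from indicators to nonnegative functions by the standard simple-function argument.

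\emph{Step 2: the weights converge.} Since $\scoz\le 1$, $\wt_t(\iomega^t)$ is nonincreasing in $t$ and converges pointwise to $\wt(\iomega)$. By monotone convergence (dominated by $1$), $[S]^t\wt_t\downarrow[S]\wt$. The sets $\cy{T^{\le t}}$ increase to $T_\fini$, and $\mu^\infty_S(T_\fini)=1$. Hence
\[
0\le [S]^t\wt_t-[S]^t 1_{T^{\le t}}\wt_t=[S]^t(1-1_{T^{\le t}})\wt_t\le \mu^\infty_S\bigl((\cy{T^{\le t}})^{\mathrm c}\bigr)\to 0 .
\]
The denominator $[S]\wt$ is positive by Theorem \ref{th:approx}. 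Therefore both $[S]^t\wt_t$ and $[S]^t1_{T^{\le t}}\wt_t$ tend to $[S]\wt>0$, and so $\alpha_t\to 1$.

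\emph{Step 3: the numerators converge.} By Lemma \ref{lemma:basic}, the numerator $N_t:=[S]^t f_t 1_{L^{\le t}\cap T^{\le t}}\wt_t$ satisfies
\[
N_t\le [S](f\wt)\le N_t+M\,[S]^t(1-1_{T^{\le t}})\wt_t .
\]
When $M<\infty$, the error term tends to $0$ by Step 2, so $N_t\to[S](f\wt)$. Dividing by $[S]^t\wt_t\to[S]\wt$ shows that the lower bound tends to $\sem S f$. The upper bound is $(\text{lower})\cdot\alpha_t+M(\alpha_t-1)$, which therefore has the same limit.

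The main obstacle is the case $M=+\infty$ (unbounded $f$). There the upper bound is vacuous, and only the lower bound needs to converge. I would attack it directly, without Lemma \ref{lemma:basic}'s error term. On $\cy{L^{\le t}\cap T^{\le t}}$, condition (b) gives $f_t(\iomega^t)=f(\iomega)$. Moreover $\wt_t=\wt$ there, because after termination all states lie in $\term$ (the $\nil$ self-loop) with score $1$. So $N_t=\int_{\cy{L^{\le t}\cap T^{\le t}}}f\,\wt\,d\mu^\infty_S$. These sets increase, and their union contains $\supp(f)\cap T_\fini$: any $\iomega$ in the support lies in some $\cy{L_j}$, and, using T-respectfulness, it terminates at a time at which it already lies in $L^{\le t}$. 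Monotone convergence then gives $N_t\uparrow[S](f\wt)$, since $T_\fini$ has full measure. This argument also covers the bounded case without using $M$.

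\emph{Exact formula \eqref{eq:exact}.} If $[S]^t1_{T^{\le t}}=1$, then $1_{T^{\le t}}=1$ $\mu^t_S$-a.s., so $\alpha_t=1$ and the two bounds of \eqref{eq:approx} coincide; this gives $\sem S f=N_t/[S]^t\wt_t$. It remains to show $N_t=[S]^t f_t\wt_t$ a.s., that is, that almost every path of length $t$ lies in $L^{\le t}$ or has $f_t=0$. A path $\omega^t$ that terminated at some $j\le t$ cannot lie in $L^{>j}$ by T-respectfulness. Either it lies in $L^{\le t}$, or none of its extensions is in $\supp(f)$, in which case $f_t(\omega^t)=f(\omega^t,\star^\infty)=0$.
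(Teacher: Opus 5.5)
Your proposal is correct and is essentially the paper's proof. The decisive step is the same as the paper's item 1: you rewrite the numerator as $\int_{\cy{L^{\leq t}\cap T^{\leq t}}} f\cdot\wt\,d\mu^\infty_S$ and then apply monotone convergence with $\mu^\infty_S(T_\fini)=1$, where the identity $\wt_t(\iomega_{1:t})=\wt(\iomega)$ holds on that cylinder only $\mu^\infty_S$-a.e., as Lemma~\ref{lemma:zero} supplies; your derivation of \eqref{eq:exact} likewise uses T-respectfulness just as the paper does via Lemma~\ref{lemma:pfsupp}. The differences are minor. You get $\alpha_t\to 1$ from the direct squeeze $[S]^t(1-1_{T^{\leq t}})\cdot\wt_t\leq\mu^\infty_S\bigl((\cy{T^{\leq t}})^{\mathrm{c}}\bigr)\to 0$ instead of reapplying item 1 to $f=1$, and your Lemma~\ref{lemma:basic} route for $M<\infty$ is redundant once the monotone-convergence argument is in place. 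Your observation that for $M=+\infty$ only the lower bound can be expected to converge is a correct caveat that the paper's proof leaves implicit.
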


\begin{example}\label{ex:simple2} For the PPG of Example \ref{ex:simple0} one has $\mu^\infty_S(T_\fini)=1$. As already seen in Example \ref{ex:simple0}, lower and upper bounds coincide for $t\geq 4$.
\end{example}


A practically relevant class of closed prefix functions are those where the result    $f(\iomega)$ only depends on computing a function $h$, defined on $\Omega$, on the first terminated state, if any, of the sequence   $\iomega$.  
This way  $h$ is    {\emph{lifted}} to $\Omega^\infty$. This case covers all the examples seen so far.  We formally introduce lifting below. Recall that for $t\geq 1$, $T_t=(\term^{\mathrm{c}})^{t-1}\cdot\term$.

\begin{definition}[lifting]\label{def:simple} Let $h:\Omega \rightarrow \ereals^+$ a nonnegative measurable function such that $\supp(h)\subseteq \term$. The \emph{lifting of}  $h$   is the measurable function $\lift h:\Omega^\infty\rightarrow \ereals^+$ defined as follows for each $\iomega=(\omega_1,\omega_2,...)$: $ \lift  h(\iomega):=\sum_{t\geq 1}1_{\cy{T_t}}(\iomega)\cdot h(\omega_t)$.
\end{definition}

Clearly, any $\lift h$ is prefix closed with branches $L_0=\es$ and $L_j=(\term^{\mathrm{c}})^{j-1}\cdot\supp(h)\subseteq T_j$ for $j\geq 1$. In particular, $\supp(\lift h)\subseteq T_{\fini}$.


\begin{example}{\label{ex:lift}
	The indicator function for the set of paths that eventually terminate, $\lift h=1_{\term_\mathrm{f}}$, is clearly the lifting of $h=1_{\term}$. For more interesting examples, consider Example \ref{ex:f}.    The function $f_1$ is not a lifting of any function on $\Omega$, as it must count how many states are traversed up to  the first one in $\term$. However, one can modify the program of Example \ref{ex:dm1} to insert a new counter variable $j$  that provides this information, and return the value of $j$ at termination. The function $f_2$ is the lifting of the function $h(d,r,x,y,z)=d\cdot [0\leq d\leq 2]\cdot [z=\nil]$.

}\end{example}


\section{Feynman-Kac models}\label{sec:MC}
In the field of Sequential Monte Carlo methods, Feynman-Kac (FK) models \cite[Ch.9]{SMC}  are  characterized by the use of \emph{potential} functions.
A potential
in a Feynman-Kac model is a function that assigns a weight $G_t(x)$ to a \emph{particle} (instance of a random process) in state
$x$ at time
$t$. This weight represents how plausible or fit
$x$ is at time
$t$ based on some observable or conditioning. In other words,   $G_t$ modifies the \emph{importance} of particles as the system evolves. For instance, in a model for tracking an object, the potential function could depend on the distance between the predicted particle position and the actual observed position. Particles closer to the observed position get higher weights.

\paragraph{FK models and probabilistic program semantics}\label{sub:genrel}
As seen, our semantics incorporates conditioning via score functions applied at program checkpoints, and
aggregates their effect into a global weight $w$ over traces. This makes it possible to interpret program  semantics
  as a reweighted expectation (Definition \ref{def:obsPP}). Here we will show that this expectation can be approximated reliably using the Feynman-Kac framework and particle filtering.
We first introduce FK models in a general context.  Our formulation follows closely \cite[Ch.9]{SMC}. Throughout this and the next section, we let $t\geq 1$  be an arbitrary fixed integer.

\begin{definition}[Feynman-Kac models]\label{def:FK}  A \emph{Feynman-Kac (FK) model} is a tuple $\FC=(\X,t,\mu^1,\{K_i\}_{i=2}^t,\{G_i\}_{i=1}^t)$, where $\X=\ereals^\ell$ for some $\ell\geq 1$, $\mu^1$ is a probability measure on $\X$ and, for $i=2,...,t$:  $K_i$ is a Markov kernel from $\X$ to $\X$, and $G_i:\X\rightarrow \ereals^+$ is a measurable function.

Let  $\mu^t$ denote the unique product measure on $\X^t$ induced by $\mu_1,K_2,...,K_t$ as per Theorem \ref{th:prod}. Let $G:=\Pi_{i=1}^t G_i$. 
Provided $0<\expc_{\mu^t}[G]<+\infty$, the \emph{Feynman-Kac measure} induced by $\FC$ is defined by the following, for every measurable $A\subseteq \X^t$:
	\vspace{-0.2cm}
\begin{align}\label{eq:FCM}
	\mufk_{\FC}(A)&:=\frac{\expc_{\mu^t}[1_A\cdot  G]}{\expc_{\mu^t}[G]}\,.
\end{align}
\end{definition}
	\vspace{-0.1cm}
We will refer to $G$ in the above definition as the \emph{global potential}.
Equality \eqref{eq:FCM} easily generalizes to expectations   taken according to $\mufk_{\FC}$. That is, for any measurable nonnegative function $g$   on $\X^t$, we can easily show that:
	\vspace{-0.3cm}
\begin{align}\label{eq:FCMe}
\expc_{\mufk_{\FC}}[g]&=\frac{\expc_{\mu^t}[g\cdot  G]}{\expc_{\mu^t}[G]}\,.
	\vspace{-0.2cm}
\end{align}
In what follows, we will suppress  the subscript ${}_{\FC}$ from ${\mufk}_{\FC}$ in the notation,  when no confusion   arises. Comparing  \eqref{eq:FCMe} against the definition \eqref{eq:semPP} suggests that the global potential $G$ should play in FK models a role analogous to the weight function $\wt$ in probabilistic programs. Note however that there is a major technical difference between the two, because FK models are only defined for a finite time horizon model given by $t$. A reconciliation between the two is possible thanks to the finite approximation theorem seen in the last section; this will be elaborated further below (see Theorem \ref{th:filtlift}).

We will be  particularly interested in the \emph{$t$-th marginal} of $\mufk$, that is the probability measure on $\X$ defined as ($A\subseteq \X$ measurable):
\begin{equation}\label{eq:phit}
\mufk_t(A):=\mufk(\X^{t-1}\times A)  =  \expc_\mufk[1_{\X^{t-1}\times A}]\,.
\end{equation}
The measure $\phi_t$ is called \emph{filtering} distribution (at time $t$), and can be effectively computed via the Particle Filtering algorithm   described in the next subsection.
Now  let $\ppg=(\St,E,\nil,\psco)$ be an arbitrary fixed PPG.  Comparing  \eqref{eq:FCMe} against e.g. the lower bound in  \eqref{eq:approx} suggests   considering the following FK model associated with $\ppg$ and a checkpoint $S$.

\begin{definition}[$\FC_S$  model]\label{def:PPFK} Let $t\geq 1$ be an integer and $S$ a program checkpoint of $\ppg$. We define $\FC_S$ as the FK model where: $\X=\Omega$, $\mu^1=\delta_{(0,S)}$, $K_i=\K$ ($i=2,...,t$) and $G_i=\scoz$ ($i=1,...,t$).
We let $\phi_S$  denote the measure on $\Omega^t$ induced by $\FC_S$.
\end{definition}

We now restrict our attention to functions $f$ that are the lifting of a nonnegative $h$ defined on $\Omega$.
Let $\phi_{S,t}$ denote the filtering distribution of $\phi_{S}$ at time $t$ obtained by \eqref{eq:phit}. In the following theorem we express the bounds in \eqref{eq:approx} in terms of the measure  $\phi_{S,t}$. The whole point and interest of this result is that the bounds are expressed  directly as expectations; these are moreover taken w.r.t. a  \emph{1-dimensional} filtering distribution ($\phi_{S,t}$),  rather than  a   $t$-dimensional one ($\mu^t_S$). Importantly, there are well-known algorithms to estimate expectations under a filtering distribution, as we will see in the next subsection.

\begin{theorem}[filtering distributions and lifted functions]\label{th:filtlift} Under the same assumptions of Theorem \ref{th:approx}, further assume that $f$ is the lifting of $h$. Then $\alpha_t= \expc_{\phi_{S,t}}[1_{\term}]^{-1}$ and
\begin{equation}\label{eq:filtering}
	\begin{array}{rcl}
		\beta_L:=\;\expc_{\phi_{S,t}}[h] &\leq  \,\, \sem S f \,\,\leq &  \expc_{\phi_{S,t}}[h]\cdot \alpha_t \;+\;M\cdot (\alpha_t-1)\;=:\beta_U\,.
	\end{array}
\end{equation}
\end{theorem}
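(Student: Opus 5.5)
The plan is to reduce everything to Theorem \ref{th:approx} and rewrite each of its finite-horizon quantities as an expectation under the filtering distribution $\phi_{S,t}$. By \eqref{eq:FCMe} and \eqref{eq:phit}, for any nonnegative measurable $g$ on $\Omega$ we have
\[
\expc_{\phi_{S,t}}[g]=\frac{[S]^t\, (g\circ \pi_t)\cdot \wt_t}{[S]^t \wt_t},
\]
where $\pi_t(\omega^t)=\omega_t$ is the last-coordinate projection. This holds because the global potential $G=\Pi_{i=1}^t\scoz(\omega_i)$ of $\FC_S$ is exactly $\wt_t$, and $\mu^t$ is exactly $\mu^t_S$. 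Note that $\expc_{\mu^t}[G]=[S]^t\wt_t\geq [S]^t 1_{T^{\leq t}}\wt_t>0$ by hypothesis, and $G\leq 1$, so $\phi_S$ is well defined. The proof then consists of two identifications.

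The first identification is that $1_{T^{\leq t}}=1_\term\circ\pi_t$ $\mu^t_S$-almost surely. By Definition \ref{def:PPG} and \eqref{eq:PPGMK}, the only transition out of $\nil$ is $(\nil,1,\mathrm{id},\nil)$, so $\K((v,\nil))=\delta_{(v,\nil)}$ and $\term$ is absorbing. Hence, $\mu^t_S$-a.s., a path that has entered $\term$ at some time $j\leq t$ is still in $\term$ at time $t$; conversely, if $\omega_t\in\term$ then the path terminates at some first time $j\leq t$. To make this rigorous I would show by induction on $t$, using Theorem \ref{th:prod}, that the measurable set $\{\omega^t: \omega_j\in\term,\ \omega_{j'}\notin\term \text{ for some } j<j'\le t\}$ has $\mu^t_S$-measure zero. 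This gives $[S]^t1_{T^{\leq t}}\wt_t=[S]^t\wt_t\cdot\expc_{\phi_{S,t}}[1_\term]$, and therefore $\alpha_t=\expc_{\phi_{S,t}}[1_\term]^{-1}$.

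The second identification is that $f_t\cdot 1_{L^{\leq t}\cap T^{\leq t}}=h\circ\pi_t$ $\mu^t_S$-a.s. For $f=\lift h$ the branches satisfy $L_j=(\term^{\mathrm c})^{j-1}\supp(h)\subseteq T_j$. Take a path $\omega^t$ that first terminates at time $j\leq t$ and is absorbed afterwards. If $\omega_j\in\supp(h)$, then $\omega^t\in L^{\leq t}\cap T^{\leq t}$ and $f_t(\omega^t)=\lift h(\omega^t,\star^\infty)=h(\omega_j)$. Absorption in $\nil$ leaves the state unchanged (the kernel is $\delta$), so $\omega_t=\omega_j$ and the value equals $h(\omega_t)$. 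If $\omega_j\notin\supp(h)$, both sides vanish, since $h(\omega_t)=h(\omega_j)=0$. Finally, if $\omega^t\notin T^{\leq t}$, then $\omega_t\notin\term\supseteq\supp(h)$, so again both sides are $0$. Thus the lower bound in \eqref{eq:approx} equals $\expc_{\phi_{S,t}}[h]=\beta_L$, and substituting both identifications into the upper bound of \eqref{eq:approx} yields $\beta_U$.

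The main obstacle is the almost-sure absorption argument, specifically the observation that the state itself, not just membership in $\term$, is frozen after termination; the identification $h(\omega_t)=h(\omega_j)$ depends on this. I would handle it by a careful induction computing $\mu^t_S$ of the exceptional sets via Theorem \ref{th:prod}, using $\K((v,\nil))(A)=1_A(v,\nil)$.
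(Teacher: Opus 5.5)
Your proposal is correct and is essentially the paper's proof. Like the paper, you rewrite the bounds of Theorem \ref{th:approx} through $\expc_{\phi_{S,t}}[h]=[S]^t(h\circ\pr_t)\cdot\wt_t/[S]^t\wt_t$, using that the global potential of $\FC_S$ is $\wt_t$. Your second identification (the state is frozen after reaching $\nil$ because $\K((v,\nil))=\delta_{(v,\nil)}$) is exactly the paper's Lemma \ref{lemma:simple}, and your first exceptional set is the complement of $\Theta_t$ in Lemma \ref{lemma:zero}(b).

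Two small differences. The paper obtains $\alpha_t$ by rerunning the lifted-function argument on $\lift{1}_\term$, whereas you use the direct almost-sure identity $1_{T^{\leq t}}=1_\term\circ\pr_t$; both work. Also, the null-set computation for the non-rectangular set $\{\omega_j\in\term,\ \omega_{j+1}\neq\omega_j\}$ needs the iterated-integral (Fubini) form, Theorem \ref{th:prode}(b), as the paper uses, not just the rectangle formula of Theorem \ref{th:prod}.
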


\begin{example}\label{ex:simple3}
Consider again the PPG of Example \ref{ex:simple0}. We can re-compute $\sem S f$ relying on Theorem \ref{th:filtlift}. Fix $t=4$. We first compute the filtering distribution $\phi_t$ on $\X=\Omega=\ereals^3$ relying on its definition \eqref{eq:phit}. Similarly to what we did in Example \ref{ex:simple1}, we   consider the nonzero-weight, nonzero-probability traces of length four. Then we project onto   the final (fourth) state, and compute the weights of  the resulting triples $(c,d,S)$, then normalize. There are only two triples $(c,d,S)$ of nonzero probability: 
\vspace*{-0.2cm}
{\small
	\begin{align*}
		\phi_t(0,0,2)&=\frac 2 3 &
		\phi_t(1,1,2)&=\frac 1 3\,.
		\vspace*{-0.2cm}
	\end{align*}
}\noindent
The function $f$ considered in Example \ref{ex:simple1} is the lifting of the function   $h(c,d,S)=c\cdot  {[S=2]}$ defined on $\X=\ereals^3$. We apply Theorem \ref{th:filtlift} and get $\beta_L= \expc_{\phi_t}[h]=\frac 1 3\leq \sem S f$. Moreover $\expc_{\phi_t}[1_\term] =1$, hence $\alpha_t=1$ according to Theorem \ref{th:filtlift}. Hence $\beta_L=\beta_U=\sem S f = \frac 1 3$. This agrees with   examples   \ref{ex:simple1} and   \ref{ex:simple2}.
\end{example}

We can apply the above theorem to the functions described in 
Example \ref{ex:f} and to
 other computationally challenging cases: we will do so in Section \ref{sec:experiments}, after introducing in the next section the Particle Filtering algorithm.

\paragraph{The Particle Filtering algorithm}\label{sub:algo}
From a computational point of view, our interest in FK models lies in the fact that they allow for a simple, unified presentation of a class of efficient inference algorithms,  known as \emph{Particle Filtering (PF)} \cite{SMC,DelMoral04,wood}. \lui{aggiunto rif. Wood}   For the sake of presentation,  we only introduce here the basic  version, \emph{Bootstrap} PF,  following closely\footnote{\iffull Additional details in Appendix  \ref{app:PF}.\else Additional details in \cite{BC25}.\fi} \cite[Ch.11]{SMC}.
%
Fix  a generic FK model, $\FC=(\X,t,\mu^1,\{K_i\}_{i=2}^t,\{G_i\}_{i=1}^t)$. Fix  $N\geq 1$, the number of \emph{particles}, that is instances of the random process represented by the $K_i$'s, we want to simulate.  Let  $W=W^{1:N}=(W^{(1)},...,W^{(N)})$   be a tuple of $N$ real nonnegative random variables, the  \emph{weights}.
Denote by $\widehat W$ the normalized version of $W$, that is
$
\widehat W^{(i)}=W^{(i)}/(\sum_{j=1}^N W^{(j)})$.
A \emph{resampling scheme} for  $(N,W)$ is a $N$-tuple of random variables $R=(R_1,...,R_N)$  taking values on $1..N$ and depending on $W$, such that, for each $1\leq i\leq N$, 
one has:
$\expc[\sum_{j=1}^N  1_{R^{(j)}=i}\,|W]=N\cdot \widehat W^{(i)}
$.
In other words,      each index $i\in 1..N$   on average   is selected in $R$ a number of times  proportional to its  weight in $W$. We shall write $R(W)$ to indicate that $R$ depends on a given weight vector $W$.  Various   resampling schemes have been proposed in the literature, among which the simplest is perhaps \emph{multinomial resampling}; see e.g.  \cite[Ch.9]{SMC} and references therein.  Algorithm \ref{alg:PF} presents a generic  PF algorithm. Resampling here takes place at step 4: its purpose is to give more importance to particles with higher weight, when extracting the next generation of $N$ particles, while discarding particles with lower weight.

The justification and usefulness of this algorithm is that, under mild assumptions  \cite{SMC},   for any measurable function $h$ defined on $\X$, expectation under $\phi_t$, the filtering distribution on $\X$ at time $t$, in the limit can be expressed a weighted sum with weights $\widehat W^{(j)}_t$:
\vspace*{-0.3cm}
{
\begin{equation}\label{eq:convPF}
	\sum_{j=1}^N \widehat W^{(j)}_t\cdot h(X^{(j)}_t)\, \longrightarrow\,\expc_{\phi_t}[h] \text{\ \ \ \  a.s. as } N\longrightarrow +\infty\,.
	\vspace*{-0.3cm}
\end{equation}
}\noindent
The practical implication here is that we can estimate quite effectively    the expectations   involved in \eqref{eq:filtering}, for $\phi_t=\phi_{S,t}$, as weighted sums like in \eqref{eq:convPF}. 
Note that in the above consistency   statement  $t$ is held fixed --- it is one of the parameter of the FK model --- while the number of particles $N$ tends to $+\infty$.
%

\begin{algorithm}[t]{\small
	\caption{A generic PF algorithm}
	\label{alg:PF}
	{\small
		\begin{algorithmic}[1]
			\Statex \textbf{Input}:  {$\FC=(\X,t,\mu^1,\{K_k\}_{k=2}^t,\{G_k\}_{k=1}^t)$, a FK model; $N\geq 1$, no. of particles.}
			\Statex \textbf{Output}:{$X^{1:N}_{t}\in \X^N$, $W^{1:N}_t\in \reals^{+N}$.}
			\State  $X^{(j)}_1 \sim \mu^1 $\Comment{state initialisation}\tikzmark{top}
			\State $ W^{(j)}_{1} := G_1(X^{(j)}_{1})$\ \ \tikzmark{bottom}\tikzmark{right}\Comment{weight initialisation}
			\For{ $k=2,...,t$ }
			\State $r_{1:N}\sim R(W^{1:N}_{k-1})$\Comment{resampling}
			\State  $X^{(j)}_k \sim K_k(X^{(r_j)}_{k-1})$\ \ \tikzmark{topT}\tikzmark{rightT}\Comment{state update}
			\State $W^{(j)}_{k}:=G_{k}(X^{(j)}_{k})$ \tikzmark{bottomT}\Comment{weight update}
			\EndFor
			\State\Return $(X_t ,W_t )$
		\end{algorithmic}
		\label{key}		\AddNote{top}{bottom}{right}{\ \    ($j=1,...,N$)}
		\AddNote{topT}{bottomT}{rightT}{\ \  ($j=1,...,N$)}
}}
%
\end{algorithm}

\ifmai 
\section{Implementation and experimental validation}\label{sec:experiments}
\paragraph{Implementation}
\vspace*{-0.08cm}
The PPG model is naturally amenable to a vectorized implementation  of PF that leverages the fine-grained, SIMD parallelism existing at the level of particles. At every iteration,  the state of the $N$ particles, $\omega^N=(\omega_1,...,\omega_N)$  with $\omega_i=(v_i,z_i)\in \ereals^{m+1}$, will be stored using a pair of arrays $(V,Z)$  of   shape ${N\times m}$  and   ${N\times 1}$, respectively. The weight vector is stored using another array $W$ of shape ${N\times 1}$. We rely on  {vectorization} of operations: for a  function $f:\ereals^k\rightarrow \ereals$ and a $N\times k$ array $U$,  $f(U)$ will denote the $N\times 1$  array obtained by applying $f$   to each row of $U$. In particular, we denote by $(Z=s)$ (for any $s\in \mathbb{N}$) the $N\times 1$ array obtained applying   element-wise the indicator function $1_{\{s\}}$  to $Z$ element-wise, and by $\varphi(V)$ the $N\times 1$ array obtained by applying the predicate $\varphi$ to $V$. For $U$   a $N\times k$   array and $W$ a $N\times 1$ array, $U\ast W$ denotes the $N\times k$   array obtained by multiplying the   $j$th row of $U$ by the $j$th element of $W$, for $j=1,...,N$: when $W$ is a $0/1$ vector, this is an instance of \emph{boolean masking}. Abstracting the vectorization primitives of modern CPUs and programming languages, we model the assignments  of  a vector to an array variable    as a single instruction,
written $U:=Z$. The usual rules for broadcasting scalars to vectors apply, so e.g. $V:=S$ for $S\in \ereals$ means filling $V$ with $S$.  Likewise, for $\zeta$ a parametric distribution, $U\sim \zeta(V)$ means sampling $N$ times independently from   $\zeta(v_1),...,\zeta(v_N)$, and assigning the resulting matrix to $U$: this too counts as a single instruction.

Based on the above idealized model of vectorized computation,  we present VPF, a vectorized  version of the  PF algorithm for PPGs, as Algorithm \ref{alg:VPF}.   Here it is assumed that $\St\subseteq \mathbb{N}$, while $\psco(s)=\gamma_s$. On line 4, $\text{Resampling}(\cdot)$ denotes the result of applying a generic resampling algorithm based on weights $W$ to the current particles' state, represented by the   pair of vectors $(V,Z)$. With respect to the generic PF Algorithm \ref{alg:PF}, here in  the returned output, $(V,Z)$ corresponds to $X_t$  and $W$ to $W_t$.

\begin{algorithm}[t]{\small
	{\small
		\begin{algorithmic}[1]
			\Statex \textbf{Input}:  {$\ppg=(\St,E,\nil,\psco)$, a PPG; $S\in \St$, initial program checkpoint; $t\geq 1$, time horizon; $N\geq 1$, no. of particles.}
			\Statex \textbf{Output}:{ $V \in \ereals^{m\times N}$, $Z,W\in \ereals^{1\times N}$.}
			\State  $V:=S $\ ;\  $Z:=S $\Comment{state initialisation}
			\State $ W := \gamma_S(Z)$\Comment{weight  initialisation}
			\For{ $t-1\text{ \textbf{times}}$}
			\State $(V,Z):=\mathrm{Resampling}((V,Z),W)$\Comment{resampling}
			\For{ $(s,\varphi,\zeta,s')\in E$ }
			\State $M_{s,\varphi}:=\varphi(V)\ast (Z=s)$\Comment{mask computation}
			\EndFor
			\State  $V \sim \sum_{(s,\varphi,\zeta,s')\in E}\, \zeta(V)\ast M_{s,\varphi}$\ \ ;\ \  $Z := \sum_{(s,\varphi,\zeta,s')\in E} \, \,s'\cdot M_{s,\varphi}$\Comment{state update}
			\State  $W := \sum_{ s \in \St}\, \gamma_s(V)\ast (Z=s)$\Comment{weight update}
			\EndFor
			\State\Return $(V,Z,W)$
		\end{algorithmic}
	}	
	\caption{VPF, a Vectorized PF algorithm  for PPGs.}\label{alg:VPF}}
\vspace*{-0.5cm}
\end{algorithm}

\paragraph{Experimental validation}
We   illustrate some   experimental results obtained with a proof-of-concept  TensorFlow-based \cite{TF} implementation of Algorithm \ref{alg:VPF} (\TSIpf).  We have considered a number of challenging probabilistic programs that feature conditioning inside loops.   For all these programs, we will estimate $\sem S f$, for given functions $f$,  relying on the bounds provided by Theorem \ref{th:filtlift}  in terms of expectations w.r.t. filtering distributions. Such expectations will be estimated via \TSIpf.
We also compare \TSIpf\ with two state-of-the-art PPLs, webPPL \cite{webppl} and CorePPL \cite{CorePPL}.
In \cite{CorePPL},  a comparison of CorePPL with webPPL, Pyro \cite{Pyro} and other PPLs in terms of performance shows the superiority of CorePPL SMC-based inference across a number of benchmarks. 

At least for $N\geq 10^5$,   the tools tend generally to  return   similar estimates of the expected value, which we take    as an empirical evidence of accuracy. Additional insight into accuracy is obtained by directly comparing the results of VPF with those of webPPL-rejection (when available), which is an exact inference algorithm. The expected values estimated by webPPL-rejection  are consistently in line to those of VPF.
In terms of performance, a graphical representation of our results is provided is provided in Figure \ref{fig:scatterplot}, with scatterplots showing the ratio of execution times $(time_{\mathrm{other-tool}} / time_{\mathrm{VPF}})$ on a log scale.
In the case of WebPPL, nearly all data points lie above the x-axis, indicating superiority of VPF. In the case of CorePPL, for $N=10^5$ the data points are  quite uniformly distributed  across the x-axis, indicating basically a tie. For $N=10^6$, we have a majority of points above  the x-axis, indicating again superiority of VPF by \emph{orders of magnitude}; additional details can be found in Appendix \ref{app:exp}.
\fi

\begin{algorithm}[t]{\small
{\small
	\begin{algorithmic}[1]
		\Statex \textbf{Input}:  {$\ppg=(\St,E,\nil,\psco)$, a PPG; $S\in \St$, initial pr. checkpoint; $t\geq 1$, time horizon; $N\geq 1$, no. of particles.}
		\Statex \textbf{Output}:{ $V \in \ereals^{m\times N}$, $Z,W\in \ereals^{1\times N}$.}
		\State  $V:=S $\ ;\  $Z:=S $\Comment{state initialisation}
		\State $ W := \gamma_S(Z)$\Comment{weight  initialisation}
		\For{ $t-1\text{ \textbf{times}}$}
		\State $(V,Z):=\mathrm{Resampling}((V,Z),W)$\Comment{resampling}
		\For{ $(s,\varphi,\zeta,s')\in E$ }
		\State $M_{s,\varphi}:=\varphi(V)\ast (Z=s)$\Comment{mask computation}
		\EndFor
		\State  $V \sim \sum_{(s,\varphi,\zeta,s')\in E}\, \zeta(V)\ast M_{s,\varphi}$\ \ ;\ \  $Z := \sum_{(s,\varphi,\zeta,s')\in E} \, \,s'\cdot M_{s,\varphi}$\Comment{state update}
		\State  $W := \sum_{ s \in \St}\, \gamma_s(V)\ast (Z=s)$\Comment{weight update}
		\EndFor
		\State\Return $(V,Z,W)$
	\end{algorithmic}
}	
\caption{VPF, a Vectorized PF algorithm  for PPGs.}\label{alg:VPF}}
\vspace*{-0.1cm}
\end{algorithm}

\section{Implementation and experimental validation}\label{sec:experiments}
\subsection{Implementation}
The PPG model is naturally amenable to a vectorized implementation  of PF that leverages the fine-grained, SIMD parallelism existing at the level of particles. At every iteration,  the state of the $N$ particles, $\omega^N=(\omega_1,...,\omega_N)$  with $\omega_i=(v_i,z_i)\in \ereals^{m+1}$, will be stored using a pair of arrays $(V,Z)$  of   shape ${N\times m}$  and   ${N\times 1}$, respectively. The weight vector is stored using another array $W$ of shape ${N\times 1}$. We rely on  {vectorization} of operations: for a  function $f:\ereals^k\rightarrow \ereals$ and a $N\times k$ array $U$,  $f(U)$ will denote the $N\times 1$  array obtained by applying $f$   to each row of $U$. In particular, we denote by $(Z=s)$ (for any $s\in \mathbb{N}$) the $N\times 1$ array obtained applying   element-wise the indicator function $1_{\{s\}}$  to $Z$ element-wise, and by $\varphi(V)$ the $N\times 1$ array obtained by applying the predicate $\varphi$ to $V$  to the row-wise. For $U$   a $N\times k$   array and $W$ a $N\times 1$ array, $U\ast W$ denotes the $N\times k$   array obtained by multiplying the the $j$th row of $U$ by the $j$th element of $W$, for $j=1,...,N$: when $W$ is a $0/1$ vector, this is an instance of \emph{boolean masking}. Abstracting the vectorization primitives of modern CPUs and programming languages, we model the assignments  of  a vector to an array variable    as a single instruction,
written $U:=Z$. The usual rules for broadcasting scalars to vectors apply, so e.g. $V:=S$ for $S\in \ereals$ means filling $V$ with $S$.  Likewise, for $\zeta$ a parametric distribution, $U\sim \zeta(V)$ means sampling $N$ times independently from   $\zeta(v_1),...,\zeta(v_N)$, and assigning the resulting matrix to $U$: this too counts as a single instruction.

Based on the above idealized model of vectorized computation,  we present VPF, a vectorized  version of the  PF algorithm for PPGs, as Algorithm \ref{alg:VPF}.   Here it is assumed that $\St\subseteq \mathbb{N}$, while $\psco(s)=\gamma_s$. On line 4, $\text{Resampling}(\cdot)$ denotes the result of applying a generic resampling algorithm based on weights $W$ to the current particles' state, represented by the   pair of vectors $(V,Z)$. With respect to the generic PF Algorithm \ref{alg:PF}, here in  the returned output, $(V,Z)$ corresponds to $X_t$  and $W$ to $W_t$. Note that there are no loops where the number of iterations depends on $N$;  the \textbf{for} loop in lines 5--7   only scans the transitions set $E$, whose size is independent of $N$. Line 8 is just a vectorized implementation of sampling from the Markov kernel function in \eqref{eq:FCM}. Line 9 is a vectorized implementation of the combined score function \eqref{eq:scoz}.
In the actual TensorFlow implementation, the sums in lines 8 and 9  are encoded via boolean masking and vectorized operations.

\begin{example}
	Consider the PPG of Example \ref{ex:dm1} and   $f=f_2$ from Example \ref{ex:f}. Take $S = 0$.
	Then $[\![S]\!]_f$ is the posterior expectation of the value of $d$, the drunk
	man's standard deviation. We compute upper and lower bounds on $[\![S]\!]_f$, as defined in Theorem \ref{th:filtlift}, applying  Algorithm \ref{alg:VPF}. Let us fix $t = 60$ and $N = 10^4$ particles. Under the filtering distribution $\phi_{S,t}$, we get $E_{\phi_{S,t}}[h] = 0.768, \alpha_t = 1.029$.
	Combining these estimates as in \eqref{eq:filtering}, with $M = 2$, we obtain
	the bounds $0.768 \;\leq\; [\![S]\!]_f \;\leq\; 0.847$.
	The
	value $\alpha_t = 1.029 \approx 1$ indicates that nearly all particles have 	terminated by time $t = 60$ under $\phi_{S,t}$.
\end{example}


\subsection{Experimental validation}
We   illustrate some   experimental results obtained with a proof-of-concept  TensorFlow-based \cite{TF} implementation of Algorithm \ref{alg:VPF}. We still refer to this implementation as \TSIpf.  We have considered a number of challenging probabilistic programs that feature conditioning inside loops.   For all these programs, we will estimate $\sem S f$, for given functions $f$,  relying on the bounds provided by Theorem \ref{th:filtlift}  in terms of expectations w.r.t. filtering distributions. Such expectations will be estimated via \TSIpf.
We also compare \TSIpf\ with two state-of-the-art PPLs, webPPL \cite{webppl} and CorePPL \cite{CorePPL}. webPPL is  a popular PPL supporting several inference algorithms, including SMC, where resampling is handled via continuation passing.   We have chosen to   consider CorePPL by Lunden et al. as it supports a very efficient implementation of PF: in \cite{CorePPL},  a comparison of CorePPL with webPPL, Pyro \cite{Pyro} and other PPLs in terms of performance shows the superiority of CorePPL SMC-based inference across a number of benchmarks.
CorePPL's implementation  is based on a compilation into an intermediate format, conceptually similar to our PPGs.

\paragraph{Models} For our experiments we have considered the following  programs:
 \emph{Aircraft tracking} (AT, \cite{WuEtAl}), \emph{Drunk man and mouse} (DMM, Example \ref{ex:dm1}), \emph{Hare and tortoise} (HT, e.g. \cite{Bagnall}), \emph{Bounded retransmission protocol} (BRP, \cite{5}), \emph{Non-i.i.d. loops} (NIID, e.g. \cite{5}), the \emph{ZeroConf} protocol (ZC, \cite{2}), and two variations of \textit{Random Walks},  RW1 (\cite{VMCAI24}, Example 2) and RW2 in the following. In particular, AT is a model where a single aircraft is tracked in a 2D space using noisy measurements from six radars.
  HT simulates a race between a hare and a tortoise on a discrete line. BRP models a scenario where multiple packets are transmitted over a lossy channel.    NIID   describes a process that keeps tossing two fair coins until both show tails. ZC is an idealized version of the network connection protocol by the same name.  RW1, RW2 are random walks with Gaussian steps. The pseudo-code of these models  is reported in  Appendix \ref{app:models}.
These programs feature conditioning/scoring inside loops.
In particular, DMM, HT and NIID feature unbounded loops: for these three programs, in the case of \TSIpf\ we have truncated the execution after   $k=1000,100,100$ iterations, respectively, and set the time parameter $t$ of Theorem \ref{th:filtlift} accordingly, which allows us to deduce bounds on the value of $\sem S f$.  For the other tools, we just consider the   truncated estimate returned  at the end of   $k$ iterations.  AT, BRP, ZC,  RW1 and RW2  feature bounded loops, but are nevertheless quite challenging. In particular,  AT   features multiple conditioning inside a for-loop,  sampling from a mix of continuous and discrete distributions, and noisy observations.
Below, we discuss the obtained experimental results in terms of accuracy and performance.
A   description  of these programs, together with  further details on the experimental set up, can be found in \iffull Appendix \ref{app:air}; \else \cite{BC25,github}. \fi
 code available from \cite{github}. 
Table \ref{tab:table1}  summarizes the obtained experimental results, which we comment in the following.

{\small
\begin{sidewaystable}
	\captionsetup{width=\textwidth}
	\vspace{15cm}
	{ \renewcommand{\arraystretch}{0.9}		
		\centering
		\resizebox{\textwidth} {!}{
			\begin{tabular}{|c|c||c|c|c||c|c|c||c|c|c||c|c|c||c|c|c|}
				\hline
				\multicolumn{2}{|c||}{\multirow{2}{*}{}} & \multicolumn{3}{c||}{\small{\textbf{AT}}} & \multicolumn{3}{c||}{\small{\textbf{DMM}}} & \multicolumn{3}{c||}{\small{\textbf{HT}}} & \multicolumn{3}{c||}{\small{\textbf{BRP}}} & \multicolumn{3}{c|}{\small{\textbf{NIID}}} \\
				\cline{3-17}
				\multicolumn{2}{|c||}{}& \small{\textbf{VPF}} & \small{\textbf{CorePPL}} &  \small{\textbf{webPPL-smc}} &\small{\textbf{VPF}}& \small{\textbf{CorePPL}} &  \small{\textbf{webPPL-smc}} & \small{\textbf{VPF}} & \small{\textbf{CorePPL}} &  \small{\textbf{webPPL-smc}} &\small{\textbf{VPF}} & \small{\textbf{CorePPL}} & \small{\textbf{webPPL-smc}} & \small{\textbf{VPF}} & \small{\textbf{CorePPL}} & \small{\textbf{webPPL-smc}} \\
				\hline
				
				\multirow{3}{*}{$N=10^3$}
				&\textit{\scriptsize time}
				&\textbf{0.009}&0.014&0.190
				& 2.348  &\textbf{0.050}&0.474
				&0.274&\textbf{0.015}&0.152
				&0.676&\textbf{0.021}&0.155
				&0.240&\textbf{0.010}&0.061 \\
				\cline{2-17}
				&\textit{\scriptsize EV}
				&6.805&6.955&6.696
				&0.478$\pm 0.110$&0.501&0.427
				&32.834&33.683&32.368
				&0.018&0.016&0.023
				&3.594&2.694&3.473 \\
				\cline{2-17}
				&\textit{\scriptsize ESS}
				&\textbf{1000}&\textbf{1000}&{999}
				&\textbf{994.6}&726.9&9.73
				&\textbf{955.0}&758.9&951.1
				&\textbf{1000}&\textbf{1000}&974.5
				&\textbf{1000}&846.6&726.9 \\
				\hline
				\multirow{3}{*}{$N=10^4$}
				&\textit{\scriptsize time}
				&\textbf{0.131}&0.194&3.842
				&{1.947}&\textbf{0.988}&7.190
				&0.290&\textbf{0.180}&3.839
				&0.786&\textbf{0.309}&1.328
				&0.323&\textbf{0.058}&0.490 \\
				\cline{2-17}
				&\textit{\scriptsize EV}
				&6.817&6.967&6.760
				&0.539$\pm 0.115$&0.498&0.481
				&32.725&33.474&32.702
				&0.029&0.025&0.024
				&3.364&2.766&3.417 \\
				\cline{2-17}
				&\textit{\scriptsize ESS}
				&$\mathbf{10^4}$&$\mathbf{10^4}$&\textit{9975}
				&\textbf{9984.5}&7797.4&69.417
				&\textbf{9445.0}&7692.9&9476.2
				&$\mathbf{10^4}$&$\mathbf{10^4}$&9745.8
				&$\mathbf{10^4}$&8555.6&7560.5 \\
				\hline

				\multirow{3}{*}{$N=10^5$}
				&\textit{\scriptsize time}
				&\textbf{0.354}&{2.252}&-
				&\textbf{2.268}&29.936&-
				&\textbf{0.379}&4.225&361.792
				&\textbf{0.797}&5.010&15.038
				&\textbf{0.445}&1.083&92.419 \\
				\cline{2-17}
				&\textit{\scriptsize EV}
				&6.818&6.970&-
				&0.537$\pm 0.120$&0.507&-
				&33.128&33.545&32.560
				&0.024&0.025&0.026
				&3.467&2.772&3.430 \\
				\cline{2-17}
				&\textit{\scriptsize ESS}
				&$\mathbf{10^5}$&9.9\text{e}$\mathbf{10^5}$&-
				&$\approx\mathbf{10^5}$&$7.6\text{e}10^4$&-
				&$\approx\mathbf{10^5}$&$7.7\text{e}10^4$&$\approx\mathbf{10^5}$
				&$\mathbf{10^5}$&$\mathbf{10^5}$&$9.7\text{e}10^4$
				&$\mathbf{10^5}$&$8.5\text{e}10^4$&$7.6\text{e}10^4$\\
				\hline

				\multirow{3}{*}{$N=10^6$}
				&\textit{\scriptsize time}
				&\textbf{2.286}&26.481&-
				&\textbf{38.977}&-&-
				&\textbf{3.749}&49.493&-
				&\textbf{10.155}&58.448&-
				&\textbf{2.916}&14.323&- \\
				\cline{2-17}
				&\textit{\scriptsize EV}
				&6.834&6.980&-
				&0.541$\pm 0.111$&-&-
				&33.432&33.606&-
				&0.024&0.025&-
				&3.413&2.774&- \\
				\cline{2-17}
				&\textit{\scriptsize ESS}
				&$\mathbf{10^6}$&9.9\text{e}$10^5$&-
				&$\approx\mathbf{10^6}$&-&-
				&$\approx\mathbf{10^6}$&$7.7\text{e}{10^5}$&-
				&$\mathbf{10^6}$&$\mathbf{10^6}$&-
				&$\mathbf{10^6}$&$8.5\text{e}{10^5}$&- \\
				\hline
				\small{\textbf{webPPL-rej}}&\textit{\scriptsize EV}&\multicolumn{3}{c||}{-}&\multicolumn{3}{c||}{0.494}&\multicolumn{3}{c||}{32.683}&\multicolumn{3}{c||}{0.023}&\multicolumn{3}{c|}{3.414}\\
				\hline
			\end{tabular}
		}
	}
	$ $\\
	$ $\\
	{ \renewcommand{\arraystretch}{0.6}		
		\centering
		\resizebox{\textwidth} {!}{
			\begin{tabular}{|c|c||c|c|c||c|c|c||c|c|c||c|c|c||c|c|c|}
				\hline
				\multicolumn{2}{|c||}{\multirow{2}{*}{}} & \multicolumn{3}{c||}{\textbf{RW1}} &  \multicolumn{3}{c||}{\textbf{ZC.1}}  &  \multicolumn{3}{c||}{\textbf{ZC.2}} &  \multicolumn{3}{c||}{\textbf{RW2.1}}&  \multicolumn{3}{c|}{\textbf{RW2.2}} \\
				\cline{3-17}
				\multicolumn{2}{|c||}{}& \small{\textbf{VPF}} & \small{\textbf{CorePPL}} & \small{\textbf{webPPL}} &\small{\textbf{VPF}} & \small{\textbf{CorePPL}} & \small{\textbf{webPPL}}&\small{\textbf{VPF}} & \small{\textbf{CorePPL}} & \small{\textbf{webPPL}}&\small{\textbf{VPF}} & \small{\textbf{CorePPL}} & \small{\textbf{webPPL}}&\small{\textbf{VPF}} & \small{\textbf{CorePPL}} & \small{\textbf{webPPL}}\\
				\hline
				
				\multirow{3}{*}{$N=10^3$}
				&\textit{\scriptsize time}
				&0.192&\textbf{0.009}&0.045
				&0.206&\textbf{0.018}&0.083
				&0.262&\textbf{0.016}&0.049
				&0.231&\textbf{0.024}&0.232
				&0.232&\textbf{0.021}&0.187\\
				
				\cline{2-17}
				&\textit{\scriptsize EV}
				&$0.323$&0.324&0.343
				&$0.212$&0.142&0.250
				&$0.514$&0.477&0.483
				&$1.046$&0.642&0.912
				&$0.677 $&0.750&1.092\\
				\cline{2-17}
				&\textit{\scriptsize ESS}
				&537.0&\textbf{1000}&46.739
				&\textbf{1000}&\textbf{1000}&392.2
				&\textbf{1000}&\textbf{1000}&245.2
				&\textbf{992.0}& 780.0&479.9
				&\textbf{999.0}&997.0&133.9\\
				\hline
				\multirow{3}{*}{$N=10^4$}
				&\textit{\scriptsize time}
				&0.238&\textbf{0.031}&0.269
				&0.349&\textbf{0.068}&0.610
				&0.325&\textbf{0.029}&0.207
				&0.285&\textbf{0.186}&3.043
				&\textbf{0.271}&0.275&2.081\\
				\cline{2-17}
				&\textit{\scriptsize EV}
				&$0.334$&0.328&0.336
				&$0.242$&0.129&0.232
				&$0.483$&0.474&0.478
				&$1.367$&0.856&1.066
				&$0.982$&0.929&1.083\\
				\cline{2-17}
				&\textit{\scriptsize ESS}
				&5163.0&\textbf{10000}&562.795
				& \textbf{10000}&\textbf{10000}&4263.0
				&\textbf{10000}&\textbf{10000}&2446.8
				&\textbf{9967.0}&7529.0&4026.6
				&\textbf{9701.0}&9350.9&582.5\\
				\hline
				\multirow{3}{*}{$N=10^5$}
				&\textit{\scriptsize time}
				&0.436&\textbf{0.260}&7.956
				& \textbf{0.479}&0.558&5.778
				&0.378&\textbf{0.243}&1.673
				&\textbf{0.405}&3.003&181.802
				&\textbf{0.290}&3.887&149.831\\
				\cline{2-17}
				&\textit{\scriptsize EV}
				&$0.337$&0.328&0.332
				&$0.174$&0.131&0.233
				&$0.493$&0.479&0.479
				&$1.009$&0.998&0.982
				&$1.040$&0.982&1.037\\
				\cline{2-17}
				&\textit{\scriptsize ESS}
				&$5.1\text{e}{10^4}$&$\mathbf{10^5}$&$5.7\text{e}{10^3}$
				&$\mathbf{10^5}$&$\mathbf{10^5}$&$4.2\text{e}{10^4}$
				&$\mathbf{10^5}$&$\mathbf{10^5}$&$2.4\text{e}{10^4}$
				&$\approx\mathbf{10^5}$&$7.3\text{e}{10^4}$&$4.7\text{e}{10^4}$
				 &$\approx\mathbf{10^5}$&$9.5\text{e}{10^4}$&$2.0\text{e}{10^3}$\\
				\hline

				\multirow{3}{*}{$N=10^6$}
				&\textit{\scriptsize time}
				&3.422&\textbf{2.569}&-
				&\textbf{3.829}&5.790&-
				&3.928&\textbf{2.485}&-
				&\textbf{3.742}&35.271&-
				&\textbf{3.595}&42.134&-\\
				\cline{2-17}
				&\textit{\scriptsize EV}
				&$0.329$&0.329&-
				&$0.245$&0.130&-
				&$0.479$&0.480&-
				&$1.011$&1.001&-
				&$1.023$&1.002&-\\
				\cline{2-17}
				&\textit{\scriptsize ESS}
				&$5.2\text{e}{10^5}$&$\mathbf{10^6}$&-
				&$\mathbf{10^6}$&$\mathbf{10^6}$&-
				&$\mathbf{10^6}$&$\mathbf{10^6}$&-
				&$\approx\mathbf{10^6}$&$7.3\text{e}{10^5}$&-
				&$\approx\mathbf{10^6}$&$9.4\text{e}10^5$&-\\
				\hline
				\small{\textbf{webPPL-rej}}&\textit{\scriptsize EV}&\multicolumn{3}{c||}{0.332}&\multicolumn{3}{c||}{0.235}&\multicolumn{3}{c||}{0.479}&\multicolumn{3}{c||}{1.022}&\multicolumn{3}{c|}{1.061}\\
				\hline
			\end{tabular}
		}
	}
	\caption{Execution time ($time$) in seconds, estimated expected value ($EV$) and effective sample size ($ESS:=(\sum_{i=1}^N W_i)^2/(\sum_{i=1}^N W^2_i)$; the higher the better, see e.g. \cite{RobertESS}) as the number of particles ($N$) increases, for \TSIpf, CorePPL and webPPL, when applied on Aircraft tracking (AT), Drunk man and mouse (DMM), Hare and tortoise (HT), Bounded retransmission protocol (BRP), Non-i.i.d. loops (NIID), ZeroConf (ZC.1, ZC.2) and  Random Walks (RW1  and RW2.1, RW2.2). For \TSIpf,  with reference to Theorem \ref{th:filtlift}:  for the bounded loops AT, BRP, RW1, RW2.1, RW2.2, ZC.1 and ZC.2, we have $EV=\beta_L=\beta_U$ (as $\alpha_t=1$); for HT and NIID, we only provide $\beta_L$, as $\beta_U$ is vacuous $(M=+\infty)$. For DMM we give the midpoint of the interval $[\beta_L,\beta_U]$ $\pm$ its half-width. Best results for $time$ and $ESS$ for each example and value of $N$ are marked in \textbf{boldface}. Everywhere,  '$-$' means  no result due to out-of-memory   or  timeout ($500$s). The results for DDM, especially for smaller values of $N$, exhibit a significant empirical variance:  those reported in the table are obtained by averaging over 10 runs of each algorithm. Generally, there is an  agreement  across the tools about the estimates  $EV$: an exception is  NIID, where CorePPL returns values significantly different from the other tools' and from the exact value $\frac{24}7=3.428\cdots$, cf. \cite{5}. Also, for DMM the  EV estimates returned by CorePPL and webPPL  appear  to be consistently lower than the midpoint of the interval returned by \TSIpf.}
	\label{tab:table1}
\end{sidewaystable}	
}

\paragraph{Accuracy}
We have compared \TSIpf, CorePPL and webPPL across the above mentioned examples for different values of $N$, the number of particles
(details  in Table \ref{tab:table1}).
At least for $N\geq 10^5$,   the tools tend generally to  return very  similar estimates of the expected value, which we take    as an empirical evidence of accuracy. Additional insight into accuracy is obtained by directly comparing the results of VPF with those of webPPL-rejection (when available), which is an exact inference algorithm: the expected values estimated by webPPL-rejection  are consistently in line to those of VPF.  We have also considered Effective Sample Size (ESS), a measure of   diversity of the sample,  the higher the better \cite{RobertESS}. In terms of ESS, the difference across the tools is significant: with one exception (program RW1),  VPF  yields ESS that are higher  or comparable to those of the other tools. We refer the reader to Appendix \ref{app:air} for additional explanation, in particular as to the significance of the mentioned exception.
%
%
\paragraph{Execution time} For larger values of $N$   \TSIpf\  generally outperforms   the other   considered tools   in terms of execution time. The difference is especially noticeable  for $N=10^6$.
Figure \ref{fig:scatterplot} provides a graphical comparison,  with scatterplots showing the ratio of execution times $(time_{\mathrm{other-tool}} / time_{\mathrm{VPF}})$ on a log scale, across the different examples (actual data points in   Table \ref{tab:table1}).
In the case of WebPPL, nearly all data points lie above the x-axis, indicating superiority of VPF. In the case of CorePPL, for $N=10^5$ the data points are  quite uniformly distributed  across the x-axis, indicating basically a tie. For $N=10^6$, we have a majority of points above  the x-axis, indicating again superiority of VPF, sometimes by orders of magnitude.
\newcommand{\scalefactor}{8.0}
\begin{figure}[t]

\includegraphics[width=4.2cm,height=3.35cm]{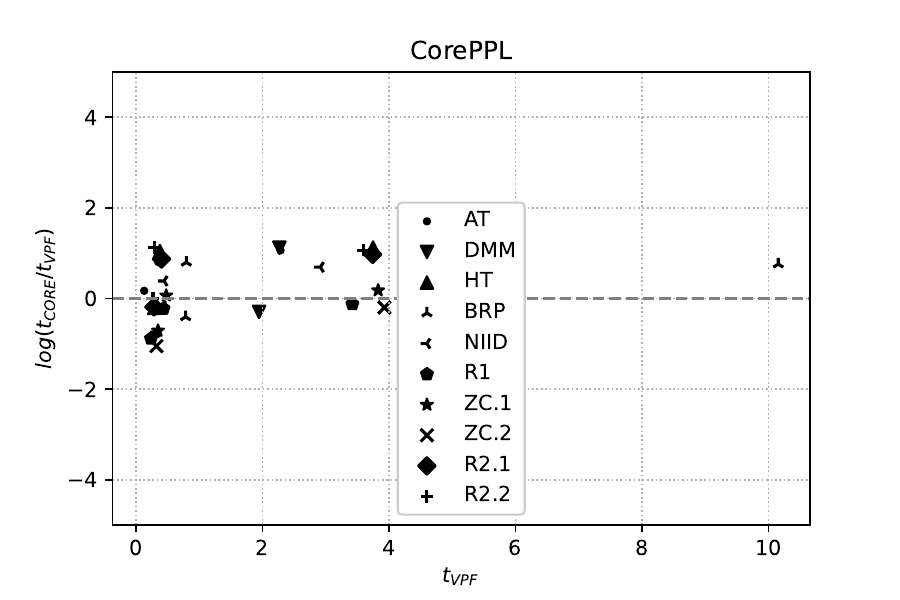}\hspace*{-0.5cm}
\includegraphics[width=4.2cm,height=3.35cm]{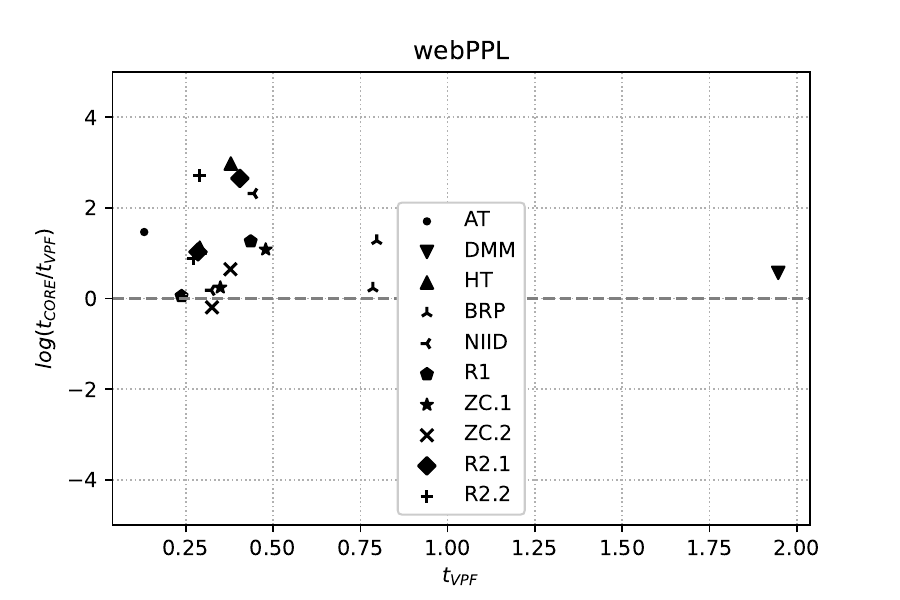}\hspace*{-0.4cm}
\includegraphics[width=4.2cm,height=3.35cm]{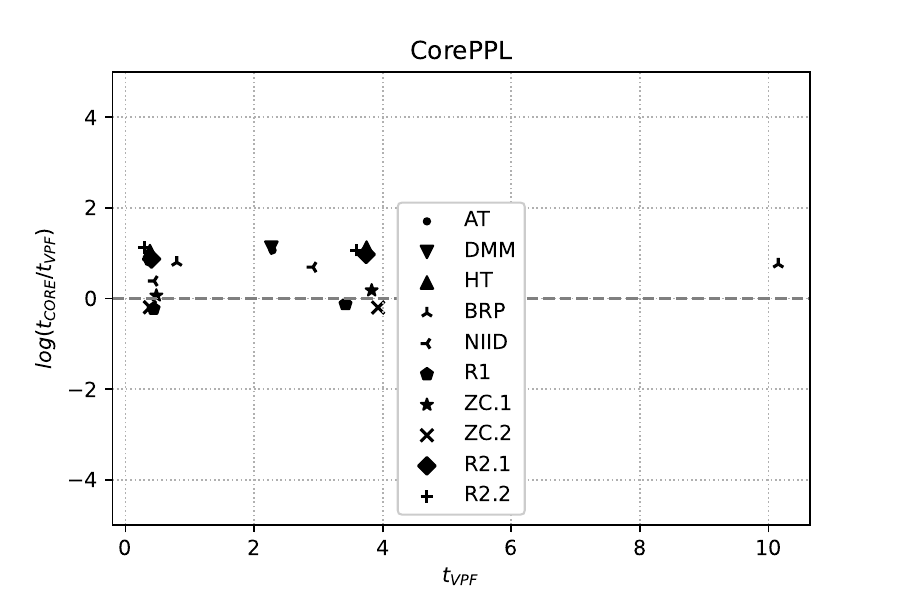}\hspace*{-0.5cm}
\includegraphics[width=4.2cm,height=3.35cm]{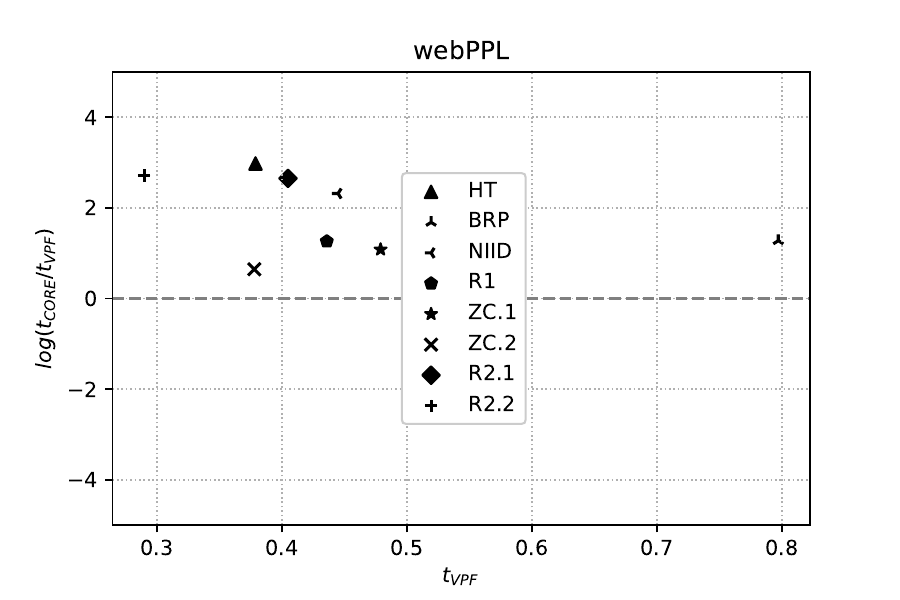}\hspace{+0.3cm}
	\vspace{-0.7cm}
\caption{\footnotesize
	For $N=10^5,10^6$, scatterplots of the log-ratios of execution times, $\log_{10}(\mathrm{time}_{{tool}}/\mathrm{time}_{\mathrm{VPF}})$, based on   data points in  Table \ref{tab:table1}. From left to right: $N=10^5$, $tool=$ CorePPL;  $N=10^5$, $tool=$ WebPPL-smc;  $N=10^6$, $tool=$ CorePPL;  $N=10^6$, $tool=$ WebPPL-smc.
	For $N=10^6$,  the vast majority of the data points lie above the x-axis, indicating superior performance of VPF across different examples.	 An enlarged version of these plots is reported in 	Appendix \ref{app:air}.
}
\label{fig:scatterplot}
\vspace{-0.5cm}
\end{figure}

A closer look in the $N=10^6$  case reveals that the only programs where CorePPL beats VPF
are RW1 and ZC. This is most likely due to the low probability of conditioning  in these programs; for instance in  RW1  just a single final conditioning is performed. As in CorePPL   resampling  is only performed following a conditioning, this may   explain its lower execution times in these cases.  To further investigate this issue, we consider   a version of RW2  where the probability of conditioning is governed by a parameter $\lambda\in [0,1]$, and run it for different values of $\lambda$.
The obtained results are showed in Figure \ref{fig:timeev}. We   observe that for both CorePPL and WebPPL execution time tends to increase  as the probability  $\lambda$ of conditioning  increases; on the contrary,  the execution time of VPF appears to be insensitive to  $\lambda$. This suggests that VPF has a definite advantage over tools with explicit resample,  on models with heavy conditioning.
%
The difference between CorePPL and VPF is  further analyzed  in the next paragraph   in terms of scalability.
\begin{figure}[t]
	\centering
	\includegraphics[width=4.8cm,height=3.7cm]{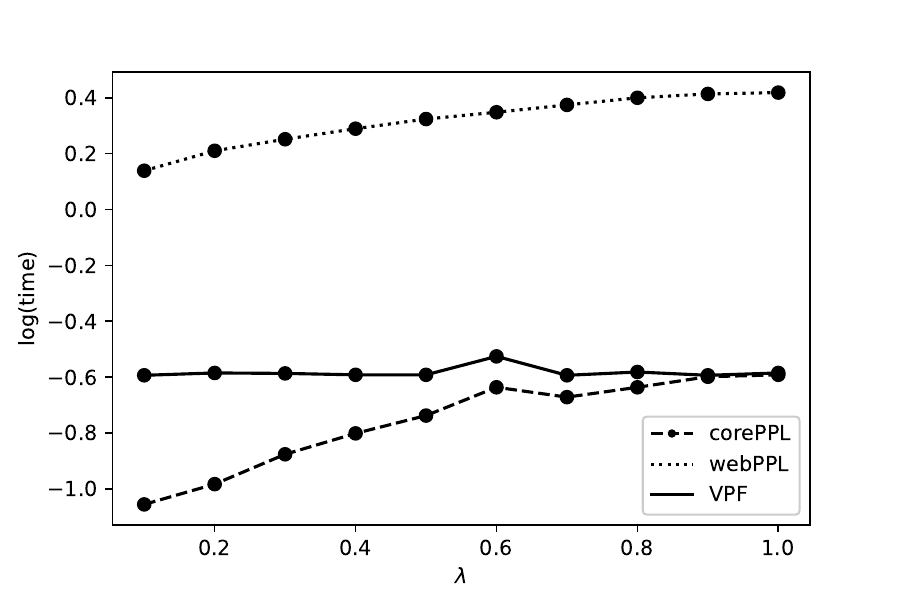}
	\includegraphics[width=4.8cm,height=3.7cm]{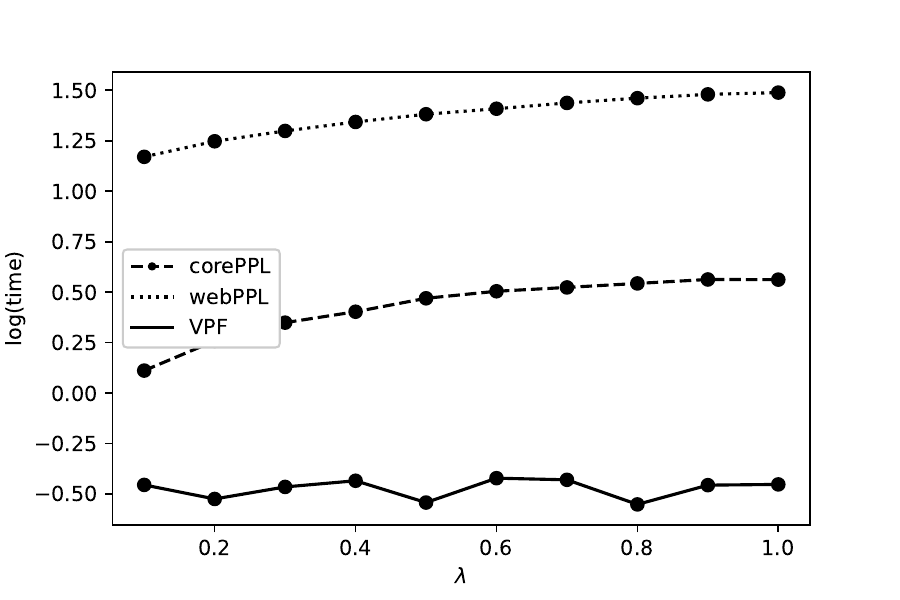}
	\includegraphics[width=4.8cm,height=3.7cm]{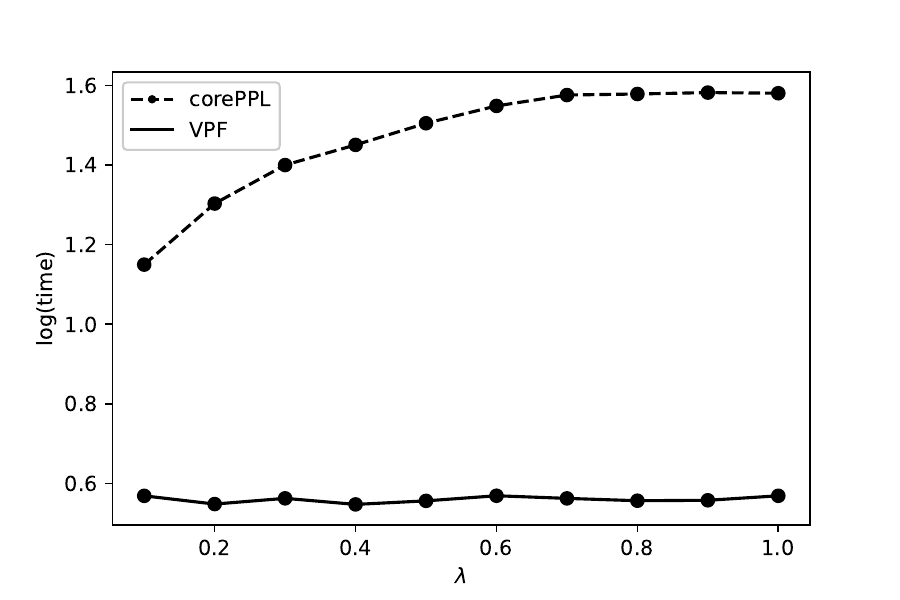}
	\caption{\scriptsize Execution times (in seconds) for the RW2 program, as a function of the 	 probability $\lambda$  of conditioning on external data for $N=10^4$ (left), $N=10^5$ (center) and $N=10^6$ (right). webPPL   missing from the right-most plot due to time-out. Execution times of VPF are basically insensitive to  $\lambda$.}
	\label{fig:timeev}
\vspace*{-.5cm}
\end{figure}

%

\paragraph{Scalability}
\begin{wrapfigure}{r}{4cm}
	\vspace*{-.81cm}
	\includegraphics[width=4.0cm,height=3.6cm]{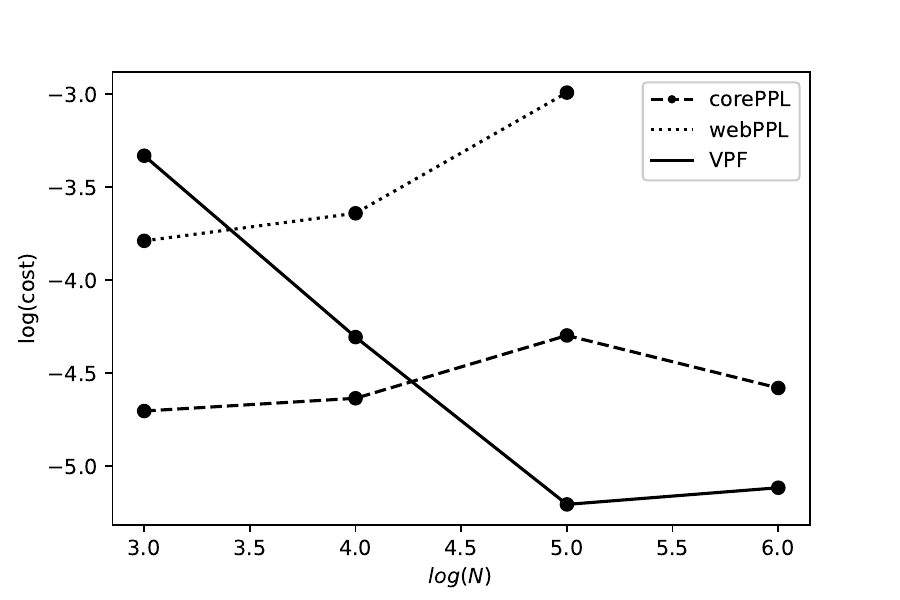}
\end{wrapfigure}
The plot on the right shows the behaviour the \emph{average unit cost (per particle)} of VPF, CorePPL and WebPPL across all the models we analyzed for $N=10^3,...,10^6$ on a log-scale. Here, for each $N$ the average unit cost (in seconds) across $k$ models is $(t_1+t_2+..+t_k)/(N\cdot k)$, with $t_i$ the execution time of the $i$-th model. Consistently with Figure \ref{fig:timeev}, we   observe that the cost of VPF decreases as the number of samples increases, whereas the cost of the other tools remains constant or increases (webPPL).


\ifmai
\section{Conclusion}\label{sec:concl}
\vspace*{-.2cm}
We study   correct and efficient implementations of Sequential Monte Carlo  inference algorithms for   universal probabilistic programs. We offer a clean trace-based operational semantics for PPGs,
a finite approximation theorem and  consistency of the PF algorithm via a connection to FK models.
 Experiments conducted with VPF, a vectorized version of PF tailored to PPGs, show  very promising results.
\fi

\section{Conclusion}\label{sec:concl}
We have proposed a framework for inference in universal probabilistic programs that combines a rigorous  trace-based operational semantics with practical algorithmic realizability.
The approach builds on a structured operational model and connects to classical tools in stochastic process theory (FK models), leading to an inference method that is both theoretically sound and efficiently implementable.
We summarize the main insights of our approach and possible future directions below.

\paragraph{Structured semantics via PPGs (Sections \ref{sec:PP} and \ref{sec:obs})}
Our framework is centered around   an automata-theoretic description format for programs, {Probabilistic Program Graphs} (PPGs). 
In PPGs,   transitions encode  sampling behavior, while nodes represent conditioning checkpoints via score functions. This structure supports a rigorous infinite-trace semantics  and facilitates the alignment of computations, a feature that becomes crucial for vectorized implementations.

\paragraph{Approximation via Feynman-Kac models (Sections \ref{sec:FA} and \ref{sec:MC})}
The main theoretical contribution is a novel connection between the infinite-trace semantics of PPGs and  {Feynman-Kac} (FK) models, a standard tool in the analysis of state-based stochastic processes.  The expected values of a broad class of \emph{prefix-closed} functions over infinite traces can be expressed in terms of finite, truncated computations. In particular, Theorem~\ref{th:approx} establishes that these expectations can be bounded by quantities defined over finite-length traces, making inference tractable in the presence of unbounded loops and conditioning. Theorem~\ref{th:tight} shows that these bounds converge to the exact value under mild assumptions. We then establish a connection with FK models: in particular Theorem~\ref{th:filtlift}  reformulates the approximation bounds in terms of the \emph{filtering} distributions of a  PPG-induced FK model —distributions that can be consistently and efficiently estimated via particle filtering (PF).

\paragraph{Efficient and parallel inference (Section \ref{sec:experiments})}
A central insight of our approach is that the operational structure of PPGs enables a naturally parallelizable implementation of inference. Since all particles evolve synchronously (in lock-step) through the same control-flow graph (PPG), and conditioning  is applied via score functions in a uniform, aligned fashion, our particle filtering algorithm maps directly to SIMD-style vectorized execution. This design avoids the particle misalignment issues that affect continuation-based or functional semantics for PPLs. As a result, our operational model is easily mapped into  modern hardware architectures supporting data-level parallelism.
%
Our vectorized implementation of a PPG-based particle filter, VPF, practically demonstrates the effectiveness of our approach. On challenging examples involving nested conditioning and unbounded loops, VPF matches or outperforms state-of-the-art probabilistic programming systems in both accuracy and runtime. 

\paragraph{Future directions}
On the practical side, developing   compilers from high-level PPLs to PPGs, extending the framework to richer type systems and data structures is a natural next-step. Combining a sampling-based approach with symbolic or constraint-based reasoning techniques is a challenging theoretical direction.

\iffull
\newpage
\appendix
\section{Proofs}\label{app:proofs}
The following result, which subsumes Theorem \ref{th:prod}, is   well known from measure theory. The formulation below  is a specialization of \cite[Th.2.6.7]{Ash} to Markov kernels and nonnegative functions.
Part (a) gives a way to construct a measure on the product space $\Omega^t$, starting from an initial measure $\mu^1$   and $t-1$ Markov kernels. The  product space is, intuitively,      the sample space of the paths of length $t$ of a Markov chain. In particular, a path of length $t=1$  consists of just an initial state --- no transition has been fired.  Part    (b) is a generalization  of   Fubini theorem, which allows one to express  an integral over the product space w.r.t. the measure of part (a) in terms of iterated integrals over the component spaces. Below, we will let $\omega^t$   range over $\Omega^t$.
\begin{theorem}[product of measures]\label{th:prode}
Let  $t\geq 1$ be an integer. Let $\mu^1$ be a probability measure on $\Omega$ and $K_2,...,K_t$ be $t-1$   (not necessarily distinct) Markov kernels from $\Omega$ to $\Omega$.
\begin{itemize}
	\item[(a)] There is a unique probability measure $\mu^t$ defined on $(\Omega^t,\F^t)$ such that for every $A_1\times \cdots\times A_t\in \F^t$ we have:
	{\small\begin{equation}\label{eq:prodmeas}
			\begin{array}{rcl}
				\mu^t(A_1\times \cdots\times A_t)&=&\int_{A_1} \mu^1(d\omega_1)\int_{A_2}K_2(\omega_1)(d\omega_2)\cdots \int_{A_t}K_t(\omega_{t-1})(d\omega_t)\,.
			\end{array}
	\end{equation}}\noindent
	\item[(b)] (Fubini) Let $f$ be a nonnegative measurable function defined on $\Omega^t$. Then, letting $\omega^t=(\omega_1,...,\omega_t)$, we have
	{\small\begin{equation}\label{eq:prodInt}
			\begin{array}{rcl}
				\int \mu^t(\omega^t)f(\omega^t) &=&\int  \mu^1(d\omega_1)\int K_2(\omega_1)(d\omega_2)\cdots \int K_t(\omega_{t-1})(d\omega_t)f(\omega^t)\,.
			\end{array}
	\end{equation}}\noindent
	In particular, on the right-hand side, for each $j=1,...,t-1$ and $(\omega_1,...,\omega_{j-1})$, the function $\omega_j\mapsto   \int K_{j+1}(\omega_{j})(d\omega_{j+1})\cdots \int K_t(\omega_{t-1})(d\omega_t)f(\omega^t)$ is measurable over $\Omega$.
\end{itemize}
\end{theorem}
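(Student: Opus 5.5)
We proceed by induction on $t$. We define $\mu^t$ directly through iterated integrals of indicators rather than through an outer-measure extension. The whole argument rests on one measurability lemma, proved with a monotone class ($\pi$--$\lambda$) argument. For $t=1$ there is nothing to prove. For the inductive step it suffices to treat one kernel at a time. Given a probability measure $\nu$ on $(\Omega^{t-1},\F^{t-1})$ and a Markov kernel $K$ from $\Omega$ to $\Omega$, we build $\nu\otimes K$ on $\Omega^{t}=\Omega^{t-1}\times\Omega$ by
\[
(\nu\otimes K)(A):=\int\nu(d\omega^{t-1})\int K(\omega_{t-1})(d\omega_t)\,1_A(\omega^{t-1},\omega_t),\qquad A\in\F^t .
\]
Iterating this with $\nu=\mu^{t-1}$ yields \eqref{eq:prodmeas}, because on a rectangle the inner integral factors as $1_{A_1\times\cdots\times A_{t-1}}(\omega^{t-1})\,K(\omega_{t-1})(A_t)$.

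\textbf{Step 1 (measurability, the main obstacle).} We must show that for every $A\in\F^t$ the map $\omega^{t-1}\mapsto \int K(\omega_{t-1})(d\omega_t)1_A(\omega^{t-1},\omega_t)=K(\omega_{t-1})(A_{\omega^{t-1}})$ is $\F^{t-1}$-measurable. Here $A_{\omega^{t-1}}$ is the section of $A$, which is measurable as recalled in Section~\ref{sec:PP}. Let $\mathcal D$ be the class of $A$ for which this holds.
\begin{itemize}
\item Measurable rectangles $B\times C$ belong to $\mathcal D$, since the map is $1_B(\omega^{t-1})K(\omega_{t-1})(C)$, which is measurable by property 2 of Definition~\ref{def:MK}.
\item Rectangles form a $\pi$-system generating $\F^t$.
\item $\mathcal D$ is a $\lambda$-system. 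It is closed under proper differences because each $K(\omega)$ is a finite measure, so we can subtract. It is closed under increasing unions by continuity from below, since pointwise limits of measurable functions are measurable.
\end{itemize}
By Dynkin's $\pi$--$\lambda$ theorem, $\mathcal D=\F^t$. This makes the definition of $\nu\otimes K$ meaningful.

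\textbf{Step 2 (existence and uniqueness).} Countable additivity of $\nu\otimes K$ follows from two applications of the monotone convergence theorem, once for the inner integral and once for the outer one. Total mass is $1$ because each $K(\omega)$ and $\nu$ are probability measures. Hence $\mu^t:=\mu^{t-1}\otimes K_t$ is a probability measure satisfying \eqref{eq:prodmeas}. For uniqueness, two probability measures agreeing on rectangles agree on the $\pi$-system they form, hence on $\F^t$, again by the $\pi$--$\lambda$ theorem.

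\textbf{Step 3 (part (b), Fubini).} For $f=1_A$, identity \eqref{eq:prodInt} is exactly the definition of $\mu^t$, unfolded inductively. Linearity extends it to nonnegative simple functions. For a general nonnegative measurable $f$, we take simple $f_n\uparrow f$ and apply monotone convergence at every level of the iterated integral, working from the innermost outwards.

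The measurability claim for the partial integrals $\omega_j\mapsto\int K_{j+1}(\omega_j)(d\omega_{j+1})\cdots f(\omega^t)$ is obtained the same way. For indicators it is Step 1 applied to the appropriate sections. It extends to simple functions by linearity and then to $f$ as a monotone limit of measurable functions.

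The only genuinely non-routine point is Step 1, namely measurability in the parameter of kernel integrals of sections. Everything else reduces to monotone convergence and uniqueness on a generating $\pi$-system.
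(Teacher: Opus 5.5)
Your proposal is correct and is essentially the standard argument behind \cite[Th.2.6.7]{Ash}, which the paper cites rather than proves. That argument runs by induction on $t$ and has four ingredients: measurability of $\omega^{t-1}\mapsto K(\omega_{t-1})(A_{\omega^{t-1}})$ via a monotone-class/Dynkin argument, countable additivity of $\mu^{t-1}\otimes K_t$ by monotone convergence, uniqueness from agreement on the $\pi$-system of rectangles, and Fubini by passing from indicators to simple functions to monotone limits. Two points should be made explicit, since both the rectangle formula (your integrand $1_{A_1\times\cdots\times A_{t-1}}(\omega^{t-1})K_t(\omega_{t-1})(A_t)$ is not an indicator) and the measurability of the outer partial integrals depend on them: the induction hypothesis must include part (b) at level $t-1$ for arbitrary nonnegative integrands, and Step 1 must first be extended from indicators to nonnegative measurable integrands.
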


\vsp
We now proceed to the proof of the results stated in the paper.
\vsp

\begin{proofof}{Lemma \ref{lemma:MKO}}
As a function $\ereals^{m+1}\times \F\rightarrow \ereals^+$, $\K$ can be written as follows:
\begin{align}\label{eq:kvz}
	\K(v,z)(A)&=[z\notin \St]\cdot \delta_{  v   }(A_z)\,+\,  \sum_{(S,\varphi,\zeta,S')\in E}  [z=S]\cdot\varphi(v)\cdot \zeta(v)(A_{S'})\,.
\end{align}
We now check the two properties required by Definition \ref{def:MK}.
\begin{itemize}
	\item For each $(v,z)\in\Omega$, the function $A\mapsto \K(v,z)(A)$ is a probability measure.  Consider  the right-hand side of \eqref{eq:kvz}: since for each $S$, $\sum_{(S,\varphi,\zeta,S')\in E}   \varphi(v)=1$, for the given $(v,z)$  exactly one of the summands is different from the constant 0 function. In particular,  there is   a probability measure   $\nu$ on $\F_m$ such that for every $A\in \F$, we have $\K(v,z)(A)=\nu(A_z)$.  Next, we note the following general property of sections of measurable sets, which can be shown by elementary set-theoretic reasoning: for any measurable set $A\in \F$ s.t. $A=\cup_{j\geq 0}A_j$  (disjoint union   of measurable sets) and $z\in \ereals$, we have $A_z=\cup_{j\geq 0}(A_j)_z$ (disjoint union   of measurable sets). Applying the  two facts just established and the additivity of the measure $\nu$, we have:  $  \K(v,z)(A)=  \nu((\cup_{j\geq 0}A_j)_z)= \nu(\cup_{j\geq 0}(A_j)_z)=\sum_{j\geq 0} \nu((A_j)_z)= \sum_{j\geq 0} \K(v,z)(A_j)$. This shows that $\K(v,z)$ is a measure. Moreover, $\K(v,z)(\Omega)=\nu(\Omega_z)=\nu(\ereals^m)=1$, which completes the prove that $\K(v,z)$ is a probability measure.
	\item  For each $A\in \F$, the function $(v,z)\mapsto \K(v,z)(A)$ is nonnegative and measurable. Consider  again the right-hand side of \eqref{eq:kvz}, but write $ \delta_{  v   }(A_z)$ as the indicator function  $1_{A_z}(v)$: as a function of $v$, this is measurable (as $A_z$ is a measurable set).  Moreover, for any $A$ and $S'$,  $\zeta(v)(A_{S'})$ is a measurable function of $v$ (as $\zeta$ is a Markov kernel);  hence for each $\varphi$, also  $\varphi(v)\cdot\zeta(v)(A_{S'})$ is a measurable function of $v$. But any measurable function of $v$ alone, say $h(v)$, is also a measurable function of $(v,z)$ (i.e. the function obtained by composing the projection $(v,z)\mapsto v$   with $v\mapsto h(v)$). Likewise, the predicates $[z\notin \St]$ and $[z=S]$ (for any fixed $S\in \St$) are measurable functions of $z$, hence of $(v,z)$. As $\K(\cdot)(A)$ is obtained by   products and sums of  nonnegative measurable functions of $(v,z)$, it is a measurable function of $(v,z)$ \cite[Ch.1,Th.1.5.6]{Ash}.
\end{itemize}
\end{proofof}

\vsp

The following is a  general lemma  useful to connect measure over sets of infinite and finite traces. In its statement, we let $\mu^\infty$ denote a generic measure on the cylindrical sigma-field, obtained as   an infinite product of kernels in the sense of the Ionescu-Tulcea theorem, and by $\mu^t$ the corresponding finite product measures. We shall make use of the following notation.  For $\tilde\omega=(\omega_1,\omega_2,...)\in \Omega^\infty$, we let $\tilde\omega_{1:t}:=(\omega_1,...,\omega_t)\in \Omega^t$. For
$h:\Omega^t\rightarrow \erealspl$ a nonnegative   function, we let   $\tilde h:\Omega^\infty \rightarrow \erealspl$ be defined as  follows for each $\tilde\omega\in \Omega^\infty$:
\begin{align}\label{eq:htilde}
	\tilde h(\iomega):=h(\iomega_{1:t})\,.
\end{align}

In the what follows, we shall make use of the following properties of measurable cylinders: for measurable $A,B\subseteq \Omega^t$ ($t\geq 1$), we have $\cy{A\cup B}=\cy{A}\cup\cy{B}$ and  $\cy{A\cap B}=\cy{A}\cap\cy{B}$.

\begin{lemma}\label{lemma:aux0}  Let
	$h:\Omega^t\rightarrow \erealspl$ a nonnegative measurable function. Then $\tilde h$ as defined in \eqref{eq:htilde} is measurable.   Moreover, for each measurable cylinder $\cy{B_t}\subseteq \Omega^\infty$ ($B_t\subseteq \Omega^t$),  we have $\int_{\cy{B_t}} \mu^\infty(d\iomega) \tilde h(\iomega)=\int_{{B_t}} \mu^t(d\omega^t)   h(\omega^t)$.
\end{lemma}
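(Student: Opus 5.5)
Measurability of $\tilde h$ comes first. We write $\tilde h = h\circ \pi_t$, where $\pi_t:\Omega^\infty\rightarrow\Omega^t$ is the projection $\iomega\mapsto\iomega_{1:t}$. For a measurable $B\subseteq\Omega^t$ we have $\pi_t^{-1}(B)=\cy{B}$, a measurable cylinder, hence $\pi_t^{-1}(B)\in\Cyl$. So $\pi_t$ is measurable from $(\Omega^\infty,\Cyl)$ to $(\Omega^t,\F^t)$. Since $h$ is Borel measurable, the composition $\tilde h$ is measurable.

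For the integral identity we use the standard machine: indicators, then simple functions, then monotone limits. Fix the measurable $B_t\subseteq\Omega^t$. The left-hand side equals $\int\mu^\infty(d\iomega)\,\widetilde{h1_{B_t}}(\iomega)$, because $1_{\cy{B_t}}=\widetilde{1_{B_t}}$. It therefore suffices to prove $\int\mu^\infty(d\iomega)\,\tilde g(\iomega)=\int\mu^t(d\omega^t)\,g(\omega^t)$ for every nonnegative measurable $g$ on $\Omega^t$, and then apply it to $g=h\cdot 1_{B_t}$.

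We check this in three steps. If $g=1_C$ with $C\subseteq\Omega^t$ measurable, then $\tilde g=1_{\cy{C}}$. The left side is $\mu^\infty(\cy{C})$, which equals $\mu^t(C)$ by the defining property of $\mu^\infty$ from the Ionescu-Tulcea theorem. By linearity of the integral and of the map $g\mapsto\tilde g$, the identity extends to nonnegative simple functions. For a general nonnegative measurable $g$, take simple functions $0\leq g_n\uparrow g$. Then $\tilde g_n\uparrow\tilde g$ pointwise, and the monotone convergence theorem, applied on both sides, gives the identity.

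The only delicate point is the base case: the Ionescu-Tulcea characterization $\mu^\infty(\cy{B_t})=\mu^t(B_t)$ must be available for \emph{all} measurable $B_t$, not only for rectangles. This is exactly how it is stated in the paper, so no $\pi$--$\lambda$ extension argument is needed. The remaining steps are routine.
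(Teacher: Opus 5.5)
Your proof is correct and follows essentially the same route as the paper. Both arguments establish the identity for indicators via the defining property $\mu^\infty(\cy{C})=\mu^t(C)$ of the Ionescu-Tulcea measure, then extend by linearity and a limit argument. Your explicit measurability argument through the projection $\pi_t$ is, if anything, cleaner than the paper's, and so is your use of monotone rather than the paper's dominated convergence, since $h$ may be unbounded or take the value $+\infty$.
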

\begin{proof}
First, consider the case of indicator functions $h=1_{A_t}$, for a measurable $A_t\subseteq \Omega^t$.  Then $\tilde h=1_{\cy{A_t}}$, the indicator function of the measurable cylinder generated by $A_t$, and the statement is obvious, because $  h$ is measurable, and $\int_{\cy{B_t}} \mu^\infty(d\iomega) \tilde h(\iomega)=\int  \mu^\infty(d\iomega)   \tilde h(\iomega)1_{\cy{B_t}}(\iomega)=\mu^\infty(\cy{B_t}\cap \cy{A_t})=\mu^\infty(\cy{B_t \cap  A_t})=\mu^t(B_t\cap A_t)=\int_{B_t} \mu^t(d\omega^t)   h(\omega^t)$. The statement for the general case of $h$ follows then by standard measure-theoretic arguments (linearity, dominated convergence).
\end{proof}

\vsp

We can now readily establish measurability of various functions used throughout the paper.

\begin{lemma}[measurability of  functions]\label{lemma:aux}
Let $t\geq 1$. The following functions are measurable:
(1) $\wt_t:\Omega^t\rightarrow [0,1]$;
(2) $\wt: \Omega^\infty\rightarrow [0,1]$;
(3) 
$f_t:\Omega^t\rightarrow \ereals^+$, provided   $f:\Omega^\infty\rightarrow \ereals^+$  is measurable.
%
\end{lemma}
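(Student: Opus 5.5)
We treat the three functions in order, using only closure of Borel measurability under composition, finite products and pointwise limits of sequences of nonnegative measurable functions \cite[Ch.1]{Ash}, together with Lemma \ref{lemma:aux0}.

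For (1), note that $\scoz:\Omega\rightarrow[0,1]$ is measurable. Using the notation of \eqref{eq:kvz}, we write
\[
\scoz(v,z)=[z\notin\St]+\sum_{S\in\St}[z=S]\cdot\psco(S)(v).
\]
Each $\psco(S)$ is a measurable score function of $v$, hence also a measurable function of $(v,z)$, obtained by composing with the projection. The predicates $[z=S]$ and $[z\notin\St]$ are indicators of Borel sets. Finite sums and products of measurable functions are measurable, so $\scoz$ is measurable. Next, each map $\omega^t\mapsto\scoz(\omega_j)$ is the composition of $\scoz$ with the $j$-th coordinate projection $\Omega^t\rightarrow\Omega$, which is measurable for the product sigma-field. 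Hence $\wt_t=\Pi_{j=1}^t\scoz(\omega_j)$ is a finite product of measurable functions, and so it is measurable.

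For (2), we apply the construction \eqref{eq:htilde} with $h=\wt_t$. By Lemma \ref{lemma:aux0}, $\widetilde{\wt_t}(\iomega)=\wt_t(\iomega_{1:t})$ is measurable on $(\Omega^\infty,\Cyl)$. Since $\scoz\in[0,1]$, the sequence $(\widetilde{\wt_t})_{t\geq1}$ is pointwise nonincreasing, and it converges pointwise to $\wt$ by definition \eqref{eq:wt}. A pointwise limit of measurable functions is measurable, so $\wt$ is measurable.

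For (3), the plan is to write $f_t=f\circ\iota_t$, where $\iota_t:\Omega^t\rightarrow\Omega^\infty$ is defined by $\iota_t(\omega^t):=(\omega^t,\star^\infty)$. It then suffices to show that $\iota_t$ is measurable from $(\Omega^t,\F^t)$ to $(\Omega^\infty,\Cyl)$. Because $\Cyl$ is generated by the measurable cylinders, it is enough to check preimages of cylinders $\cy{B_s}$ with $B_s\subseteq\Omega^s$ measurable. There are two cases.
\begin{itemize}
\item If $s\leq t$, then $\iota_t^{-1}(\cy{B_s})=B_s\times\Omega^{t-s}$, which is measurable.
\item If $s>t$, then $\iota_t^{-1}(\cy{B_s})$ is the section of $B_s$ at the fixed point $(\star,\ldots,\star)\in\Omega^{s-t}$. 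Sections of measurable sets are measurable \cite[Th.2.6.2]{Ash}, so this preimage is measurable as well.
\end{itemize}
This section argument in the case $s>t$ is the only step requiring any care. Once it is done, $f_t$ is a composition of measurable maps and hence measurable.
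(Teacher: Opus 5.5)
Your proof is correct and follows the paper's argument. Parts (1) and (2) are identical: you write $\scoz$ as a finite sum of products of indicators and score functions, take products over coordinates, and pass to the pointwise limit of $\widetilde{\wt_t}$. In part (3), your map satisfies $\iota_t^{-1}(C)=\{\omega^t : (\omega^t,\star^\infty)\in C\}$, which is precisely the $t$-section of $C$ at $\star^\infty$ used in the paper. The paper only asserts that such sections are measurable ``by a proof very similar to'' Ash's Th.~2.6.2, whereas your check on generating cylinders, which reduces the case $s>t$ to a finite-dimensional section, actually supplies that argument.
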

\begin{proof}
%
%
Concerning parts 1 and 2, first  note one can write the score function on $\Omega$ (Definition \eqref{eq:scoz}) as:
$\scoz(v,z)=[z\notin \St]+\sum_{S\in \St}[z=S]\cdot \psco(S)(v)$, where $\gamma=\psco(S)$ is a measurable score function on $\ereals^m$. This easily implies that $\scoz(\cdot)$ is measurable on $\Omega$ (cf. also the proof of Lemma  \ref{lemma:MKO}, second item). As $\wt_t(\omega^t)=\scoz(\omega_1)\cdots\scoz(\omega_t)$ is the product of measurable functions on $\Omega$, it is a measurable function on $\Omega^t$.
Now consider $\widetilde{\wt}_t:\Omega^\infty\rightarrow [0,1]$: applying Lemma  \ref{lemma:aux0} with $h=\wt_t$, we deduce that $\widetilde{\wt}_t$ is measurable.
Finally, as $t\rightarrow +\infty$, it is seen that $\widetilde{\wt}_t\rightarrow \wt$ pointwise: then $\wt$ is measurable as well, because it is the pointwise limit of a sequence of measurable functions  functions, cf. \cite[Th.1.5.4]{Ash}.

Concerning part 3, define the \emph{$t$-section of $C\in \Cyl$ at} $\iomega\in \Omega^\infty$ as $C^t_{\iomega}:=\{\omega^t\in \Omega^t\,:\,(\omega^t,\iomega)\in C\}$.  A proof very similar to that given  in \cite[Th.2.6.2(1)]{Ash} for finite products shows that $t$-sections of elements in $\Cyl$  are measurable. Now let $A$ be any measurable set in $\ereals$. By definition, $f^{-1}(A)=\{(\omega^t,\iomega)\,:\,f(\omega^t,\iomega)\in A\}$ is measurable. The $t$-section at $\iomega=\star^\infty$ of this set is precisely $f_t^{-1}(A)$, hence it is measurable. This implies that $f_t$ is measurable.

\end{proof}

\vsp
We now turn to the proof of Theorem \ref{th:approx}. We first prove a few  results about  to the measures $\mu^\infty_S$ and  $\mu^t_S$.  In what follows, we shall  consider the   the notation  $\int f\,d\mu$ for $\int  \mu (d\omega) f(\omega)$. We shall use the two notations interchangeably; the second one is more convenient for expressing iterated integrals. We need the following definition.

\begin{definition}[consistent paths]\label{def:consistent}
We define the following measurable subsets of $\Omega^t$ ($t\geq 1$) and $\Omega^\infty$:
\begin{align}\label{eq:T}
	\T^{\leq t}  := \cup_{j= 0}^{t-1}\,(\term^{\mathrm{c}})^{j} \cdot   \term^{t-j}\quad\quad
	\T_{t }   := \T^{\leq t} \cup  (\term^{\mathrm{c}})^{t} \quad\quad
	\T&:= \left(\cup_{j\geq 0}\,(\term^{\mathrm{c}})^{j} \cdot   \term^\infty\right) \,\cup\,(\term^{\mathrm{c}})^\infty\,.
\end{align}
\end{definition}

\vsp
\begin{lemma}\label{lemma:zero}
(a) $\mu^\infty_S(\T)=1$. Hence for any measurable set $A$ and nonnegative measurable $f$ defined on $\Omega^\infty$, $\int_A f \,d\mu^\infty_S=\int_{A\cap \T}f\,d\mu^\infty_S$.

(b)  Let $t\geq 1$. Then $\mu^t_S(\T_t)=1$. Hence for any measurable set $A$ and nonnegative measurable $f$ defined on $\Omega^t$, $\int_A f \,d\mu^t_S=\int_{A\cap \T_t}f\,d\mu^t_S$.
\end{lemma}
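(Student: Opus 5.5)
The plan is to prove part (b) first, by induction on $t$, and then obtain part (a) from (b) via the Ionescu--Tulcea extension. The key structural fact is that, by Definition \ref{def:PPG} and Definition \ref{def:ppgMK}, the only transition out of $\nil$ is $(\nil,1,\mathrm{id},\nil)$. Hence for every $\omega=(v,\nil)\in\term$ we get $\K(\omega)=\delta_\omega$, and in particular $\K(\omega)(\term)=1$: once terminated, a path stays terminated. We also note that $\mu^1_S=\delta_{(0,S)}$ is concentrated on $\ereals^m\times\PP$, and that $\K$ maps states of $\ereals^m\times\PP$ into $\ereals^m\times\PP$. This last point is not really needed, because $\T_t$ only constrains the termination pattern.

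For (b), observe that $\T_t$ is exactly the set of words in $\Omega^t$ of the form $(\term^{\mathrm{c}})^j\term^{t-j}$ with $0\le j\le t$, i.e.\ the paths in which no $\term$ state is ever followed by a $\term^{\mathrm{c}}$ state. For $t=1$ we have $\T_1=\term\cup\term^{\mathrm{c}}=\Omega$, so the claim is trivial. For the inductive step, $\T_{t+1}=(\T_t\times\Omega)\setminus\{\omega^{t+1}:\omega_t\in\term,\ \omega_{t+1}\in\term^{\mathrm{c}}\}$. By Theorem \ref{th:prode}(b) (Fubini), the $\mu^{t+1}_S$-measure of the removed set is $\int\mu^t_S(d\omega^t)\,1_{\term}(\omega_t)\,\K(\omega_t)(\term^{\mathrm{c}})$, and this is $0$ by the structural fact. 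Together with $\mu^{t+1}_S(\T_t\times\Omega)=\mu^t_S(\T_t)=1$, which follows from the marginal property of product measures (take $A_{t+1}=\Omega$ in \eqref{eq:prodmeas}), this gives $\mu^{t+1}_S(\T_{t+1})=1$. The integral identity then follows immediately, since $\int_{A\setminus\T_t}f\,d\mu^t_S=0$ over a null set.

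For (a), the main point is to express $\T$ as a countable intersection of cylinders. A sequence is in $\T$ iff it never contains a $\term$ state followed by a $\term^{\mathrm{c}}$ state, so $\T=\bigcap_{t\ge1}\cy{\T_t}$. Indeed, any offending pair at positions $i,i+1$ is detected in the prefix of length $t=i+1$; conversely, if every prefix has the form $(\term^{\mathrm{c}})^j\term^{t-j}$, the whole sequence is either all $\term^{\mathrm{c}}$ or of the form $(\term^{\mathrm{c}})^j\term^\infty$. This also shows $\T\in\Cyl$. The Ionescu--Tulcea theorem gives $\mu^\infty_S(\cy{\T_t})=\mu^t_S(\T_t)=1$ for each $t$, and a countable intersection of probability-one sets has probability one, so $\mu^\infty_S(\T)=1$. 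The integral statement follows as in (b).

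The main obstacle is only bookkeeping: checking the set identity $\T=\bigcap_t\cy{\T_t}$, especially the case $j=0$ where the sequence is $\term^\infty$, and making sure the Fubini step uses $\K(\omega)=\delta_\omega$ on $\term$ correctly via Definition \ref{def:ppgMK}, with $A_{\nil}$ as the relevant section.
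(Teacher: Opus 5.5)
Your proof is correct and uses the same key computation as the paper: by Fubini, and since $\K(\omega)=\delta_\omega$ for $\omega\in\term$, the event ``a $\term$ state immediately followed by a $\term^{\mathrm{c}}$ state'' has measure zero. The paper applies this directly to (a), writing $\T^{\mathrm{c}}=\bigcup_{j\ge0}\cy{\Omega^j\cdot\term\cdot\term^{\mathrm{c}}}$ as a countable union of null cylinders and calling (b) similar, whereas you prove (b) first by induction and then obtain (a) from the dual identity $\T=\bigcap_{t\ge1}\cy{\T_t}$; this is a reorganization rather than a different argument.
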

\begin{proof}  Let us consider part (a).
Consider $\T^{\mathrm{c}}=\cup_{j\geq 0}\cy{\Omega^j\cdot \term\cdot\term^{\mathrm{c}}}$;  note that this union is in general not disjoint, but this is not relevant for the rest of the proof. For any $j\geq 0$, we will show that $\mu^\infty_S(\cy{\Omega^j\cdot \term\cdot\term^{\mathrm{c}}})=0$, which implies the thesis. Indeed, $\mu^\infty_S(\cy{\Omega^j\cdot \term\cdot\term^{\mathrm{c}}})= \mu^{j+2}_S({\Omega^j\cdot \term\cdot\term^{\mathrm{c}}})$, by definition of the product measure $\mu^\infty_S$.     Theorem \ref{th:prode}(a) (Fubini) gives us
{\small
	\begin{align}\label{eq:cons}
		\mu^{j+2}_S({\Omega^j \term \term^{\mathrm{c}}})& =\int \delta_{(0,S)}(d\omega_1)\int \K(\omega_1)(d\omega_2)\int\cdots
		\int_{\term} \K(\omega_{j})(d\omega_{j+1})\int_{\term^{\mathrm{c}}} \K(\omega_{j+1})(d\omega_{j+2})1\,.
	\end{align}
}\noindent
Considering  the innermost two integrals in the above expression,   let $J(\omega_j):= \int_{\term} \K(\omega_{j})(d\omega_{j+1})\int_{\term^{\mathrm{c}}} \K(\omega_{j+1})(d\omega_{j+2})1= \int \K(\omega_{j})(d\omega_{j+1})\int  \K(\omega_{j+1})(d\omega_{j+2})1_{\term}(\omega_{j+1})\cdot 1_{\term^{\mathrm{c}}}(\omega_{j+2})$.
Suppose $\omega_{j+1}=(v,\nil)\in \term$: then considering the innermost integral in $J(\omega_j)$, by definition of $\K$ we have $\int \K(\omega_{j+1})(d\omega_{j+2})1_{\term}(\omega_{j+1})\cdot 1_{\term^{\mathrm{c}}}(\omega_{j+2})= \int  \K(v,\nil)(d\omega_{j+2})  1_{\term^{\mathrm{c}}}(\omega_{j+2})=\int  \delta_{(v,\nil)}(d\omega_{j+2})1_{\term^{\mathrm{c}}}(\omega_{j+2})= 1_{\term^{\mathrm{c}}}(v,\nil)=0$. Similarly, we have that the innermost integral is 0 if  $\omega_{j+1} \in \term^{\mathrm{c}}$. This implies that $J(\omega_j)=0$, hence the integral in \eqref{eq:cons} is 0.

The proof of part (b) is similar.
\end{proof}


\vsp

\begin{lemma}\label{lemma:conv} Let $S\in\St$. As $t\rightarrow +\infty$, we have $[S]^t 1_{\term^{\leq t}}\cdot\wt_t  \longrightarrow [S]  1_{\term_{\mathrm{f}}}\cdot \wt $. Moreover, the sequence  $[S]^t 1_{\term^{\leq t}}\cdot\wt_t$ ($t\geq 1$) is monotonically nondecreasing.
\end{lemma}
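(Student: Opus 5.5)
The plan is to lift each finite quantity to $\Omega^\infty$ using Lemma \ref{lemma:aux0}, and then apply monotone convergence. Set $h_t:=1_{\term^{\leq t}}\cdot\wt_t$ on $\Omega^t$, and let $\tilde h_t(\iomega)=h_t(\iomega_{1:t})$ be its lifting as in \eqref{eq:htilde}. Taking $B_t=\Omega^t$ in Lemma \ref{lemma:aux0} gives
\[
[S]^t h_t=\int \mu^\infty_S(d\iomega)\,\tilde h_t(\iomega).
\]
Note that $\tilde h_t=1_{\cy{\term^{\leq t}}}\cdot\widetilde{\wt}_t$, where $\cy{\term^{\leq t}}=\cup_{j=1}^t\cy{T_j}$ is the set of infinite traces that terminate by time $t$. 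So the whole statement reduces to studying the pointwise behaviour of $\tilde h_t$ on $\Omega^\infty$. By Lemma \ref{lemma:zero}(a) it suffices to do this on the consistent set $\T$, since $\mu^\infty_S(\T)=1$.

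\textbf{Step 1 (pointwise monotonicity on $\T$).} Take $\iomega\in\T$. If $\iomega\notin\cy{\term^{\leq t+1}}$, then $\tilde h_{t+1}(\iomega)=0=\tilde h_t(\iomega)$. If $\iomega\in\cy{T_j}$ with $j\leq t$, then both $\tilde h_t$ and $\tilde h_{t+1}$ are nonzero only through the weight factor, and $\omega_{t+1}\in\term$ because $\iomega\in\T$. Hence $\scoz(\omega_{t+1})=\psco(\nil)(\cdot)=1$, which gives $\tilde h_{t+1}(\iomega)=\tilde h_t(\iomega)$. The remaining case is $\iomega\in\cy{T_{t+1}}$: then $\tilde h_t(\iomega)=0\le\tilde h_{t+1}(\iomega)$. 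Thus $\tilde h_t\le\tilde h_{t+1}$ on $\T$, and integrating yields that the sequence $[S]^t 1_{\term^{\leq t}}\cdot\wt_t$ is nondecreasing.

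\textbf{Step 2 (pointwise limit).} For $\iomega\in\T\cap T_\fini$, let $j$ be the termination time. By the same observation as in Step 1, all subsequent states lie in $\term$ and carry score $1$. Therefore for every $t\ge j$ we have $\tilde h_t(\iomega)=\wt_j(\iomega_{1:j})=\Pi_{i\ge1}\scoz(\omega_i)=\wt(\iomega)=1_{T_\fini}(\iomega)\wt(\iomega)$. For $\iomega\in\T\setminus T_\fini$, we have $\tilde h_t(\iomega)=0$ for all $t$, which again equals $1_{T_\fini}(\iomega)\wt(\iomega)$. So $\tilde h_t\uparrow 1_{T_\fini}\cdot\wt$ on $\T$, i.e. $\mu^\infty_S$-a.e. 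By Lemma \ref{lemma:aux} all these functions are measurable, so the monotone convergence theorem applies (in its a.e. version, or directly on $\T$ via Lemma \ref{lemma:zero}(a)). It gives
\[
[S]^t h_t=\int\tilde h_t\,d\mu^\infty_S\longrightarrow\int 1_{T_\fini}\wt\,d\mu^\infty_S=[S]\,1_{T_\fini}\cdot\wt.
\]

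\textbf{Main obstacle.} The only delicate point is the stabilization of the weight after termination. Outside $\T$ a trace might leave $\term$ and pick up further scores $<1$, so neither monotonicity nor the identification of the limit with $\wt$ holds everywhere. This is handled entirely by restricting to $\T$ through Lemma \ref{lemma:zero}(a), together with $\psco(\nil)=1$. Everything else is routine bookkeeping with cylinders.
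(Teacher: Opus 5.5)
Your proposal is correct and follows essentially the same route as the paper's proof. You lift $1_{T^{\leq t}}\cdot\wt_t$ to $\Omega^\infty$ via Lemma~\ref{lemma:aux0} with $B_t=\Omega^t$, restrict to the full-measure set $\T$ via Lemma~\ref{lemma:zero}(a), show that on $\T$ the lifted functions increase pointwise to $1_{T_\fini}\cdot\wt$, and conclude by monotone convergence; the paper multiplies by $1_\T$ instead of using the a.e.\ version, which is equivalent, and simply asserts as easy to check the case analysis you spell out (traces in $\T$ stay in $\term$ after termination, where $\psco(\nil)=1$).
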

\begin{proof}
For each $t\geq 1$, consider the function $h_t=1_\Theta\cdot \widetilde{(1_{\term^{\leq t}}\cdot \wt_t)}$,  defined on $\Omega^\infty$  and measurable, being the product of two measurable functions (measurability of $\widetilde{\cdot}$ follows from Lemma  \ref{lemma:MKO}). It is easy to check that: (1) $(h_t)_{t\geq 1}$ is a monotonically nondecreasing sequence of functions; and that (2) as $t\rightarrow+\infty$, $h_t\rightarrow 1_\Theta\cdot 1_{\term_{\mathrm{f}}}\cdot \wt$ pointwise.   By the Monotone Convergence Theorem \cite[Th.1.6.2]{Ash}, $\int h_t d\mu_S^\infty \longrightarrow \int 1_\Theta\cdot 1_{\term_{\mathrm{f}}}\cdot \wt d\mu_S^\infty$, where the sequence of integrals on the left is nondecreasing. Now, applying  Lemma \ref{lemma:zero}(a) and Lemma \ref{lemma:aux0} with $B_t=\Omega^t$, we get
$\int h_t d\mu_S^\infty =\int_{\Theta} \widetilde{(1_{\term^{\leq t}}\cdot \wt_t)} d\mu_S^\infty = \int  \widetilde{(1_{\term^{\leq t}}\cdot \wt_t)} d\mu_S^\infty = \int  1_{\term^{\leq t}}\wt_t d\mu_S^t$, where the last quantity is by definition $[S]^t 1_{\term^{\leq t}}\wt_t$. Similarly, by Lemma \ref{lemma:zero}(a)  $\int 1_\Theta\cdot 1_{\term_{\mathrm{f}}}\cdot \wt d\mu_S^\infty=\int  1_{\term_{\mathrm{f}}} \cdot \wt d\mu_S^\infty=[S]1_{\term_{\mathrm{f}}}\cdot \wt$. This completes the proof.
%
\end{proof}

\vsp
We need a lemma on the support of prefix-closed functions.

\begin{lemma}\label{lemma:pfsupp}
Let $f$ be a prefix-closed function with branches $L_j$ ($j\geq 0$) and $t\geq 1$. Then $L^{>t}\cap T^{\leq t}=\es$.
\end{lemma}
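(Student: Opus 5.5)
The plan is a direct argument by contradiction from the definitions, with T-respectfulness (condition (c) of Definition \ref{def:termf}) doing the work. Suppose $\omega^t\in L^{>t}\cap T^{\leq t}$. Unfolding the definition of $T^{\leq t}=\cup_{j=0}^t T_j\cdot\Omega^{t-j}$ and recalling $T_0=\es$, we obtain some $1\leq j\leq t$ with $\omega^j:=(\omega_1,\dots,\omega_j)\in T_j$ and $\omega^j\prec\omega^t$. Unfolding $\omega^t\in L^{>t}$, we obtain some $t'>t$ and $\omega^{t'}\in L_{t'}$ with $\omega^t\prec\omega^{t'}$.

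The key step is to transfer the second fact from time $t$ down to time $j$. Prefix order is transitive, so $\omega^j\prec\omega^t\prec\omega^{t'}$. Moreover $t'>t\geq j$, so $\omega^{t'}$ is a word in $L_{t'}$ of length strictly greater than $j$ extending $\omega^j$. By the definition of $L^{>j}$, this gives $\omega^j\in L^{>j}$. Together with $\omega^j\in T_j$, we get $\omega^j\in L^{>j}\cap T_j$, which contradicts condition (c), namely $L^{>j}\cap T_j=\es$. Hence $L^{>t}\cap T^{\leq t}=\es$.

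The only point needing care is the index bookkeeping. One must check that the witness $t'$ for membership in $L^{>t}$ also serves as a witness for $L^{>j}$. This requires strict inequality $t'>j$, which follows from $t'>t\geq j$. One must also check that the definition of $L^{>j}$ quantifies over arbitrary extensions in later branches, not only over $L_{t'}$ with $t'$ tied to $t$; this is immediate from how $L^{>j}$ is defined. The case $j=0$ does not arise because $T_0=\es$, so no measure-theoretic argument is needed at all: the statement is purely combinatorial.
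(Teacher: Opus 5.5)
Your proof is correct and follows essentially the paper's route. The paper uses the inclusion $L^{>t}\subseteq L^{>j}\cdot\Omega^{t-j}$ for each $j\leq t$ and combines it with T-respectfulness $L^{>j}\cap T_j=\emptyset$ over the decomposition $T^{\leq t}=\cup_{j=0}^t T_j\cdot\Omega^{t-j}$; your prefix-transfer step is that inclusion, argued element-wise rather than set-wise.
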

\begin{proof}
For each $0\leq j\leq t$, we have $L^{>j}\cap T_j=\es$ (T-respectfulness), which implies $L^{>t}\cap T_j\cdot \Omega^{t-j}=\es$ (as $L^{>t}\subseteq L^{>j}\cdot \Omega^{t-j}$). Therefore, recalling that $T^{\leq t}=\cup_{j=0}^t T_j\cdot \Omega^{t-j}$, we have  $L^{>t}\cap T^{\leq t}=\es$.
\end{proof}

\vsp
\begin{proofof}{Lemma \ref{lemma:basic}}
We proceed by proving separately the upper bound and the lower bound in \eqref{eq:basic}.
\begin{itemize}
	\item  {(Upper bound)}. First, let us establish  the inclusion
	$  \supp(f)\subseteq  \cy{L^{\leq t}\cap T^{\leq t} }\cup (\cy{T^{\leq t}})^{\mathrm{c}}$.  Indeed, consider $\tilde\omega\in \supp(f)$ such that $\tilde\omega\notin  \cy{L^{\leq t}\cap T^{\leq t} }=\cy{L^{\leq t}}\cap \cy{T^{\leq t} }$. Then either $\tilde\omega\in \cy{T^{\leq t}}^{\mathrm{c}}$, and there is nothing left to prove. Or    $\tilde\omega\in  \cy{L_j}$ for some $j>t$, hence $\tilde\omega\in  \cy{L^{> t}}$; by Lemma \ref{lemma:pfsupp}, $L^{>t}\cap T^{\leq t}=\es$, hence  $\cy{L^{>t}}\cap \cy{T^{\leq t}}=\es$; this implies that   $\tilde\omega\notin  \cy{T^{\leq t}}$, i.e. $\tilde\omega\in  \cy{T^{\leq t}}^{\mathrm{c}}$, which completes the proof of the wanted inclusion.
	As a consequence of the inclusion just established,
	%
	\begin{align}\label{eq:basic0}
		[S]f\cdot \wt
		\leq &  \underbrace{[S]  f \cdot 1_{\cy{L^{\leq t} \cap T^{\leq t}  }}\cdot \wt}_{K_1} + \underbrace{[S] f\cdot 1_{\cy{T^{\leq t}}^{\mathrm{c}}}\cdot \wt}_{K_2}\,.
	\end{align}
	We proceed now to separately bound   $ K_1$ and $ K_2$.
	\begin{itemize}
		\item \emph{Upper bound on $K_1$.}  
		Using the notation introduced in \eqref{eq:htilde}, we first check that
		\begin{equation}\label{eq:basicapprox}
			\text{on $\cy{L^{\leq t}}$, hence on  $\cy{L^{\leq t}\cap T^{\leq t}}$,  we have }f\cdot \tilde\wt_t= 
			\widetilde{(f_t\cdot\wt_t)}\,.
		\end{equation}
		In fact, for any $\omega^t\in L^{\leq t}$ and $\iomega\in \Omega^\infty$, we have: $\widetilde{(f_t\cdot\wt_t)}(\omega^t,\iomega) = (f_t\cdot\wt_t)(\omega^t)=f_t(\omega^t)\cdot\wt_t(\omega^t)=f(\omega^t,\star^\infty)\cdot\tilde\wt_t(\omega^t,\iomega)=
		f(\omega^t,\iomega)\cdot \tilde\wt_t(\omega^t,\iomega)$, where the equality $f(\omega^t,\star^\infty)=f(\omega^t,\iomega)$ stems from $f$ being prefix-closed and from $(\omega^t,\iomega), (\omega^t,\star^\infty)\in \cy{L_j}$ for some $0\leq j\leq t$; this proves \eqref{eq:basicapprox}.
		Now
		we have
		\begin{align}
			K_1 & = \int_{\cy{L^{\leq t}\cap T^{\leq t}}}f\cdot  \wt\, d\mu^\infty_S   \\
			& \leq\int_{\cy{L^{\leq t} \cap T^{\leq t} }}f\cdot \tilde\wt_t\, d\mu^\infty_S \label{eq:aux00primus} \\
			& = \int_{\cy{L^{\leq t}\cap T^{\leq t}  }}\widetilde{(f_t\cdot\wt_t)} \,d\mu^\infty_S \label{eq:aux03}\\
			& = \int_{L^{\leq t} \cap T^{\leq t} } f_t\cdot\wt_t  \,d\mu^t_S   \label{eq:aux02}\\
			& = [S]^t f_t \cdot 1_{L^{\leq t}\cap T^{\leq t} } \cdot\wt_t \label{eq:aux01}
		\end{align}
		where: \eqref{eq:aux00primus} stems from $\wt\leq \tilde\wt_t$;
		in \eqref{eq:aux03} we have used \eqref{eq:basicapprox}, and in \eqref{eq:aux02}  we have applied Lemma \ref{lemma:aux0}   with $h=f_t\cdot\wt_t$ and $B_t=L^{\leq t}\cap T^{\leq t}$.

		\item \emph{Upper bound on $K_2$.}  From $f\leq M$ and $\wt\leq \tilde\wt_t$, we obtain $K_2\leq M \cdot [S ]  1_{\cy{ T^{\leq t}}^{\mathrm{c}}}\cdot \tilde\wt_t = M\cdot(\int  \tilde\wt_t d\mu^\infty_S-\int_{\cy{T^{\leq t}}} \tilde\wt_t d\mu^\infty_S)$. Now, apply Lemma \ref{lemma:aux0} to $h=\wt_t$: first with $B_t=\Omega^t$, to obtain $\int  \tilde\wt_t d\mu^\infty_S =\int   \wt_t d\mu^t_S=[S]^t \wt_t$; then with  $B_t= {T^{\leq t}}$, to obtain $\int_{\cy{T^{\leq t}}} \tilde\wt_t d\mu^\infty_S = \int_{ {T^{\leq t}}} \wt_t d\mu^t_S=[S]^t\wt_t\cdot 1_{T^{\leq t}}$. To sum up, we have:
		\begin{align}\label{eq:K2}
			K_2 &\leq M\cdot ([S]^t \wt_t-[S]^t 1_{T^{\leq t}}\cdot \wt_t)=M\cdot [S]^t  ( 1-  1_{T^{\leq t}})\cdot \wt_t\,.
		\end{align}
	\end{itemize}
	
	\item (Lower bound).  Recall that, for any $j\geq 1$, $T_j=(\term^{\mathrm{c}})^{j-1}\cdot\term$ and that $T^{\leq t}=\cup_{j=1}^t T_j\cdot\Omega^{t-j}\subseteq \Omega^t$. Consider now $\T^{\leq t} =\cup_{j=1}^t T_j\cdot \term^{t-j}\subseteq \Omega^t$. For the sake of conciseness, let us use the following abbreviation:
	\begin{align}
		A_t:=\T^{\leq t}\cdot {\term^\infty}\cap \cy{L^{\leq t}\cap T^{\leq t}}\,.\label{eq:At}
	\end{align}
	Clearly, $[S]f\wt\geq [S]f\wt 1_{A_t}= \int_{A_t}f\wt\, d\mu^\infty_S$. Now  we   check that:
	\begin{equation}\label{eq:basicapprox2}
		\text{on $A_t$, we have }f\wt = \widetilde{(f_t\wt_t)}\,.
	\end{equation}
	Indeed, for any $(\omega^t,\iomega)\in A_t$, we have: $(f\wt)(\omega^t,\iomega)= f(\omega^t,\iomega)\cdot \wt(\omega^t,\iomega)=f(\omega^t,\star^\infty)\cdot\wt(\omega^t,\iomega)=
	f_t(\omega^t)\cdot\wt_t(\omega^t)=(f_t \cdot\wt_t)(\omega^t)=\widetilde{(f_t\cdot\wt_t)}(\omega^t,\iomega)$,
	where: (i) $f(\omega^t,\iomega)=f(\omega^t,\star^\infty)$ stems from $f$ being prefix-closed, and $(\omega^t,\iomega),(\omega^t,\star^\infty)\in \cy{L_j}$ for some $0\leq j\leq t$; and, (ii) $\wt(\omega^t,\iomega)=\wt_t(\omega^t)$  stems from $(\omega^t,\iomega)\in {T_j}\cdot \term^\infty$ for some $0\leq j\leq t$, and recalling that the basic weight function $\scoz(\cdot)$ defined on $\Omega$ yields 1 on $\term$; this proves \eqref{eq:basicapprox2}.
	Now we have:
	\begin{align}
		[S]f\cdot \wt & \geq  \int_{A_t}f\wt \,d\mu^\infty_S \label{eq:aux0}\\
		& = \int_{A_t}\widetilde{(f_t\wt_t)} \,d\mu^\infty_S\label{eq:aux00}\\
		& = \int_{\cy{L^{\leq t}\cap T^{\leq t}}}\widetilde{(f_t\wt_t)}\, d\mu^\infty_S\\
		& = \int_{ {L^{\leq t}\cap T^{\leq t}}} {(f_t\wt_t)}\, d\mu^t_S\\
		& = [S]^t f_t\wt_t 1_{{L^{\leq t}\cap T^{\leq t}}}\label{eq:aux1}
	\end{align}
	where: in the second step we have used \eqref{eq:basicapprox2}; in the third step we have applied Lemma \ref{lemma:zero} and the fact that   $ \cy{L^{\leq t}\cap T^{\leq t}} \cap \T=A_t $; and in the last but one step, Lemma \ref{lemma:aux0} with $h= f_t\wt_t $ and $B_t=L^{\leq t}\cap T^{\leq t}$.

\end{itemize}
The bounds in  \eqref{eq:aux01},  \eqref{eq:K2} and \eqref{eq:aux1} imply the wanted bounds \eqref{eq:basic}.
%
\end{proofof}


\ifmai
\begin{theorem}[finite approximation]\label{th:approx} Consider $S\in\St$ and $t\geq 1$ such that $[S]^t 1_{T^{\leq t}}\cdot \wt_t>0$. 
	Then for any  prefix-closed   function $f$ with branches $L_0,L_1,...$ we have that $\sem{S}f$ is well defined. 
	Moreover,  given  an upper bound  $f\leq M$ ($M\in \erealspl$),    for each $t$ large enough and $\alpha_t:=\frac{[S]^t \wt_t}{[S]^t 1_{T^{\leq t}}\cdot \wt_t}$ we have:
	\begin{equation}\label{eq:approx}
		\begin{array}{rcccl}
			\dfrac{[S]^t f_t\cdot 1_{L^{\leq t}\cap T^{\leq t}}\cdot \wt_t}{[S]^t \wt_t} &\leq & \sem{S}f&
			\leq &
			\dfrac{[S]^t f_t\cdot 1_{L^{\leq t}\cap T^{\leq t}}\cdot \wt_t}{[S]^t \wt_t}\alpha_t+M\cdot   \left(\alpha_t   -1\right)\,.
		\end{array}
	\end{equation}
\end{theorem}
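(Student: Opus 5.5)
The plan is to apply Lemma \ref{lemma:basic} twice: once to $f$, to control the numerator $[S](f\cdot\wt)$, and once to the constant function $1$, to control the denominator $[S]\wt$. Throughout, write
\[
a_t:=[S]^t f_t\cdot 1_{L^{\leq t}\cap T^{\leq t}}\cdot \wt_t,\qquad b_t:=[S]^t\wt_t,\qquad c_t:=[S]^t 1_{T^{\leq t}}\cdot\wt_t ,
\]
so that $\alpha_t=b_t/c_t$ and $[S]^t(1-1_{T^{\leq t}})\cdot\wt_t=b_t-c_t$.

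\emph{Well-definedness.} I first show $[S]\wt\geq c_t>0$. Since $\wt\geq 1_{T_\fini}\cdot\wt$, Lemma \ref{lemma:conv} gives $[S]\wt\geq [S]1_{T_\fini}\cdot\wt\geq c_t$, because the sequence $c_t$ is nondecreasing. Alternatively, apply the lower bound of Lemma \ref{lemma:basic} to the prefix-closed function $1_{T_\fini}$ (branches $T_j$, with $M=1$): here $f_t=1$ on $T^{\leq t}$ and $L^{\leq t}\cap T^{\leq t}=T^{\leq t}$, so again $c_t\leq [S]1_{T_\fini}\wt\leq[S]\wt$. Either way, the hypothesis $c_t>0$ makes the denominator in \eqref{eq:semPP} positive, so $\sem S f$ is defined. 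The same monotonicity shows that $c_{t'}>0$ for every $t'\geq t$, which is what ``$t$ large enough'' should mean.

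\emph{Bounds on the denominator.} I need $c_t\leq[S]\wt\leq b_t$. The lower bound was just shown. The upper bound follows from $\wt\leq\tilde\wt_t$ together with Lemma \ref{lemma:aux0} taken with $B_t=\Omega^t$. This can also be read off Lemma \ref{lemma:basic} applied to $f\equiv1$, but the constant function is not literally prefix-closed in the required sense ($L_0=\{\epsilon\}$), so the direct argument is safer.

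\emph{Combining.} By Lemma \ref{lemma:basic}, $a_t\leq[S]f\wt\leq a_t+M(b_t-c_t)$. For the lower bound, divide by $[S]\wt\leq b_t$ to get $\sem S f\geq a_t/b_t$. For the upper bound, divide by $[S]\wt\geq c_t$:
\[
\sem S f\leq \frac{a_t}{c_t}+M\frac{b_t-c_t}{c_t}=\frac{a_t}{b_t}\alpha_t+M(\alpha_t-1).
\]
When $M=+\infty$ and $\alpha_t=1$ the term $M(\alpha_t-1)$ has the form $\infty\cdot 0$; the extended-arithmetic convention $0\cdot\infty=0$ is consistent with Lemma \ref{lemma:basic}, since there $b_t-c_t=0$.

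The main obstacle is the well-definedness and lower-bound step, that is, relating $[S]\wt$ on infinite traces to the finite quantity $c_t$. That step rests on Lemma \ref{lemma:conv} (or Lemma \ref{lemma:zero} plus Lemma \ref{lemma:aux0}), and it uses the fact that weights stay fixed after termination because $\scoz=1$ on $\term$. Everything else is algebra on the bounds of Lemma \ref{lemma:basic}.
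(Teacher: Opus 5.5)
Your proof is correct and follows the paper's argument. Lemma~\ref{lemma:basic} bounds the numerator, the sandwich $[S]^t 1_{T^{\leq t}}\cdot\wt_t\leq[S]\wt\leq[S]^t\wt_t$ bounds the denominator (with well-definedness obtained from Lemma~\ref{lemma:conv}, as in the paper), and the same division gives \eqref{eq:approx}.

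The only difference is how the denominator sandwich is obtained. The paper applies Lemma~\ref{lemma:basic} to the constant $f=1$ with $L_0=\{\epsilon\}$ and $L_j=\es$ for $j>0$. This is legitimate under the paper's conventions: $\cy{\{\epsilon\}}:=\Omega^\infty$, $L_0\neq\es$ is explicitly allowed, and T-respectfulness holds vacuously. So your worry on that point is unfounded, though your direct argument via $\wt\leq\tilde\wt_t$ and Lemma~\ref{lemma:aux0} is equally valid.
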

\fi
\vsp
\vsp
\begin{proofof}{Theorem \ref{th:approx}}
	We first show that $\sem S f$ is well defined, that is that $[S]\wt >0$.
	Indeed, from Lemma \ref{lemma:conv}, and from $ [S]^t 1_{\term^{\leq t}}\cdot\wt_t > 0$  for at least one   $t$,   we get $[S]  1_{\term_{\mathrm{f}}}\cdot \wt >0$; since $1_{\term_{\mathrm{f}}}\cdot \wt \leq \wt$, we get $[S]  1_{\term_{\mathrm{f}}}\cdot \wt\leq [S]   \wt$, hence the wanted statement.
	\ifmai
	Now, using the notation introduced in \eqref{eq:htilde},  for any $t\geq 1$  consider the function $\tilde\wt_t
	$:
	applying Lemma \ref{lemma:aux0} with $h=\wt_t$ and $B_t=\Omega^t$, we obtain that $\int\mu^{\infty}_S(d\iomega)\tilde\wt_t(\iomega)= \int\mu^{t}_S(d\omega^t) \wt_t(\omega^t)=[S]^t \wt_t$. Moreover,  $\{\tilde \wt_t\}_{t=1}^\infty$ is a monotonically nondecreasing sequence of nonnegative functions over $\Omega^\infty$, that converges pointwise to $\wt$; by the Monotone Convergence Theorem \cite[Th.1.6.2]{Ash}, $[S]^t   \wt_t = \int\mu^{\infty}_S(d\iomega)\tilde\wt_t(\iomega) \rightarrow \int\mu^{\infty}_S(d\iomega) \wt (\iomega) =[S]\wt$; moreover the sequence of integrals is in turn nondecreasing, which implies that $[S]\wt >0$, since $[S]^t \wt_t >0$ for at least one $t$.
	\fi
	
	Now consider $\sem S f =\dfrac{[S]f\cdot \wt}{[S]\wt}$, for $f$ like in the hypothesis, and the inequalities in \eqref{eq:approx}. Consider the following bounds for the numerator and denominator of this fraction.
	{
		\begin{align}
			[S]^t f_t\cdot 1_{L^{\leq t}\cap T^{\leq t}}\cdot \wt_t & \leq  [S]f\cdot \wt  \leq    [S]^t f_t\cdot 1_{L^{\leq t}\cap T^{\leq t}}\cdot \wt_t + M([S]^t  \wt_t- [S]^t 1_{T^{\leq t}}\cdot \wt_t)\label{eq:B1}\\
			[S]^t 1_{T^{\leq t}}\cdot\wt_t  &\leq  [S] \wt \leq  [S]^t  \wt_t\,.\label{eq:B2}
		\end{align}
	}\noindent
	The bounds in \eqref{eq:B1} are just those in Lemma \ref{lemma:basic}, with the   term $M\cdot(\cdots)$  written in an equivalent form. 
	As to  \eqref{eq:B2}, first apply the bounds of Lemma \ref{lemma:basic} to the constant function $f=1$. Note that this $f$ is  measurable, and is trivially prefix closed for the prefix-free sequence of languages $L_0=\{\epsilon\}$ and $L_j=\es$ for $j>0$. As a consequence, for $t\geq 1$, over $\Omega^t$ we have   $L^{\leq t}=\Omega^t$, hence $1_{L^{\leq t}\cap T^{\leq t}}=1_{ T^{\leq t}}$. Moreover $f_t=M=1$ identically. From these facts, it is immediate to see that the   bounds \eqref{eq:basic} of Lemma \ref{lemma:basic} specialize to   \eqref{eq:B2}.
	%
	%
	From the above established bounds \eqref{eq:B1} and \eqref{eq:B2} for the numerator and denominator of $\sem S f =\frac{[S]f\cdot \wt}{[S]\wt}$, it follows that
	{\small
		\begin{equation}\label{eq:approx2}
			\begin{array}{rcccl}
				\dfrac{[S]^t f_t\cdot 1_{L^{\leq t}\cap T^{\leq t}}\cdot \wt_t}{[S]^t \wt_t} &\leq & \sem{S}f&
				\leq &
				\dfrac{[S]^t   f_t\cdot  1_{L^{\leq t} \cap T^{\leq t}}\cdot \wt_t}{[S]^t 1_{T^{\leq t}}\cdot \wt_t}+M\cdot   \left(\dfrac{[S]^t \wt_t }{[S]^t 1_{T^{\leq t}}\cdot \wt_t}   -1\right)  \,.
			\end{array}
		\end{equation}
	}\noindent
	Now, multiplying and dividing the first term of the above upper bound by $[S]^t \wt_t$, positive by hypothesis, and recalling the definition of $\alpha_t$, the wanted \eqref{eq:approx} follows.
\end{proofof}

\vsp
\vsp
\begin{proofof}{Theorem \ref{th:tight}}
Write  \eqref{eq:approx}  in the form \eqref{eq:approx2}.
We discuss the limit as $t\rightarrow+\infty$ of each of the  three distinct involved terms .
\begin{enumerate}
	\item $[S]^t f_t 1_{L^{\leq t}\cap T^{\leq t}}\wt_t\rightarrow \int f\wt\, d\mu^\infty_S=[S]f\wt$. Consider the set $A_t=\T^{\leq t}\cdot {\term^\infty}\cap \cy{L^{\leq t}\cap T^{\leq t}}$ introduced in \eqref{eq:At}. We have the equalities
	\begin{align}
		[S]^t f_t \cdot 1_{L^{\leq t}\cap T^{\leq t}}\cdot\wt_t& =\int_{A_t} f\cdot\wt\,d\mu^\infty_S\nonumber \\
		& = \int f\cdot\wt\cdot 1_{\cy{T^{\leq t}\cap L^{\leq t}}}\,d\mu^\infty_S \label{eq:int0}
	\end{align}
	where: the first equality has been proven in  \eqref{eq:aux0}--\eqref{eq:aux1},
	and the second one follows from Lemma \ref{lemma:zero} and the equality
	$ A_t\cap \T=\cy{T^{\leq t}\cap L^{\leq t}}\cap \T$. Now $1_{\cy{T^{\leq t}\cap L^{\leq t}}}$ converges pointwise to
	$1_{T_\fini\cap \supp(f)}$, by definition of $\supp(f)=\cup_{j\geq 0} \cy{L_j}$: in particular, for each $\tilde\omega$ we have that, for each $t$ large enough, $1_{\cy{T^{\leq t}\cap L^{\leq t}}}(\tilde\omega)=1_{T_\fini\cap \supp(f)}(\tilde\omega)$.
	This in turn implies that $f\cdot\wt \cdot 1_{\cy{T^{\leq t}\cap L^{\leq t}}}$ converges pointwise to $f\cdot\wt\cdot 1_{T_\fini\cap \supp(f)}=f\cdot\wt\cdot 1_{T_\fini }$ (even if $f$ takes on the value $+\infty$). Moreover, the sequence of functions  $f\cdot\wt \cdot 1_{\cy{T^{\leq t}\cap L^{\leq t}}}$ is monotonically nondecreasing.
	By the Monotone Convergence Theorem \cite[Th.1.6.2]{Ash}, the integral in \eqref{eq:int0} then converges to $\int f\cdot\wt\cdot 1_{T_\fini }\,d\mu^\infty_S= \int  f\cdot\wt  \,d\mu^\infty_S$, where the last equality stems from $\mu^\infty_S(T_\fini)=1$ and again Lemma \ref{lemma:zero}.

	\item $[S]^t \wt_t\rightarrow [S]\wt$.  By taking $B_t=\Omega^t$ in Lemma \ref{lemma:aux0}, we have: $[S]^t \wt_t= \int \wt_t\,d\mu_S^t=\int \widetilde{\wt_t}\,d\mu_S^\infty$. Moreover, the sequence of functions $\widetilde{\wt_t}$ converges pointwise to $\wt$, and all these function are dominated by e.g. 1, which is integrable. Applying the Dominated Convergence Theorem \cite[1.6.9]{Ash}, we obtain $\int \widetilde{\wt_t}\,d\mu_S^\infty\rightarrow \int \wt_t\,d\mu^\infty=[S]\wt$, which is the wanted statement. 
	
	\item $[S]^t 1_{T^{\leq t}}\cdot \wt_t \rightarrow [S]\wt$.   Apply the first item above to the constant function $f=1$. Note that this $f$ is trivially prefix-closed with $L_0=\{\epsilon\}$, hence $L^{\leq t}=\Omega^t$ for each $t\geq 1$. 
\end{enumerate}

As to   \eqref{eq:exact},
we have $[S]^t 1_{ T^{\leq t}}\wt_t =\int_{ T^{\leq t}} \wt_t\,d\mu^t_S=\int  \wt_t\,d\mu^t_S= [S]^t\wt_t$,  which follows from the hypothesis $\mu^t_S( T^{\leq t})=1$ and from elementary measure-theoretic reasoning. Therefore $\alpha_t=1$ and  from \eqref{eq:approx} we obtain that the lower and upper bounds on $\sem S f$ coincide with
$\frac{[S]^t f_t\cdot 1_{L^{\leq t}\cap T^{\leq t}}\cdot\wt_t}{[S]^t\wt_t}$. Now  we check that
\begin{align*}
	f_t\cdot 1_{T^{\leq t}} & =  f_t\cdot 1_{L^{\leq t}\cap T^{\leq t}}\,.
\end{align*}
Indeed,  as $\supp(f_t)\subseteq L^{\leq t}\cup L^{>t}$, one can consider  two cases for $\omega^t\in \supp(f_t)$. Either   $\omega^t\in \supp(f_t)\cap L^{\leq t}$: then we have by definition that $(f_t\cdot 1_{T^{\leq t}})(\omega^t)=  (f_t\cdot 1_{L^{\leq t}\cap T^{\leq t}})(\omega^t)$.  Or $\omega^t\in \supp(f_t)\cap L^{>t}$: then we have $\omega^t\notin T^{\leq t}$ (Lemma \ref{lemma:pfsupp}), hence again $(f_t\cdot 1_{T^{\leq t}})(\omega^t)=  (f_t\cdot 1_{L^{\leq t}\cap T^{\leq t}})(\omega^t)=0$.
Finally, we can compute as follows:
$[S]^t f_t\cdot \wt_t =   [S]^t f_t\cdot 1_{T^{\leq t}}\cdot\wt_t =  [S]^t f_t\cdot 1_{L^{\leq t}\cap T^{\leq t}}\cdot\wt_t$, where the first equality follows from $\mu(T^{\leq 1})=1$ and elementary measure-theoretic reasoning, and the second one   from  the above established equality. This completes the proof of \eqref{eq:exact}.
\end{proofof}

\vsp
\vspace*{.3cm}
The following   result is   useful to relate our semantics to the filtering distribution of a Feynman-Kac model. Here we let   $\pr_j:\Omega^t\rightarrow \Omega$  ($1\leq j\leq t$) denote  the projection on the $j$th component; this is a measurable function. The result basically says that taking the expectation of $f_t$   on paths of length $t$   is the same as taking the expectation of   $h\circ \pr_t$, that only looks at the last state of a path.

\begin{lemma}\label{lemma:simple}
Let $f=\lift h$ for a nonnegative measurable $h$  defined on $\Omega$.
For each   $S$ and $t\geq 1$, we have $[S]^t f_t\cdot 1_{L^{\leq t}}\cdot\wt_t= [S]^t( h \circ \pr_t) \cdot \wt_t$, where $L_j$ ($j\geq 0$) are the branches of $f$.
\end{lemma}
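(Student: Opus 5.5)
The plan is to prove the equality pointwise $\mu^t_S$-almost everywhere, that is, to show that $f_t\cdot 1_{L^{\leq t}}\cdot\wt_t$ and $(h\circ\pr_t)\cdot\wt_t$ agree on the full-measure set $\T_t$ of Definition \ref{def:consistent}. Lemma \ref{lemma:zero}(b) then gives equality of the integrals. Recall that $f=\lift h$ has branches $L_0=\es$ and $L_j=(\term^{\mathrm{c}})^{j-1}\cdot\supp(h)\subseteq T_j$ for $j\geq 1$. Hence $L^{\leq t}=\cup_{j=1}^t L_j\cdot\Omega^{t-j}$, and $\supp(h)\subseteq\term$.

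Take $\omega^t\in\T_t$ and split into two cases according to the decomposition $\T_t=\T^{\leq t}\cup(\term^{\mathrm{c}})^t$.

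\emph{Case 1:} $\omega^t\in(\term^{\mathrm{c}})^t$. Then $\omega_t\notin\term$, so $h(\omega_t)=0$ and the right-hand side vanishes. On the left, $\omega^t$ has no prefix in any $L_j$ with $j\leq t$, since every word of $L_j$ ends in $\supp(h)\subseteq\term$. Thus $1_{L^{\leq t}}(\omega^t)=0$ and both sides are $0$.

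\emph{Case 2:} $\omega^t\in(\term^{\mathrm{c}})^{j-1}\cdot\term^{t-j+1}$ for some $1\leq j\leq t$, so that the first terminated state is $\omega_j$. Consistency of paths says that the self-loop on $\nil$ is the identity, but $\T_t$ only records membership in $\term$, not equality of states. So I would first refine the full-measure set to paths on which $\omega_{k+1}=\omega_k$ whenever $\omega_k\in\term$. This refinement is the main obstacle. It should hold $\mu^t_S$-a.e. by exactly the argument of Lemma \ref{lemma:zero}: the kernel $\K(v,\nil)=\delta_{(v,\nil)}$, so the set of paths with $\omega_k\in\term$ and $\omega_{k+1}\neq\omega_k$ has measure $0$, by the same innermost-integral computation.

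On this refined set we have $\omega_t=\omega_j$. Now $1_{L^{\leq t}}(\omega^t)=1$ iff $\omega_j\in\supp(h)$. Moreover, $f_t(\omega^t)=f(\omega^t,\star^\infty)=\lift h(\omega^t,\star^\infty)=h(\omega_j)$, because $(\omega^t,\star^\infty)\in\cy{T_j}$ and only the $j$-th summand of Definition \ref{def:simple} survives. Therefore $f_t(\omega^t)\,1_{L^{\leq t}}(\omega^t)=h(\omega_j)=h(\omega_t)$ in both subcases $\omega_j\in\supp(h)$ and $\omega_j\notin\supp(h)$. The weight factor $\wt_t(\omega^t)$ is the same on both sides, so the integrands coincide.

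Combining the two cases, the integrands agree $\mu^t_S$-a.e. Integrating against $\mu^t_S$ yields $[S]^t f_t\cdot1_{L^{\leq t}}\cdot\wt_t=[S]^t(h\circ\pr_t)\cdot\wt_t$, with measurability of both integrands given by Lemma \ref{lemma:aux}.
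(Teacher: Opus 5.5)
Your proof is correct and follows essentially the same route as the paper. In both, the crux is that $\K(v,\nil)=\delta_{(v,\nil)}$ makes paths $\mu^t_S$-a.e.\ frozen after the first terminated state, which is exactly what the paper's predicate $\eta_j$ and identity \eqref{eq:genstat} encode, so that $f_t\cdot 1_{L^{\leq t}}$ and $h\circ\pr_t$ coincide there. The only difference is presentational: the paper runs a chain of integral equalities on each piece $T_j\cdot\Omega^{t-j}$, sums over $j$, and uses Lemma \ref{lemma:zero}(b) for the non-terminated part, whereas you prove a single a.e.\ equality of integrands on a refined full-measure set.
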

\begin{proof}
We first prove a general statement about the measures $\mu^t_S$. Recall that we use $\omega^t$ to range over tuples  $(\omega_1,...,\omega_t)$. Let $g:\Omega^t\rightarrow \ereals^+$ be a measurable function. For each $1\leq j\leq t$, let $\eta_j(\omega^t):=[\omega_{j+1}=\omega_j]\cdot \cdots \cdot [\omega_{t}=\omega_j]$ the predicate  that yields 1 if and only if $\omega_{j}=\cdots =\omega_t$. Then, recalling that $T_j=(\term^{\mathrm{c}})^{j-1}\cdot\term$, we have
\begin{align}\label{eq:genstat}
	\int_{T_j\cdot\Omega^{t-j}} \, \mu^t_S(d\omega^t)\,g(\omega^t)&= \int_{T_j\cdot\Omega^{t-j}}\, \mu^t_S(d\omega^t)\,g(\omega^t)\cdot \eta_j(\omega^t)\,.
\end{align}
This equality can be checked as follows. First, write   the integral on the left- (resp. right-)hand side   of \eqref{eq:genstat} as an iterated integral,  via Theorem \ref{th:prode}(b) (Fubini): in the resulting expression, call $H_j$ (resp. $K_j$) the expression corresponding to the $t-j$ innermost iterated integrals. For each $\omega_1,...,\omega_j$, we have  the following equalities
{\small
	\begin{align}
		H_j &=\int \K(\omega_j)(d\omega_{j+1})  \int \K(\omega_{j+1})(d\omega_{j+2})\cdots \int \K(\omega_{t-1})(d\omega_{t}) \,   g(\omega^t)\cdot 1_{T_j\cdot\Omega^{t-j}}(\omega^t) \nonumber \\
		&=   g(\omega_1,...\omega_{j-1},\omega_j,...,\omega_j)\cdot 1_{ \term^{\mathrm{c}} }(\omega_1) \cdots 1_{ \term^{\mathrm{c}} }(\omega_{j-1})\cdot 1_{ \term }(\omega_{j}) \label{eq:geninner}\\
		& =  g(\omega_1,...\omega_{j-1},\omega_j,...,\omega_j)\cdot 1_{ \term^{\mathrm{c}} }(\omega_1) \cdots 1_{ \term^{\mathrm{c}} }(\omega_{j-1})\cdot 1_{ \term }(\omega_{j})\cdot \eta_j(\omega_1,...\omega_{j-1},\omega_j,...,\omega_j)   \label{eq:geninner1}\\
		&=\int \K(\omega_j)(d\omega_{j+1}) \int \K(\omega_{j+1})(d\omega_{j+2})\cdots \int \K(\omega_{t-1})(d\omega_{t}) \,   g(\omega^t)\cdot\eta_j(\omega^t)\cdot 1_{T_j\cdot\Omega^{t-j}}(\omega^t)\label{eq:geninner2}\\
		&=K_j\nonumber
	\end{align}
}\noindent
where:   \eqref{eq:geninner} is obvious if $\omega_j\notin \term$, as both sides are 0 in this case; if $\omega_j\in \term$, say $\omega_j=(v,\nil)$, then $\K(\omega_j)(\cdot)=\delta_{\omega_j}(\cdot)$ by definition of $\K$, and \eqref{eq:geninner}
follows by a repeated application of the property $\int \delta_\omega(d\omega')q(\omega')=q(\omega)$ of Dirac's measures;    \eqref{eq:geninner1} follows by definition of $\eta_j$;  and   \eqref{eq:geninner2} from the same reasoning as for \eqref{eq:geninner}.
Now \eqref{eq:genstat} follows by integrating $H_j$ and $K_j$  with respect to $\omega_1,...,\omega_j$ (note that both sides   are measurable functions of $\omega_1,...,\omega_j$, by Fubini), and then applying   Fubini on both sides, to rewrite the  resulting   iterated integrals   as integrals over  $\Omega^t$.
Now 
we have
{\small
	\begin{align}
		\int_{T_j\cdot\Omega^{t-j}}  \mu^t_S(d\omega^t)\,(f_t\cdot 1_{L^{\leq t}}\cdot \wt_t)(\omega^t)= &\int_{T_j\cdot\Omega^{t-j}}  \mu^t_S(d\omega^t)\,(f_t\cdot \wt_t)(\omega^t)\label{eq:geng00}\\
		=&\int_{T_j \cdot\Omega^{t-j}} \, \mu^t_S(d\omega^t)\, (f_t \cdot\wt_t\cdot \eta_j)(\omega^t)\label{eq:geng01}\\
		= & \int_{T_j \cdot\Omega^{t-j}} \, \mu^t_S(d\omega^t)\, (  \wt_t\cdot \eta_j)(\omega^t)\cdot h(\omega_j)\label{eq:geng0}\\
		= & \int_{T_j \cdot\Omega^{t-j}} \, \mu^t_S(d\omega^t)\, ( \wt_t\cdot \eta_j)(\omega^t)\cdot h(\omega_t)\label{eq:geng1}\\
		= & \int_{T_j \cdot\Omega^{t-j}} \, \mu^t_S(d\omega^t)\, ( \wt_t\cdot \eta_j \cdot( h\circ\pr_t))  (\omega^t)\label{eq:geng2}  \\
		= & \int_{T_j \cdot\Omega^{t-j}} \, \mu^t_S(d\omega^t)\, ( ( h\circ\pr_t)\cdot\wt_t )  (\omega^t)\label{eq:geng3}
	\end{align}
}\noindent
where one exploits the following equalities,  for $\omega^t\in T_j\cdot\Omega^{t-j}$: in \eqref{eq:geng00}, $(1_{L^{\leq t}}\cdot f_t)(\omega^t)=   f_t(\omega^t)$, as $\supp(f_t)\cap T_j\cdot\Omega^{t-j} \subseteq L^{\leq t}\cap T_j\cdot\Omega^{t-j}$; in \eqref{eq:geng01}, \eqref{eq:genstat} with $g=f_t\cdot \wt_t$; in \eqref{eq:geng0}, $f_t(\omega^t)=h(\omega_j)$; in \eqref{eq:geng1}, $\eta_j(\omega^t)h(\omega_j)=\eta_j(\omega^t)h(\omega_t)$;  in \eqref{eq:geng2}, the definition of $h(\omega_t)$ and $\pr_t(\omega^t)$;   in \eqref{eq:geng3}, again \eqref{eq:genstat},  with $g=( h\circ\pr_t)\cdot\wt_t $.
Now, we have:
{\small
	\begin{align}
		[S]^t f_t\cdot 1_{L^{\leq t}}\cdot\wt_t& = \int f_t\cdot 1_{L^{\leq t}}\cdot\wt_t\,d\mu^t\nonumber\\
		&=\int_{T^{\leq t}} f_t\cdot 1_{L^{\leq t}}\cdot\wt_t\,d\mu^t\label{eq:lift1}\\
		&=\sum_{j=1}^t\int_{T_j\cdot\Omega^{t-j}} f_t\cdot 1_{L^{\leq t}}\cdot\wt_t\,d\mu^t\label{eq:lift2}\\
		&= \sum_{j=1}^t\int_{T_j\cdot\Omega^{t-j}} ( h\circ\pr_t)\cdot\wt_t \,d\mu^t\label{eq:lift3}\\
		&=  \int_{T^{\leq t}} ( h\circ\pr_t)\cdot\wt_t \,d\mu^t\label{eq:lift4}\\
		&=  \int_{\Theta^{\leq t}}  ( h\circ\pr_t)\cdot\wt_t \,d\mu^t\label{eq:lift5}\\
		&=\int_{\Theta^{\leq t}}  ( h\circ\pr_t)\cdot\wt_t \,d\mu^t+\int_{(\term^{\mathrm{c}})^t}  ( h\circ\pr_t)\cdot\wt_t \,d\mu^t\label{eq:lift6}\\
		&=\int_{\Theta_{t}}  ( h\circ\pr_t)\cdot\wt_t \,d\mu^t\label{eq:lift7}\\
		&=\int   ( h\circ\pr_t)\cdot\wt_t \,d\mu^t\label{eq:lift8}
	\end{align}
}\noindent
where:   \eqref{eq:lift1} follows from $L^{\leq t}\subseteq T^{\leq t}$; \eqref{eq:lift2} from $T^{\leq t}=\cup_{j=1}^t T_j\cdot\Omega^{t-j}$ (disjoint union) and basic properties of integrals; \eqref{eq:lift3} from the equality  established  in \eqref{eq:geng00}---\eqref{eq:geng3}; \eqref{eq:lift4} again from $T^{\leq t}=\cup_{j=1}^t T_j\cdot\Omega^{t-j}$;
\eqref{eq:lift5}  from   Lemma \ref{lemma:zero}(b) and $\Theta_{t}\cap T^{\leq t}= \Theta^{\leq t}$; \eqref{eq:lift6} from the fact that $ h\circ\pr_t$ is identically 0 on $(\term^{\mathrm{c}})^t$  as $\supp(h)\subseteq \term$,  hence the second integral here is 0; \eqref{eq:lift7} from definition of $\Theta_{t}=\Theta_{\leq t}\cup (\term^{\mathrm{c}})^t$ (disjoint union); \eqref{eq:lift8} again from Lemma \ref{lemma:zero}(b).
\end{proof}

\vsp

\begin{proofof}{Theorem \ref{th:filtlift}}
The following equality   is easy to check and will be useful below.
\begin{align}\label{eq:fcexp}
	\expc_{\mufk_t}[h]=\frac{\expc_{\mu^t}[(h\circ\pr_t)\cdot   G]}{\expc_{\mu^t}[G]}\,.
\end{align}
As for the actual proof, first note that  $L^{\leq t}\subseteq T^{\leq t}$ by definition of lifting (Def. \ref{def:simple}), hence $f_t\cdot 1_{L^{\leq t}\cap  T^{\leq t}}= f_t\cdot 1_{L^{\leq t}}$. Concerning $\beta_L$, from the lower bound in  \eqref{eq:approx} we have that: $$ \sem S f\geq \frac{[S]^t   f_t\cdot  1_{L^{\leq t} \cap T^{\leq t}}\cdot \wt_t}{[S]^t   \wt_t}= \frac{[S]^t   f_t\cdot  1_{L^{\leq t}  }\cdot \wt_t}{[S]^t    \wt_t}= \frac{[S]^t   (h\circ \pr_t)\cdot \wt_t}{[S]^t \wt_t}=\expc_{\phi_{S,t}}[h]$$ where in the last but one step we have applied Lemma \ref{lemma:simple} to the numerator of the fraction, and in the last step we have applied \eqref{eq:fcexp} to the  model $\FC_S$, with $G=\wt_t$ as a global potential.

Concerning $\beta_U$, consider the function $f=\lift{1}_\term$ (the lifting of the function $h=1_\term$), which has branches $L_j=T_j$: it is immediate to check that for each $t\geq 1$, $ 1_{ T^{\leq t}}\cdot \wt_t=f_t\cdot 1_{ T^{\leq t}}\cdot \wt_t$. Then we can
repeat the reasoning  used  above for $\beta_L$  with this   function $f$  to prove  that $\frac{[S]^t    1_{ T^{\leq t}}\cdot \wt_t}{[S]^t   \wt_t}=\expc_{\phi_{S,t}}[1_\term]$, that is $\alpha_t=\expc_{\phi_{S,t}}[1_\term]^{-1}$. Then the upper bound in \eqref{eq:approx}
allows us to complete  the proof.
\end{proofof}

\section{The Particle Filtering algorithm}\label{app:PF}
From a computational point of view, our interest in FK models lies in the fact that they allow for a simple, unified presentation of a class of efficient inference algorithms,  known as \emph{Particle Filtering (PF)} \cite{DelMoral04,SMC}. In what follows we present an algorithm to compute the filtering distribution $\phi_t$. We will introduce below a general PF algorithm scheme  following closely \cite[Ch.11]{SMC}.

Fix a generic FK model, $\FC=(\X,t,\mu^1,\{K_i\}_{i=2}^t,\{G_i\}_{i=1}^t)$. Fix  $N\geq 1$, the number of \emph{particles}, that is instances of the random process represented by the $K_i$'s, we want to simulate.  For any tuple $W=W^{1:N}=(W^{(1)},...,W^{(N)})$    of real nonnegative random variables, the  \emph{weights},   denote by $\widehat W$ the normalized version of $W$, that is\footnote{With the proviso that e.g. $\widehat W^{(i)}:=1/N$ in the event all the $W^{(i)}$'s are 0.  In the   execution of the PF algorithm this event will occur with   probability $\rightarrow 0$ as $N\rightarrow +\infty$.} $\widehat W^{(i)}=W^{(i)}/(\sum_{j=1}^N W^{(j)})$.  A \emph{resampling scheme} for  $(N,W)$ is $N$-tuple of random variables taking values on $1..N$, say $R=(R_1,...,R_N)$, such that:  for each $i\in 1..N$, letting $F_i$ denote the number of occurrences of $i$ in  $R$, one has
$$\expc[F_i|W]=N\cdot \widehat W_i\,.
$$
In other words,    $R$ is a randomized selection process of $N$ indices out of $1..N$, with repetitions, such that, on average,  each index $i\in 1..N$    is selected a number of times  proportional to its  weight in $W$. We shall write $R(W)$ to indicate that $R$ depends on a given weight vector $W$.  Various   resampling schemes have been proposed in the literature. Perhaps the simplest is letting $R$ be $N$ i.i.d. random variables each distributed according to $\widehat W$: this is known as  \emph{multimomial resampling}. We refer the reader to the specialized literature on PF for details and efficient implementation methods, see e.g.  \cite[Ch.9]{SMC} and references therein.

Algorithm \ref{alg:PF} is a generic PF algorithm. At the $k$-th iteration, for $k=1,...,t$, two $N$-tuples are extracted:
\begin{itemize}
\item a tuple of states $X_k=X^{1:N}_k=(X^{(1)}_k,...,X^{(N)}_k)\in \X^N$;
\item a tuple of (unnormalized) weights $W_k=W^{1:N}_k=(W^{(1)}_k,...,W^{(N)}_k)\in ({\reals^+})^N$.
\end{itemize}
The elements of $X^{1:N}_k$   depend   on the tuples   $X^{1:N}_{k-1}, W_{k-1}$ of the previous
iteration. The purpose of the resampling step 4 is to give more importance to particles with higher weight, when extracting the next tuple of particles, while discarding particles with lower weight. In  case $R$ is multinomial resampling,
steps 4-5 amount  to drawing each $X^{(j)}_k$   from the (empirical) distribution $\sum_{j=1}^N \widehat W^{(j)}_{k-1}\delta_{K_k(X^{(j)}_{k-1})}$. The weights $W^{1:N}_k$   are computed via   the potential function  $G_k$, and will be used in the resampling step at     iteration $k+1$, if $k<t$, or returned as part of the algorithm's output.

The following theorem states consistency, in an asymptotic sense, of the PF algorithm with respect to the filtering distribution $\phi_t$ on $\X$.  Its practical implication is that we can estimate expectations with respect to the filtering distribution as weighted sums. Note that in its   statement  $t$ is held fixed --- it is one of the parameter of the FK model --- while the number of particles $N$ tends to $+\infty$.

\ifmai
--------------TO PLACE---------------

Note that there are no loops where the number of iterations depends on $N$;  the \textbf{for} loop in lines 5--7   only scans the transitions set $E$, whose size is independent of $N$. Line 8 is just a vectorized implementation of sampling from the Markov kernel function in \eqref{eq:FCM}. Line 9 is a vectorized implementation of the combined score function \eqref{eq:scoz}.
In the actual TensorFlow implementation, the sums in lines 8 and 9  are encoded via boolean masking and vectorized operations.

--------------------------------------
\fi

\begin{theorem}[convergence of PF, \cite{SMC}]\label{th:PFconv} Consider the random variables  $(X_t ,W_t )$ ($t\geq 1$) as returned by   Algorithm  \ref{alg:PF}. Suppose that the FK measure $\mufk $ is well defined on $\X^t$. Further assume  that $R(\cdot)$ is multinomial resampling and that the potential functions $G_k$ all have a finite upper bound. For each nonnegative measurable function $h$ defined on $\X$, we have:
$\sum_{j=1}^N \widehat W^{(j)}_t\cdot h(X^{(j)}_t)\, \longrightarrow\,\expc_{\phi_t}[h]$   almost surely\footnote{See \cite[Ch.11]{SMC} for the precise definition of the probability space where this assertion makes sense.} as $N\rightarrow+\infty$.
\end{theorem}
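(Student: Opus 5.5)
This is the standard almost-sure consistency of bootstrap particle filtering. I would prove it by induction on the time index $k=1,\dots,t$, working with unnormalised predictive measures. Define measures on $\X$ by $\gamma_1:=\mu^1$ and $\gamma_{k}(A):=\int \gamma_{k-1}(dx)\,G_{k-1}(x)\,K_k(x)(A)$. Then $\expc_{\phi_t}[h]=\gamma_t(G_t h)/\gamma_t(G_t)$. This identity follows from \eqref{eq:FCMe} and \eqref{eq:phit} by writing $\expc_{\mu^t}[(h\circ\pr_t)\cdot G]$ as an iterated integral via Theorem \ref{th:prode}(b). Set $\eta_k:=\gamma_k/\gamma_k(\X)$ and introduce the empirical measures $\eta^N_k:=\frac1N\sum_j\delta_{X^{(j)}_k}$. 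The inductive claim will be: for every bounded measurable $g$ on $\X$, $\eta^N_k(g)\to\eta_k(g)$ almost surely. The theorem then follows by writing
$\sum_j \widehat W^{(j)}_t h(X^{(j)}_t)=\eta^N_t(G_th)/\eta^N_t(G_t)$.
Because $\phi_t$ is well defined, $\gamma_t(G_t)>0$, so the denominator converges to a positive limit and the ratio converges.

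The base case $k=1$ is the strong law of large numbers, since the $X^{(j)}_1$ are i.i.d. from $\mu^1$. For the inductive step, let $\mathcal G_{k-1}$ be the sigma-field generated by the particles up to time $k-1$. Under multinomial resampling, conditionally on $\mathcal G_{k-1}$ the $X^{(j)}_k$ are i.i.d. with law
$$\textstyle \frac{\sum_i G_{k-1}(X^{(i)}_{k-1})\,K_k(X^{(i)}_{k-1})(\cdot)}{\sum_i G_{k-1}(X^{(i)}_{k-1})}.$$
Split $\eta^N_k(g)-\eta_k(g)$ into two terms.
\begin{itemize}
\item The first term is $\eta^N_k(g)-\expc[\eta^N_k(g)\mid\mathcal G_{k-1}]$, an average of conditionally independent, centred variables bounded by $\|g\|_\infty$.
\item The second term is $\expc[\eta^N_k(g)\mid\mathcal G_{k-1}]-\eta_k(g)$. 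It equals $\eta^N_{k-1}(G_{k-1}Kg)/\eta^N_{k-1}(G_{k-1})-\eta_k(g)$, where $Kg(x):=\int K_k(x)(dy)g(y)$ is bounded and measurable by Theorem \ref{th:prode}(b). This is where bounded potentials are needed: $G_{k-1}Kg$ and $G_{k-1}$ are bounded, so the inductive hypothesis applies to both, and $\eta_{k-1}(G_{k-1})>0$ because $\gamma_t(G_t)>0$ forces all intermediate normalisers to be positive. Hence the second term tends to $0$ almost surely.
\end{itemize}

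The main obstacle is the first term, since the summands form a triangular array with a changing conditional law, and the ordinary strong law does not apply. I would bound conditional fourth moments: $\expc[(\text{first term})^4\mid\mathcal G_{k-1}]\le C\|g\|^4_\infty/N^2$. This bound is summable in $N$, so Borel--Cantelli gives almost-sure convergence to $0$. A technical subtlety is that the countable family of test functions must be handled uniformly, because the inductive hypothesis is used for functions built from $g$. I would avoid this by applying the induction only to the finitely many functions $g_k,g_{k-1}=G_{k-1}Kg_k,\dots$ generated backward from $G_th$ and $G_t$. Finally, if $h$ is unbounded, I would truncate with $h\wedge n$ and pass to the limit by monotone convergence. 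This works for the lower bound directly, and for the general statement under integrability of $h$, as in \cite[Ch.11]{SMC}.
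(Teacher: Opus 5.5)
For bounded $h$ your argument is correct. It is the standard induction behind the result that the paper simply imports from \cite[Ch.11]{SMC}; the paper gives no proof of its own. The following steps all work as you describe:
the recursion for $\gamma_k$;
the conditionally i.i.d.\ structure produced by multinomial resampling;
the fourth-moment plus Borel--Cantelli control of the sampling term;
the induction hypothesis applied to the bounded functions $G_{k-1}Kg$ and $G_{k-1}$.
Two small corrections. First, the base case should use the same fourth-moment argument rather than the ordinary strong law, since a fresh system is drawn for each $N$ and the $X^{(j)}_1$ already form a triangular array. Second, there is no uniformity issue over test functions: for a fixed $g$ the inductive step uses the hypothesis for only two functions, so only finitely many null sets occur.

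The genuine gap is the last step, for unbounded $h$ with $\expc_{\phi_t}[h]<+\infty$. Truncation with $h\wedge n$ and monotone convergence only give $\liminf_{N}\sum_j \widehat W^{(j)}_t h(X^{(j)}_t)\geq \expc_{\phi_t}[h]$ almost surely. That does settle the case $\expc_{\phi_t}[h]=+\infty$. The matching $\limsup$, however, needs control of $\sum_j \widehat W^{(j)}_t (h-h\wedge n)(X^{(j)}_t)$ uniformly in $N$, and integrability of $h$ does not provide it.

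In fact no argument can provide it in your setting, where the systems for different $N$ may be arbitrarily dependent. Take them independent, with $t=1$, $G_1\equiv 1$, and $c:=\expc_{\mu^1}[h]+1$.
The events $\{\frac1N\sum_{j=1}^N h(X^{(j)}_1)>c\}$ are then independent across $N$.
Since $h\geq 0$, one particle with $h>cN$ suffices, so each event has probability at least $\frac12\min(1,N\,\mu^1(\{h>cN\}))$.
This series diverges whenever $\expc_{\mu^1}[h^2]=+\infty$, for instance when $h$ is integrable but not square-integrable.
By the second Borel--Cantelli lemma, the averages then exceed $c$ infinitely often and do not converge to $\expc_{\mu^1}[h]$.

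Thus your proof establishes the statement for bounded $h$, and for $h$ with infinite $\phi_t$-expectation. For unbounded integrable $h$ you need extra hypotheses that neither the statement nor your argument supplies. One option is moment conditions propagated through the recursion, so that a conditional fourth-moment bound can be run with $G_t h$ itself. The claim that truncation alone suffices under integrability should be dropped.
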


\section{Additional details on experiments}\label{app:air}
We give a more detailed textual description of the considered examples.
\begin{enumerate}
\item \emph{Aircraft tracking} (AT) \cite{WuEtAl}. An aircraft is modeled as a point moving on a 2D plan according to a Gaussian process, The aircraft   is tracked by six radar: at each discrete time step, each radar  noisily  measures the distance of the aircraft from its own position;  specific distances   are being   observed.
We target the posterior expected value of the final horizontal position of the aircraft. This is by far the most complicated example among those considered here; we provide a detailed description of its coding in terms of a PPG at the end of this section.

\item \emph{Drunk man and mouse} (DMM), Example \ref{ex:dm1}.
We target the posterior expected value of the drunk man variance.

\item \emph{Hare and tortoise} (HT), see e.g. \cite{Bagnall}. This model  simulates a race between a hare and a tortoise along a one-dimensional line: the tortoise takes a step of length 1 every time step, while the hare occasionally takes a step whose length is Gaussian-distributed.   Additionally, at each time step it is observed that the hare and the tortoise are never at a distance more than 10   from each other. The race is terminated as soon as the hare overtakes the turtle. We target the posterior expected value of the final   position of the hare.

\item \emph{Bounded retransmission protocol}, \cite{5}.   A number of packets must be transmitted over a lossy channel, and each packet can be lost with a probability of $0.02$. Losses can be observed only during the transmission of the last $80$ packets.
The transmission is considered successful if none of the packets needs more than $4$ retransmissions.  We target the posterior expected value of failure probability.

\item \emph{Non-i.i.d. loops}, \cite{5}.  This   model describes the behaviour of a discrete sampler that keeps tossing two fair coins, until they both turn tails. Additionally, it is observed that  at each iteration at least one of the coins yields the same outcome as in the previous iteration. This observation  induce  data
dependencies across consecutive loop iterations.
We target the posterior expected value of  the  number of iterations until termination.
\end{enumerate}



For each model, we draw   samples of $N$  particles ($N\in R=\{10^3,10^4,10^5,10^6\}$), and the corresponding weights, with each of the considered tools/algorithms. With the drawn samples and weights, the tools compute the (posterior) expected value of the quantities of interest. For each tool, we are interested in assessing:
\begin{itemize}
\item the accuracy of the computed expected values;
\item the quality of the drawn samples;
\item the performance in terms of execution time.
\end{itemize}
Concerning accuracy,    we do not know the exact value of the targeted expected values (but in one case, see below), so a direct comparison is not possible. Nevertheless, asymptotic consistency of  PF and  other SMC sampling algorithms, in the sense discussed in subsection \ref{sub:algo}, guarantees   that as as $N\rightarrow +\infty$ the sample estimates will converge to the true expected value. For a specific values of $N$, there is no obvious way to judge how close we are to convergence. Pragmatically, we will take  a small difference of estimation between consecutive values of $N$ in the considered range $R$, and the fact that  different tools yield  estimates very close to one another, as  an empirical evidence of convergence and accuracy.
As remarked above,  differently from the other considered tools that return   truncated point estimates, \TSIpf\ provides in principle  lower and upper bounds of $\sem S f$ as an application of Theorem \ref{th:filtlift}.
The upper bound will be vacuous ($+\infty$) whenever the target  $f$ is unbounded, which is the case for    HT and NIID   here. Beside, examples AT and BRP are bounded loops, for which $\alpha_t=1$, hence $\beta_L=\beta_U=\sem S f$.

We  measure  empirically the quality of the drawn samples   in terms of    \emph{effective sample size (ESS)} \cite{RobertESS} of the corresponding weights $W_1,...,W_N$:
\vspace{-0.2cm}
$$ESS:=\frac{(\sum_{i=1}^N W_i)^2}{\sum_{i=1}^N W^2_i}\,.
\vspace{-0.2cm}$$
ESS is an empirical measure of efficiency of the sampled particles, the higher the better.
Specifically, ESS quantifies the number of i.i.d. samples from the target distribution that would be required to achieve the same variance in the estimator as that obtained from the weighted samples. So a ESS close to $N$ indicates that the $N$ particles appear  to be drawn i.i.d. from the target distribution.

The experiments  have been run on a  2.8 GHz Intel Core i7 PC, with 16GB RAM and  Nvidia T500 GPU. \TSIpf\ and webPPL have been run under Windows 10 OS, with CUDA Toolkit v. 11.8, driver v. 522.06. CorePPL and RootPPL have been run under Ubuntu 22, with CUDA Toolkit v. 12.2, driver v. 535.86.

In Table \ref{tab:table1} (Section \ref{sec:experiments}), we report the execution time, the estimated expected value and the effective sample size for \TSIpf, CorePPL, RootPPL and webPPL, as the number $N$ of particles increases.
In the case of   \TSIpf,    a single value estimate is reported  for all examples but   DMM:   for programs AT and BRP this is an estimate of $\sem S f$; for   examples HT and NIID (unbounded $f$), this is an estimate the lower bound $\beta_L$, being the upper bound   vacuous as discussed above.
We also remark  that for NIID, it is known that $\sem S f = \frac{24}7 = 3.428\cdots$ \cite{5}. We   note that, at least for $N\geq 10^5$,   the tools tend to yield very similar estimates of the expected value\footnote{An exception is represented by the result returned by CorePPL for the NIID example.  The results of the other three tools agree with each other and with the exact value $\frac{24}7=3.428\cdots$, though.}, as a consequence of the asymptotic consistency of   PF and other SMC algorithms:  we take this  as an empirical evidence of accuracy.
%




We end this section with an explicit description of the PPG for AT.
An aircraft is modeled as a point moving on a 2D plan   according to a Gaussian process, for $t=1,...,8$ discrete time instants. Throughout these time points, the airplane  is tracked by six radar. 
Each radar is characterized by a   radius:  at each time,  if the aircraft is within the radar's radius, the radar returns the noisily measured   distance from the aircraft, otherwise
the radar just returns     a noisy version of its own radius.
We aim to infer the final horizontal position of the aircraft, i.e. the   value of $x$ at time $t=8$, conditioned  on     actual  observed data  obtained from  the six radars at all eight time instants. 

In the PPG below, $o_{ij}$ is the   observed distance at time $i$ from radar $j$, for $1\leq i\leq 8$ and $1\leq j\leq 6$, while $(rx_j,ry_j)$ and $r_j$  are the coordinates and radius of   radar $j$, respectively.  The actual numerical data can be found in \cite{WuEtAl}. Moreover, $B(p)$ is the Bernoulli distribution of parameter $p$, while $N_T(a,b,c,d)$ represents the Normal density of mean $c$ and standard deviation $d$ truncated at $[a,b]$; $N_T(a,b,c,d)(z)$ is the value at $z$ of this density.

\begin{center}
\tikzset{every picture/.style={line width=0.75pt}} 

\begin{tikzpicture}[x=0.75pt,y=0.75pt,yscale=-1,xscale=1]
	
	\draw   (108.39,155.28) .. controls (108.49,148.04) and (114.23,142.26) .. (121.2,142.37) .. controls (128.17,142.48) and (133.73,148.44) .. (133.62,155.69) .. controls (133.52,162.93) and (127.78,168.71) .. (120.81,168.6) .. controls (113.84,168.49) and (108.28,162.53) .. (108.39,155.28) -- cycle ;
	\draw    (133.62,155.69) -- (188.63,156.26) -- (214.5,156.26) ;
	\draw [shift={(217.5,156.26)}, rotate = 180] [fill={rgb, 255:red, 0; green, 0; blue, 0 }  ][line width=0.08]  [draw opacity=0] (8.93,-4.29) -- (0,0) -- (8.93,4.29) -- cycle    ;
	\draw   (217.5,156.26) .. controls (217.5,148.25) and (223.74,141.76) .. (231.45,141.76) .. controls (239.15,141.76) and (245.4,148.25) .. (245.4,156.26) .. controls (245.4,164.26) and (239.15,170.76) .. (231.45,170.76) .. controls (223.74,170.76) and (217.5,164.26) .. (217.5,156.26) -- cycle ;
	\draw    (245.4,156.26) .. controls (283.51,142.16) and (228.53,105.77) .. (231.13,139.06) ;
	\draw [shift={(231.45,141.76)}, rotate = 261.3] [fill={rgb, 255:red, 0; green, 0; blue, 0 }  ][line width=0.08]  [draw opacity=0] (8.93,-4.29) -- (0,0) -- (8.93,4.29) -- cycle    ;
	\draw    (231.45,170.76) -- (231.91,218.87) ;
	\draw [shift={(231.93,220.87)}, rotate = 269.45] [color={rgb, 255:red, 0; green, 0; blue, 0 }  ][line width=0.75]    (10.93,-3.29) .. controls (6.95,-1.4) and (3.31,-0.3) .. (0,0) .. controls (3.31,0.3) and (6.95,1.4) .. (10.93,3.29)   ;
	\draw   (217.98,235.37) .. controls (217.98,227.36) and (224.22,220.87) .. (231.93,220.87) .. controls (239.63,220.87) and (245.88,227.36) .. (245.88,235.37) .. controls (245.88,243.37) and (239.63,249.87) .. (231.93,249.87) .. controls (224.22,249.87) and (217.98,243.37) .. (217.98,235.37) -- cycle ;
	\draw   (220.18,235.37) .. controls (220.18,228.92) and (225.44,223.7) .. (231.93,223.7) .. controls (238.42,223.7) and (243.68,228.92) .. (243.68,235.37) .. controls (243.68,241.81) and (238.42,247.04) .. (231.93,247.04) .. controls (225.44,247.04) and (220.18,241.81) .. (220.18,235.37) -- cycle ;
	\draw    (245.88,235.37) .. controls (277.13,234.49) and (255.8,198.14) .. (242.82,223.51) ;
	\draw [shift={(241.64,226.04)}, rotate = 292.96] [fill={rgb, 255:red, 0; green, 0; blue, 0 }  ][line width=0.08]  [draw opacity=0] (8.93,-4.29) -- (0,0) -- (8.93,4.29) -- cycle    ;
	\draw  [xshift=28,line width=0.75]  (436.23,186.42) .. controls (440.9,186.3) and (443.17,183.91) .. (443.04,179.24) -- (442.39,154.08) .. controls (442.22,147.41) and (444.46,144.02) .. (449.13,143.9) .. controls (444.46,144.02) and (442.04,140.75) .. (441.87,134.09)(441.95,137.09) -- (441.41,116.4) .. controls (441.29,111.73) and (438.9,109.46) .. (434.23,109.58) ;
	\draw  [dash pattern={on 0.84pt off 2.51pt}]  (238.23,162.42) -- (288.23,198.58) ;
	
	\draw (143.34,105.82) node [anchor=north west][inner sep=0.75pt]  [font=\scriptsize,rotate=-359.58,xslant=0]  {$ \begin{array}{l}
			x\sim N( -1.5,1) ;\\
			y\sim N( 2,1) ;\\
			t:=1
		\end{array}$};
	\draw (253.69,55.58) node [anchor=north west][inner sep=0.75pt]  [font=\scriptsize,rotate=-359.84,xslant=0]  {$ \begin{array}{l}
			[ 1\leqslant t\leqslant 8] ,\\
			x\sim N( x,2) ;\\
			y\sim N( y,2) ;\\
			t:=t+1\\
			let\ d=d(( x,y) ,( rx_{j} ,ry_{j})) \ in\ \\
			\ \ \ \ \ d_{j} \sim \ if\ d > r_{j} \ then\ \\
			\ \ \ \ \ \ \ \ \ \ \ \ \ \ \ \ \ \ \ if\ B( .999) \ then\ r_{j} \ \\
			\ \ \ \ \ \ \ \ \ \ \ \ \ \ \ \ \ \ \ \ \ \ \ \ \ \ else\ r_{j} +N_{T}( 0,1,0,r_{j}) \ \\
			\ \ \ \ \ \ \ \ \ \ \ \ \ \ \ \ \ \ \ \ else\ d+N_{T}( 0,1,0,r_{j}) \ ;
		\end{array}$};
	\draw (175.62,184.52) node [anchor=north west][inner sep=0.75pt]  [font=\scriptsize,rotate=-359.84,xslant=0]  {$[ t\notin [ 1,8]]$};
	\draw (116.23,147.42) node [anchor=north west][inner sep=0.75pt]    {$0$};
	\draw (226.55,148.29) node [anchor=north west][inner sep=0.75pt]    {$1$};
	\draw (226.55,227.62) node [anchor=north west][inner sep=0.75pt]    {$2$};
	\draw  [dash pattern={on 0.84pt off 2.51pt}]  (290.23,199.58) -- (503.23,199.58) -- (503.23,248.58) -- (290.23,248.58) -- cycle  ;
	\draw (293.23,211.98) node [anchor=north west][inner sep=0.75pt]  [font=\scriptsize]  {$\gamma =\left[ \ \sum {_{i=1}^{8}}[ t=i] \cdot \prod _{j=1}^{6} N( o_{ij} ,.01)( d_{j})\right] \ \ \ \ \ $};
	\draw (491,138.4) node [anchor=north west][inner sep=0.75pt]  [font=\scriptsize]  {$( j=1,...6)$};

\end{tikzpicture}
\end{center}

Finally, we report an enlarged version of the plots of Fig. \ref{fig:scatterplot}.

\begin{figure}[ht]
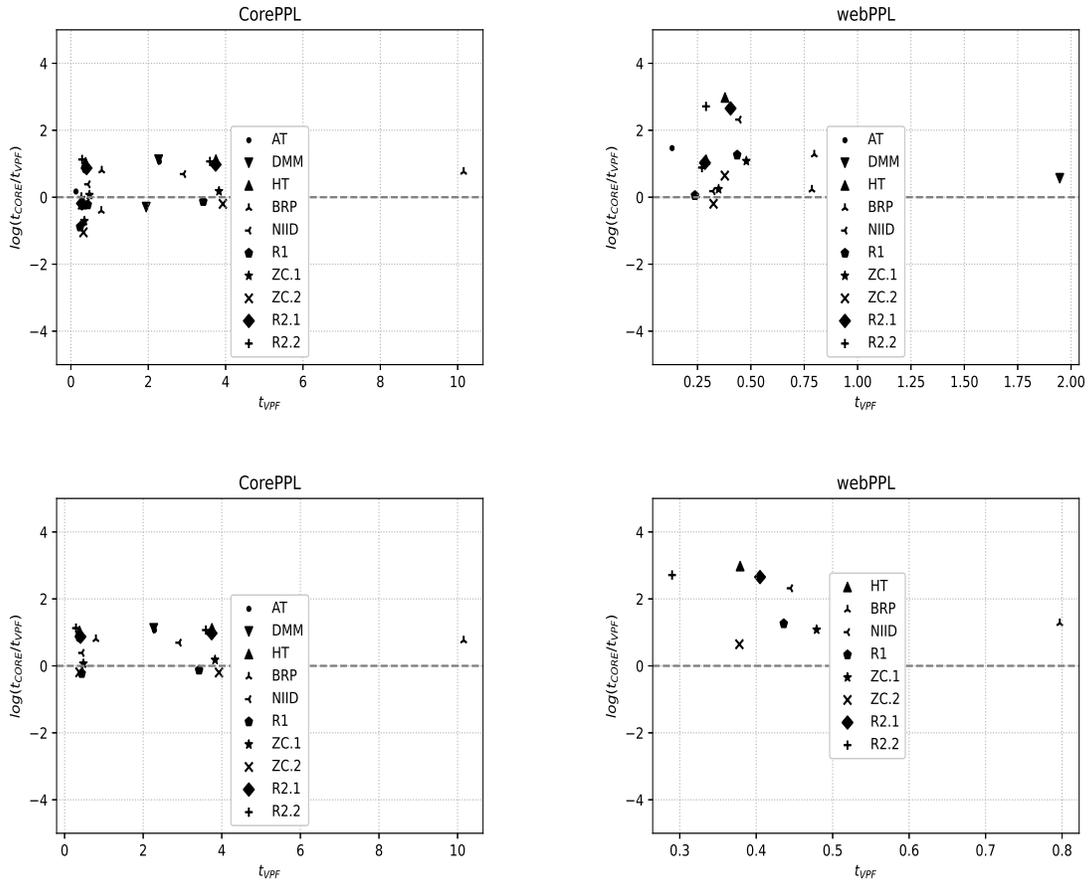

	\centering
	
	\includegraphics[width=0.48\textwidth,height=5.9cm]{CORE-eps-converted-to.pdf}
	\hfill 
	\includegraphics[width=0.48\textwidth,height=5.9cm]{webPPL-eps-converted-to.pdf}
	
	\vspace{0.3cm}
	
	\includegraphics[width=0.48\textwidth,height=5.9cm]{CORE2-eps-converted-to.pdf}
	\hfill
	\includegraphics[width=0.48\textwidth,height=5.9cm]{webPPL2-eps-converted-to.pdf}
	
	\caption{
		Enlarged version of plots in Fig. \ref{fig:scatterplot}.
	}
	\label{fig:scatterplot2}
	
\end{figure}

\ifmai
\begin{center}{
	\begin{sidewaystable}[]
		{ \renewcommand{\arraystretch}{1.7}		
			\centering
			
			\resizebox{\textwidth}{!}{ \large
				 \begin{tabular}{|c|c||c|c|c|c||c|c|c|c||c|c|c|c||c|c|c|c||c|c|c|c|}
					\hline
					\multicolumn{2}{|c||}{\multirow{2}{*}{}} & \multicolumn{4}{c||}{\textbf{AT}} & \multicolumn{4}{c||}{\textbf{DMM}} & \multicolumn{4}{c||}{\textbf{HT}} & \multicolumn{4}{c||}{\textbf{BRP}} & \multicolumn{4}{c|}{\textbf{NIID}} \\
					\cline{3-22}
					\multicolumn{2}{|c||}{}& \small{\textbf{VPF}} & \small{\textbf{CorePPL}} & \small{\textbf{RootPPL}} & \small{\textbf{webPPL}} &\small{\textbf{VPF}}& \small{\textbf{CorePPL}} & \small{\textbf{RootPPL}} & \small{\textbf{webPPL}} & \small{\textbf{VPF}} & \small{\textbf{CorePPL}} & \small{\textbf{RootPPL}} & \small{\textbf{webPPL}} &\small{\textbf{VPF}} & \small{\textbf{CorePPL}} & \small{\textbf{RootPPL}} & \small{\textbf{webPP}L} & \small{\textbf{VPF}} & \small{\textbf{CorePPL}} & \small{\textbf{RootPPL}} & \small{\textbf{webPPL}} \\
					\hline
					
					\multirow{3}{*}{$N=10^3$}
					&\textit{\scriptsize time}
					&\textbf{0.009}&0.014&0.034&0.190
					&0.283&\textbf{0.016}&0.184&0.140
					&0.274&\textbf{0.015}&0.159&0.152
					&0.676&\textbf{0.021}&1.014&0.155
					&0.240&\textbf{0.010}&0.150&0.061 \\
					\cline{2-22}
					&\textit{\scriptsize EV}
					&6.805&6.955&8.653&6.696
					&0.436$\pm 0.057$&0.502&0.432&0.431
					&32.834&33.683&33.988&32.368
					&0.018&0.016&0.024&0.023
					&3.594&2.694&3.500&3.473 \\
					\cline{2-22}
					&\textit{\scriptsize ESS}
					&\textbf{1000}&\textbf{1000}&\textbf{1000}&{999}
					&\textbf{991.0}&817.63&900.0&25.81
					&\textbf{955.0}&758.9&821.0&951.1
					&\textbf{1000}&\textbf{1000}&\textbf{1000}&974.5
					&\textbf{1000}&846.6&891.9&726.9 \\
					\hline

					\multirow{3}{*}{$N=10^4$}
					&\textit{\scriptsize time}
					&\textbf{0.131}&0.194&18.806&3.842
					&0.275&\textbf{0.178}&34.325&2.693
					&0.290&\textbf{0.180}&1.352&3.839
					&0.786&\textbf{0.309}&-&1.328
					&0.323&\textbf{0.058}&9.777&0.490 \\
					\cline{2-22}
					&\textit{\scriptsize EV}
					&6.817&6.967&6.168&6.760
					&0.565$\pm 0.055$&0.507&0.435&0.448
					&32.725&33.474&33.581&32.702
					&0.029&0.025&-&0.024
					&3.364&2.766&3.395&3.417 \\
					\cline{2-22}
					&\textit{\scriptsize ESS}
					&$\mathbf{10^4}$&$\mathbf{10^4}$&{9999}&\textit{9975}
					&\textbf{9908.0}&7798.8&8234.0&9963.o
					&\textbf{9445.0}&7692.9&7795.0&9476.2
					&$\mathbf{10^4}$&$\mathbf{10^4}$&-&9745.8
					&$\mathbf{10^4}$&8555.6&9283.0&7560.5 \\
					\hline

					\multirow{3}{*}{$N=10^5$}
					&\textit{\scriptsize time}
					&\textbf{0.354}&{2.252}&-&-
					&\textbf{0.529}&3.833&-&154.878
					&\textbf{0.379}&4.225&-&361.792
					&\textbf{0.797}&5.010&-&15.038
					&\textbf{0.445}&1.083&29.455&92.419 \\
					\cline{2-22}
					&\textit{\scriptsize EV}
					&6.818&6.970&-&-
					&0.506$\pm 0.066$&0.506&-&0.450
					&33.128&33.545&-&32.560
					&0.024&0.025&-&0.026
					&3.467&2.772&3.411&3.430 \\
					\cline{2-22}
					&\textit{\scriptsize ESS}
					&$\mathbf{10^5}$&9.9\text{e}$\mathbf{10^5}$&-&-
					&\textbf{99380.0}&78119.9&-&1541.07
					&94856.0&77243.0&-&\textbf{94881.29}
					&$\mathbf{10^5}$&$\mathbf{10^5}$&-&97484.0
					&$\mathbf{10^5}$&85609.7&92708.9&75809.5 \\
					\hline

					\multirow{3}{*}{$N=10^6$}
					&\textit{\scriptsize time}
					&\textbf{2.286}&26.481&-&-
					&\textbf{3.879}&46.420&-&-
					&\textbf{3.749}&49.493&-&-
					&\textbf{10.155}&58.448&-&-
					&\textbf{2.916}&14.323&-&- \\
					\cline{2-22}
					&\textit{\scriptsize EV}
					&6.834&6.980&-&-
					&0.481$\pm 0.068$&0.499&-&-
					&33.432&33.606&-&-
					&0.024&0.025&-&-
					&3.413&2.774&-&- \\
					\cline{2-22}
					&\textit{\scriptsize ESS}
					&$\mathbf{10^6}$&9.9\text{e}$10^5$&-&-
					&\textbf{993045.9}&778664.70&-&-
					&\textbf{947641.9}&771932.9&-&-
					&$\mathbf{10^6}$&$\mathbf{10^6}$&-&-
					&$\mathbf{10^6}$&855989.0&-&- \\
					\hline
				\end{tabular}
			}
		}
		\caption{Execution time ($time$) in seconds, estimated expected value ($EV$) and effective sample size ($ESS=(\sum_{i=1}^N W_i)^2/(\sum_{i=1}^N W^2_i)$ as the number of particles ($N$) increases, for \TSIpf, CorePPL, RootPPL and webPPL, when applied on Aircraft tracking (AT), Drunk man and mouse (DMM), Hare and tortoise (HT), Bounded retransmission protocol (BRP) and Non-i.i.d. loops (NIID). For \TSIpf,  with reference to Theorem \ref{th:filtlift}  we have $EV=\beta_L=\beta_U$ (as $\alpha_t=1$).  RootPPL and webPPL did not produce any result  with some input configurations due to insufficient memory,  or because reaching the timeout. We set a timeout of  $500$s. Best results for $time$ and $ESS$ marked in boldface.}
		\label{tab:table1}
	\end{sidewaystable}	
}\end{center}
\fi
\else
\fi

\ifmai
\appendix
\vspace*{-0.5cm}
\section{Experimental validation}\label{app:exp}
\vspace*{-0.08cm}
We   illustrate some   experimental results obtained with a proof-of-concept  TensorFlow-based \cite{TF} implementation of Algorithm \ref{alg:VPF}. We still refer to this implementation as \TSIpf.  We have considered a number of challenging probabilistic programs that feature conditioning inside loops.   For all these programs, we will estimate $\sem S f$, for given functions $f$,  relying on the bounds provided by Theorem \ref{th:filtlift}  in terms of expectations w.r.t. filtering distributions. Such expectations will be estimated via \TSIpf.
We also compare \TSIpf\ with two state-of-the-art PPLs, webPPL \cite{webppl} and CorePPL \cite{CorePPL}. webPPL is  a popular PPL supporting several inference algorithms, including SMC, where resampling is handled via continuation passing.   We have chosen to   consider CorePPL  as it supports a very efficient implementation of PF.  In \cite{CorePPL},  a comparison of CorePPL with webPPL, Pyro \cite{Pyro} and other PPLs in terms of performance shows the superiority of CorePPL SMC-based inference across a number of benchmarks.   As discussed in the Introduction, CorePPL's implementation  is based on a compilation into an intermediate format, conceptually similar to our PPGs\footnote{Direct  compilation of CorePPL to GPU  via the  intermediate-level format   RootPPL     is also supported. However, the results we have obtained with RootPPL are generally worse in terms of  execution time, and not presented here. Our PC configuration is as follows. OS: Windows 10; CPU: 2.8 GHz Intel Core i7; GPU: Nvidia T500, driver v. 522.06; TF: v. 2.10.1; CUDA Toolkit v. 11.8; cuDNN SDK v. 8.6.0.}.

\paragraph{Models} For our experiments we have considered the following  programs:
\emph{Aircraft tracking} (AT, \cite{WuEtAl}), \emph{Drunk man and mouse} (DMM, Example \ref{ex:dm1}), \emph{Hare and tortoise} (HT, e.g. \cite{Bagnall}), \emph{Bounded retransmission protocol} (BRP, \cite{5}), \emph{Non-i.i.d. loops} (NIID, e.g. \cite{5}), the \emph{ZeroConf} protocol (ZC, \cite{2}), and two variations of \textit{Random Walks},  RW1 (\cite{VMCAI24}, Example 2) and RW2 in the following. In particular, AT is a model where a single aircraft is tracked in a 2D space using noisy measurements from six radars.
HT simulates a race between a hare and a tortoise on a discrete line. BRP models a scenario where multiple packets are transmitted over a lossy channel.    NIID   describes a process that keeps tossing two fair coins until both show tails. ZC is an idealized version of the network connection protocol by the same name.  RW1, RW2 are random walks with Gaussian steps. The pseudo-code of these models  is reported in Appendix \ref{app:models}.

These programs feature conditioning/scoring inside loops.
In particular, DMM, HT and NIID feature unbounded loops: for these three programs, in the case of \TSIpf\ we have truncated the execution after   $k=1000,100,100$ iterations, respectively, and set the time parameter $t$ of Theorem \ref{th:filtlift} accordingly, which allows us to deduce bounds on the value of $\sem S f$.\footnote{For the precise definition of $f$ in each case, see  Appendix \ref{app:models}.} For the other tools, we just consider the   truncated estimate returned  at the end of   $k$ iterations.  AT, BRP, ZC,  RW1 and RW2  feature bounded loops, but are nevertheless quite challenging. In particular,  AT   features multiple conditioning inside a for-loop,  sampling from a mix of continuous and discrete distributions, and noisy observations.
Below, we discuss the obtained experimental results in terms of accuracy, performance, scalability.
%

\paragraph{Accuracy} We report in Table \ref{tab:table1} the execution time, the estimated expected value and the Effective Sample Size (ESS, a measure of diversity of particles, the higher the better; \iffull see Appendix \ref{app:air}\else see \cite{BC25}\fi) for \TSIpf, CorePPL and webPPL, as the number $N$ of particles increases. At least for $N\geq 10^5$,   the tools tend generally to  return   similar estimates of the expected value, which we take    as an empirical evidence of accuracy. Additional insight into accuracy is obtained by directly comparing the results of VPF with those of webPPL-rejection (when available), which is an exact inference algorithm. The expected values estimated by webPPL-rejection  are consistently in line to those of VPF.  In terms of ESS, the difference across the tools is significant. Except for model RW1,  VPF  yields ESS that are higher  or comparable to those of the other tools.
%
%

\paragraph{Performance} For larger values of $N$   \TSIpf\  generally outperforms   the other   considered tools   in terms of execution time. The difference is especially noticeable  for $N=10^6$.
A graphical representation of the data in Table \ref{tab:table1} is provided in Figure \ref{fig:scatterplot} (see Figure \ref{fig:scatterplot2} for an enlarged version of the plots), with scatterplots showing the ratio of execution times $(time_{\mathrm{other-tool}} / time_{\mathrm{VPF}})$ on a log scale.
In the case of WebPPL, nearly all data points lie above the x-axis, indicating superiority of VPF. In the case of CorePPL, for $N=10^5$ the data points are  quite uniformly distributed  across the x-axis, indicating basically a tie. For $N=10^6$, we have a majority of points above  the x-axis, indicating again superiority of VPF.

A closer look in the $N=10^6$  case reveals that the only programs where CorePPL beats VPF
are RW1 and ZC. This is most likely due to the low probability of conditioning  in these programs; for instance in  RW1  just a single final conditioning is performed. As in CorePPL   resampling  is only performed following a conditioning, this may   explain its lower execution times in these cases.  To further investigate this issue, we consider   RW2, where the probability of conditioning is governed by a parameter $\lambda\in [0,1]$, and run it for different values of $\lambda$.
The obtained results are showed in Figure \ref{fig:time}. We   observe that for both CorePPL and WebPPL execution time tends to increase  as the probability  $\lambda$ of conditioning  increases; on the contrary,  the execution time of VPF appears to be insensitive to  $\lambda$. This suggests that VPF has a definite advantage over tools with explicit resample,  on models with heavy conditioning.

\paragraph{Scalability}
\begin{wrapfigure}{r}{4.0cm}
	\vspace*{-1.2cm}
	\includegraphics[width=4.0cm,height=2.9cm]{TT.eps}
\end{wrapfigure}
The plot on the right shows the behaviour the \emph{average unit cost (per particle)} of VPF, CorePPL and WebPPL across all the models we analyzed for $N=10^3,...,10^6$ on a log-scale. Here, for each $N$ the average unit cost (in seconds) is $(t_1+t_2+..+t_k)/(N\cdot k)$, with $t_i$ the execution time of the $i$-th example. 
We can observe that the cost of VPF decreases as the number of samples increases, whereas the cost of the other tools remains constant or increases (webPPL).



%

\begin{figure}[ht]
	\centering
	\includegraphics[width=4.8cm,height=3.7cm]{timeT4.eps}
	\includegraphics[width=4.8cm,height=3.7cm]{timeT5.eps}
	\includegraphics[width=4.8cm,height=3.7cm]{timeT6.eps}
	\caption{
		Execution times (in seconds) for the RW2 program, as a function of the 	 probability \(\lambda\) of conditioning on external data for $N=10^4$ (left), $N=10^5$ (center) and $N=10^6$ (right). webPPL   missing from the right-most plot due to time-out.
		Execution times of VPF are basically insensitive to  \(\lambda\).}
	\label{fig:time}
\end{figure}
{\small
\begin{sidewaystable}
	\subsection{Table \ref{tab:table1}}
	{ \renewcommand{\arraystretch}{0.9}		
		\centering
		\resizebox{\textwidth} {!}{
			\begin{tabular}{|c|c||c|c|c||c|c|c||c|c|c||c|c|c||c|c|c|}
				\hline
				\multicolumn{2}{|c||}{\multirow{2}{*}{}} & \multicolumn{3}{c||}{\small{\textbf{AT}}} & \multicolumn{3}{c||}{\small{\textbf{DMM}}} & \multicolumn{3}{c||}{\small{\textbf{HT}}} & \multicolumn{3}{c||}{\small{\textbf{BRP}}} & \multicolumn{3}{c|}{\small{\textbf{NIID}}} \\
				\cline{3-17}
				\multicolumn{2}{|c||}{}& \small{\textbf{VPF}} & \small{\textbf{CorePPL}} &  \small{\textbf{webPPL-smc}} &\small{\textbf{VPF}}& \small{\textbf{CorePPL}} &  \small{\textbf{webPPL-smc}} & \small{\textbf{VPF}} & \small{\textbf{CorePPL}} &  \small{\textbf{webPPL-smc}} &\small{\textbf{VPF}} & \small{\textbf{CorePPL}} & \small{\textbf{webPPL-smc}} & \small{\textbf{VPF}} & \small{\textbf{CorePPL}} & \small{\textbf{webPPL-smc}} \\
				\hline
				
				\multirow{3}{*}{$N=10^3$}
				&\textit{\scriptsize time}
				&\textbf{0.009}&0.014&0.190
				& 2.348  &\textbf{0.050}&0.474
				&0.274&\textbf{0.015}&0.152
				&0.676&\textbf{0.021}&0.155
				&0.240&\textbf{0.010}&0.061 \\
				\cline{2-17}
				&\textit{\scriptsize EV}
				&6.805&6.955&6.696
				&0.478$\pm 0.110$&0.501&0.427
				&32.834&33.683&32.368
				&0.018&0.016&0.023
				&3.594&2.694&3.473 \\
				\cline{2-17}
				&\textit{\scriptsize ESS}
				&\textbf{1000}&\textbf{1000}&{999}
				&\textbf{994.6}&726.9&9.73
				&\textbf{955.0}&758.9&951.1
				&\textbf{1000}&\textbf{1000}&974.5
				&\textbf{1000}&846.6&726.9 \\
				\hline
				\multirow{3}{*}{$N=10^4$}
				&\textit{\scriptsize time}
				&\textbf{0.131}&0.194&3.842
				&{1.947}&\textbf{0.988}&7.190
				&0.290&\textbf{0.180}&3.839
				&0.786&\textbf{0.309}&1.328
				&0.323&\textbf{0.058}&0.490 \\
				\cline{2-17}
				&\textit{\scriptsize EV}
				&6.817&6.967&6.760
				&0.539$\pm 0.115$&0.498&0.481
				&32.725&33.474&32.702
				&0.029&0.025&0.024
				&3.364&2.766&3.417 \\
				\cline{2-17}
				&\textit{\scriptsize ESS}
				&$\mathbf{10^4}$&$\mathbf{10^4}$&\textit{9975}
				&\textbf{9984.5}&7797.4&69.417
				&\textbf{9445.0}&7692.9&9476.2
				&$\mathbf{10^4}$&$\mathbf{10^4}$&9745.8
				&$\mathbf{10^4}$&8555.6&7560.5 \\
				\hline

				\multirow{3}{*}{$N=10^5$}
				&\textit{\scriptsize time}
				&\textbf{0.354}&{2.252}&-
				&\textbf{2.268}&29.936&-
				&\textbf{0.379}&4.225&361.792
				&\textbf{0.797}&5.010&15.038
				&\textbf{0.445}&1.083&92.419 \\
				\cline{2-17}
				&\textit{\scriptsize EV}
				&6.818&6.970&-
				&0.537$\pm 0.120$&0.507&-
				&33.128&33.545&32.560
				&0.024&0.025&0.026
				&3.467&2.772&3.430 \\
				\cline{2-17}
				&\textit{\scriptsize ESS}
				&$\mathbf{10^5}$&9.9\text{e}$\mathbf{10^5}$&-
				&$\approx\mathbf{10^5}$&$7.6\text{e}10^4$&-
				&$\approx\mathbf{10^5}$&$7.7\text{e}10^4$&$\approx\mathbf{10^5}$
				&$\mathbf{10^5}$&$\mathbf{10^5}$&$9.7\text{e}10^4$
				&$\mathbf{10^5}$&$8.5\text{e}10^4$&$7.6\text{e}10^4$\\
				\hline

				\multirow{3}{*}{$N=10^6$}
				&\textit{\scriptsize time}
				&\textbf{2.286}&26.481&-
				&\textbf{38.977}&-&-
				&\textbf{3.749}&49.493&-
				&\textbf{10.155}&58.448&-
				&\textbf{2.916}&14.323&- \\
				\cline{2-17}
				&\textit{\scriptsize EV}
				&6.834&6.980&-
				&0.541$\pm 0.111$&-&-
				&33.432&33.606&-
				&0.024&0.025&-
				&3.413&2.774&- \\
				\cline{2-17}
				&\textit{\scriptsize ESS}
				&$\mathbf{10^6}$&9.9\text{e}$10^5$&-
				&$\approx\mathbf{10^6}$&-&-
				&$\approx\mathbf{10^6}$&$7.7\text{e}{10^5}$&-
				&$\mathbf{10^6}$&$\mathbf{10^6}$&-
				&$\mathbf{10^6}$&$8.5\text{e}{10^5}$&- \\
				\hline
				\small{\textbf{webPPL-rej}}&\textit{\scriptsize EV}&\multicolumn{3}{c||}{-}&\multicolumn{3}{c||}{0.494}&\multicolumn{3}{c||}{32.683}&\multicolumn{3}{c||}{0.023}&\multicolumn{3}{c|}{3.414}\\
				\hline
			\end{tabular}
		}
	}
	$ $\\
	$ $\\
	{ \renewcommand{\arraystretch}{0.6}		
		\centering
		\resizebox{\textwidth} {!}{
			\begin{tabular}{|c|c||c|c|c||c|c|c||c|c|c||c|c|c||c|c|c|}
				\hline
				\multicolumn{2}{|c||}{\multirow{2}{*}{}} & \multicolumn{3}{c||}{\textbf{RW1}} &  \multicolumn{3}{c||}{\textbf{ZC.1}}  &  \multicolumn{3}{c||}{\textbf{ZC.2}} &  \multicolumn{3}{c||}{\textbf{RW2.1}}&  \multicolumn{3}{c|}{\textbf{RW2.2}} \\
				\cline{3-17}
				\multicolumn{2}{|c||}{}& \small{\textbf{VPF}} & \small{\textbf{CorePPL}} & \small{\textbf{webPPL}} &\small{\textbf{VPF}} & \small{\textbf{CorePPL}} & \small{\textbf{webPPL}}&\small{\textbf{VPF}} & \small{\textbf{CorePPL}} & \small{\textbf{webPPL}}&\small{\textbf{VPF}} & \small{\textbf{CorePPL}} & \small{\textbf{webPPL}}&\small{\textbf{VPF}} & \small{\textbf{CorePPL}} & \small{\textbf{webPPL}}\\
				\hline
				
				\multirow{3}{*}{$N=10^3$}
				&\textit{\scriptsize time}
				&0.192&\textbf{0.009}&0.045
				&0.206&\textbf{0.018}&0.083
				&0.262&\textbf{0.016}&0.049
				&0.231&\textbf{0.024}&0.232
				&0.232&\textbf{0.021}&0.187\\
				
				\cline{2-17}
				&\textit{\scriptsize EV}
				&$0.323$&0.324&0.343
				&$0.212$&0.142&0.250
				&$0.514$&0.477&0.483
				&$1.046$&0.642&0.912
				&$0.677 $&0.750&1.092\\
				\cline{2-17}
				&\textit{\scriptsize ESS}
				&537.0&\textbf{1000}&46.739
				&\textbf{1000}&\textbf{1000}&392.2
				&\textbf{1000}&\textbf{1000}&245.2
				&\textbf{992.0}& 780.0&479.9
				&\textbf{999.0}&997.0&133.9\\
				\hline
				\multirow{3}{*}{$N=10^4$}
				&\textit{\scriptsize time}
				&0.238&\textbf{0.031}&0.269
				&0.349&\textbf{0.068}&0.610
				&0.325&\textbf{0.029}&0.207
				&0.285&\textbf{0.186}&3.043
				&\textbf{0.271}&0.275&2.081\\
				\cline{2-17}
				&\textit{\scriptsize EV}
				&$0.334$&0.328&0.336
				&$0.242$&0.129&0.232
				&$0.483$&0.474&0.478
				&$1.367$&0.856&1.066
				&$0.982$&0.929&1.083\\
				\cline{2-17}
				&\textit{\scriptsize ESS}
				&5163.0&\textbf{10000}&562.795
				& \textbf{10000}&\textbf{10000}&4263.0
				&\textbf{10000}&\textbf{10000}&2446.8
				&\textbf{9967.0}&7529.0&4026.6
				&\textbf{9701.0}&9350.9&582.5\\
				\hline
				\multirow{3}{*}{$N=10^5$}
				&\textit{\scriptsize time}
				&0.436&\textbf{0.260}&7.956
				& \textbf{0.479}&0.558&5.778
				&0.378&\textbf{0.243}&1.673
				&\textbf{0.405}&3.003&181.802
				&\textbf{0.290}&3.887&149.831\\
				\cline{2-17}
				&\textit{\scriptsize EV}
				&$0.337$&0.328&0.332
				&$0.174$&0.131&0.233
				&$0.493$&0.479&0.479
				&$1.009$&0.998&0.982
				&$1.040$&0.982&1.037\\
				\cline{2-17}
				&\textit{\scriptsize ESS}
				&$5.1\text{e}{10^4}$&$\mathbf{10^5}$&$5.7\text{e}{10^3}$
				&$\mathbf{10^5}$&$\mathbf{10^5}$&$4.2\text{e}{10^4}$
				&$\mathbf{10^5}$&$\mathbf{10^5}$&$2.4\text{e}{10^4}$
				&$\approx\mathbf{10^5}$&$7.3\text{e}{10^4}$&$4.7\text{e}{10^4}$
				 &$\approx\mathbf{10^5}$&$9.5\text{e}{10^4}$&$2.0\text{e}{10^3}$\\
				\hline

				\multirow{3}{*}{$N=10^6$}
				&\textit{\scriptsize time}
				&3.422&\textbf{2.569}&-
				&\textbf{3.829}&5.790&-
				&3.928&\textbf{2.485}&-
				&\textbf{3.742}&35.271&-
				&\textbf{3.595}&42.134&-\\
				\cline{2-17}
				&\textit{\scriptsize EV}
				&$0.329$&0.329&-
				&$0.245$&0.130&-
				&$0.479$&0.480&-
				&$1.011$&1.001&-
				&$1.023$&1.002&-\\
				\cline{2-17}
				&\textit{\scriptsize ESS}
				&$5.2\text{e}{10^5}$&$\mathbf{10^6}$&-
				&$\mathbf{10^6}$&$\mathbf{10^6}$&-
				&$\mathbf{10^6}$&$\mathbf{10^6}$&-
				&$\approx\mathbf{10^6}$&$7.3\text{e}{10^5}$&-
				&$\approx\mathbf{10^6}$&$9.4\text{e}10^5$&-\\
				\hline
				\small{\textbf{webPPL-rej}}&\textit{\scriptsize EV}&\multicolumn{3}{c||}{0.332}&\multicolumn{3}{c||}{0.235}&\multicolumn{3}{c||}{0.479}&\multicolumn{3}{c||}{1.022}&\multicolumn{3}{c|}{1.061}\\
				\hline
			\end{tabular}
		}
	}
	\caption{Execution time ($time$) in seconds, estimated expected value ($EV$) and effective sample size ($ESS:=(\sum_{i=1}^N W_i)^2/(\sum_{i=1}^N W^2_i)$; the higher the better, see e.g. \cite{RobertESS}) as the number of particles ($N$) increases, for \TSIpf, CorePPL and webPPL, when applied on Aircraft tracking (AT), Drunk man and mouse (DMM), Hare and tortoise (HT), Bounded retransmission protocol (BRP), Non-i.i.d. loops (NIID), ZeroConf (ZC.1, ZC.2) and  Random Walks (RW1  and RW2.1, RW2.2). For \TSIpf,  with reference to Theorem \ref{th:filtlift}:  for the bounded loops AT, BRP, RW1, RW2.1, RW2.2, ZC.1 and ZC.2, we have $EV=\beta_L=\beta_U$ (as $\alpha_t=1$); for HT and NIID, we only provide $\beta_L$, as $\beta_U$ is vacuous $(M=+\infty)$. For DMM we give the midpoint of the interval $[\beta_L,\beta_U]$ $\pm$ its half-width. Best results for $time$ and $ESS$ for each example and value of $N$ are marked in \textbf{boldface}. Everywhere,  '$-$' means  no result due to out-of-memory   or  timeout ($500$s). The results for DDM, especially for smaller values of $N$, exhibit a significant empirical variance:  those reported in the table are obtained by averaging over 10 runs of each algorithm. Generally, there is an  agrement  across the tools about the estimates  $EV$: an exception is  NIID, where CorePPL returns values significantly different from the other tools' and from the exact value $\frac{24}7=3.428\cdots$, cf. \cite{5}. Also, for DMM the  EV estimates returned by CorePPL and webPPL  appear  to be consistently lower than the midpoint of the interval returned by \TSIpf.}
	\label{tab:table1}
\end{sidewaystable}	
}
\clearpage
\begin{figure}[ht]
	\vspace{-0.3cm}
	\centering
	\includegraphics[width=7.92cm,height=5.9cm]{CORE.eps}
	\includegraphics[width=7.92cm,height=5.9cm]{webPPL.eps}\\
	\vspace{-0.35cm}
	\includegraphics[width=7.92cm,height=5.9cm]{CORE2.eps}
	\includegraphics[width=7.92cm,height=5.9cm]{webPPL2.eps}

	\caption{
	Enlarged version of plots in Fig. \ref{fig:scatterplot}.	
	}
	\label{fig:scatterplot2}
	\vspace{-0.5cm}
\end{figure}
\fi

\section{Probabilistic programs pseudo-code}\label{app:models}
\vspace{-0.2cm}
We consider the following probabilistic models described for example in \cite{introGA}. For convenience, the programs are described in the language of \cite{introGA}, which is based on sequential composition, but they are easy to translate into our language. The \verb"return" statement at the end of each program describes the function $f$ considered in the estimation of $\sem{S}f$; e.g. the DMM example, \verb"return r" means $f$  is the lifting of the function $(d,r,x,y,S)\mapsto r$.
\newcommand\numbered{\arabic{VerbboxLineNo}:\hspace{1ex}}
\begin{ssmall}
\begin{myverbbox}[\numbered]
{\vtheta}
while(time<=8){
    float[] radius;
    float obs-dist;
    if(time==0){
        x = Gaussian(2,1);
        y = Gaussian(-1.5,1);
    }else{
        x = Gaussian(x,2);
        y = Gaussian(y,2);}
    for i in (0,6]{
        d= compute-distance(i,x,y);
        if(d>radius[i]){
            flag=Bernoulli(0.999);
            if (flag==true){
                obs-dist=radius[i];
            }else{
                obs-dist=radius[i]+0.001*
                trunc-gauss(0,radius[i]));	
           }
        }else{
            obs-dist=d+0.1*trunc-gauss(0,radius[i]);
        }
        obs-dist1 = Gaussian(obs-dist,0.01);
        observe(obs-dist1==...); //evidence numbers
        omitted
    }
    time=time+1;
}
return x
\end{myverbbox}
\end{ssmall}

\begin{ssmall}
\begin{myverbbox}[\numbered]
{\vbeta}
d=uniform(0,2);
r=uniform(0,1);
x=-1;
y=1;
while(|x-y|<1/10){
    x=Gaussian(x,d);
    y=Gaussian(y,r);
    observe(|x-y|<=3);
}
return r;

\end{myverbbox}
\end{ssmall}

\begin{ssmall}
\begin{myverbbox}[\numbered]
{\vgamma}
initialPos=uniform(0,10);
tortoise=initialPos;
hare=0;
n=0;
while(hare<tortoise){
    n=n+1;
    tortoise=tortoise+1
    flag=Bernoulli (2/5);
    if (flag==true){
        hare=hare+Gaussian(4,2);
    }
    observe(|hare-tortoise|<=10);
}
observe((n>=20));
return hare;

\end{myverbbox}
\end{ssmall}

\begin{ssmall}
\begin{myverbbox}[\numbered]
{\valpha}
initialPos=uniform(0,10);
s=100;
f=0;
t=0;
n=0;
while(s>=0 && f<=4 && t<=280){
    t=t+1;
    flag=Bernoulli(0.2);
    if (flag==1){
        f=f+1;	
        n=n+1;
        observe((s<=80));
    }else{
        f=0;
        s=s-1;
    }
}
	return s>0;		
\end{myverbbox}
\end{ssmall}


\begin{ssmall}
\begin{myverbbox}[\numbered]
{\vt}
a0=1;
b0=1;
c0=1;
d0=1;
n=0;	
while((a0==1||b0==1)){
    a1=Bernoulli(0.5);
    b1=Bernoulli(0.5);
    observe(c0==a1 || d0==b1);
    c1=a1;
    d1=b1;
    n=n+1;
}	
return n
\end{myverbbox}
\end{ssmall}

\begin{ssmall}
\begin{myverbbox}[\numbered]	
{\vb}
r=uniform(0,1);
y=0;
n=0;	
while(|y|<1 && (n<=100)){
    y=Gaussian(y,2r);
    n=n+1;
}	
observe(n>=3);
return r;

\end{myverbbox}
\end{ssmall}

\begin{ssmall}
\begin{myverbbox}[\numbered]
{\vg}
p=uniform(0,1);
start=1;
curprobe,established=0;
while(curprobe < 100 && established <=0 && start <= 1){
    if(start = 1){
        flag=Bernoulli(p)
        if (flag==false){
        	established=1;    		
        }
        start=0;
    }else{
        flag=Bernoulli(lambda)
        if (flag==true){
            curprobe := curprobe + 1;
        }else{
            observe(curprobe>=20);
            start=1;
            curprobe=0;
        }
   }
}
return p;
\end{myverbbox}
\end{ssmall}

\begin{ssmall}
\begin{myverbbox}[\numbered]
{\vh}
var=uniform(0,7);
y=1;
prob=0.5;
i=0;
while(i <= 100){
    oldy=y;
    y=Gaussian(oldy,2*var);
    flag=Bernoulli(lambda);
    if(flag){
       observe(|y-oldy|<2)};
    i=i+1;}
return y;

\end{myverbbox}
\end{ssmall}

\begin{figure}[!h]
	\centering
	
	\begin{subfigure}[t]{0.48\textwidth}
		\vtheta
		\caption{Aircraft Tracking (AT).}
	\end{subfigure}
	\hfill
	\begin{subfigure}[t]{0.48\textwidth}
		\vbeta
		\caption{Drunk Man and Mouse (DMM).}
	\end{subfigure}
	
	\vspace{0.5cm}
	
	\begin{subfigure}[t]{0.48\textwidth}
		\vgamma
		\caption{Hare and Tortoise (HT).}
	\end{subfigure}
	\hfill
	\begin{subfigure}[t]{0.48\textwidth}
		\valpha
		\caption{Bounded Retransmission Protocol (BRP).}
	\end{subfigure}
	
	\vspace{0.5cm}
	
	\begin{subfigure}[t]{0.48\textwidth}
		\vt
		\caption{Non-i.i.d. loops (NIID).}
	\end{subfigure}
	\hfill
	\begin{subfigure}[t]{0.48\textwidth}
		\vb
		\caption{Random Walk (RW1).}
	\end{subfigure}
	
	\label{fig:models}
\end{figure}

$$ $$
\begin{figure}[H]
	\ContinuedFloat
	\centering
	
	\begin{subfigure}[t]{0.48\textwidth}
		\vg
		\caption{ZeroConf (ZC.1: $\lambda=0.99$, ZC.2: $\lambda=0.5$).}
	\end{subfigure}
	\hfill
	\begin{subfigure}[t]{0.48\textwidth}
		\vh
		\caption{Random Walk (RW2.1: $\lambda=0.5$, RW2.2: $\lambda=0.9999$).}
	\end{subfigure}
	
	\label{fig:models2}
\end{figure}

\end{document}